\newtheorem{theorem}{Theorem}
\newtheorem{lemma}{Lemma}
\newtheorem{assumption}{Assumption}
\newtheorem{remark}{Remark}
\def\BibTeX{{\rm B\kern-.05em{\sc i\kern-.025em b}\kern-.08em
    T\kern-.1667em\lower.7ex\hbox{E}\kern-.125emX}}
\def\eqref#1{equation~\ref{#1}}
\def\1{\bm{1}}
\DeclareMathAlphabet{\mathsfit}{\encodingdefault}{\sfdefault}{m}{sl}
\SetMathAlphabet{\mathsfit}{bold}{\encodingdefault}{\sfdefault}{bx}{n}
\newcommand{\equa}[1]{
\begin{equation}
\begin{aligned}
	#1
\end{aligned}	
\end{equation}
}
\begin{document}

    \makeatletter
    \newcommand{\linebreakand}{%
      \end{@IEEEauthorhalign}
      \hfill\mbox{}\par
      \mbox{}\hfill\begin{@IEEEauthorhalign}
    }
    \makeatother

 \title{Orchestrating Federated Learning in Space-Air-Ground Integrated Networks: Adaptive Data Offloading and Seamless Handover}

\author{
Dong-Jun~Han,~\IEEEmembership{Member,~IEEE,}
Wenzhi Fang, 
Seyyedali Hosseinalipour,~\IEEEmembership{Member,~IEEE,}\\
Mung~Chiang,~\IEEEmembership{Fellow,~IEEE,}  
Christopher G. Brinton,~\IEEEmembership{Senior Member,~IEEE}   
\thanks{This work was supported in part by
the Defense Advanced Research Projects Agency (DARPA) under Grant
D22AP00168, in part by the National Science Foundation (NSF) under Grant
CNS-2212565, and in part by the Office of Naval Research (ONR) under
Grant N000142112472.

D.-J. Han is with the Department of Computer Science and Engineering, Yonsei University, Seoul, South Korea (email: djh@yonsei.ac.kr).

W. Fang, M. Chiang and C. G. Brinton are with the Elmore Family School of Electrical and Computer Engineering, 
Purdue University, West Lafayette, USA (e-mail: \{fang375, 
chiang,  cgb\}@purdue.edu).

S. Hosseinalipour is with the Department of Electrical Engineering,
University at Buffalo--SUNY, NY, USA (email: alipour@buffalo.edu).
}
} 
\maketitle

 \bstctlcite{IEEEexample:BSTcontrol}

\begin{abstract}  
Devices located in remote regions often lack  coverage from  well-developed terrestrial communication infrastructure.    This not only prevents them from experiencing high quality communication services but also hinders the delivery of machine learning   services in  remote regions. In this paper, we propose a new federated learning (FL) methodology tailored to  space-air-ground integrated networks (SAGINs) to tackle this issue.  Our approach strategically leverages the nodes within space and air layers as both (i) edge computing units and (ii)  model aggregators during the FL process, addressing the challenges that arise from the limited computation powers of ground devices and the absence of terrestrial base stations in the target region. The key idea behind our  methodology is the  adaptive data offloading and handover procedures  that incorporate various network dynamics  in  
 SAGINs, including the mobility,    heterogeneous computation powers, and inconsistent coverage times of incoming satellites.  We  analyze the  latency    of our scheme  and  develop an  adaptive data offloading optimizer, and also  characterize the theoretical convergence  bound of our  proposed algorithm.  Experimental results confirm the advantage of  our SAGIN-assisted FL methodology  in terms of training time  and test accuracy compared with  various baselines. 
\end{abstract} 
\begin{IEEEkeywords}
Federated learning, Space-air-ground integrated networks, LEO satellites, Data offloading and handover
\end{IEEEkeywords}

\section{Introduction}\label{sec:intro}

As the proliferation of edge devices, including mobile phones, smart vehicles, and Internet-of-Things (IoT) sensors, continues to escalate, they generate vast quantities of data at the wireless edge. In response to this surge, federated learning (FL) \cite{mcmahan2017communication, kairouz2021advances, li2020federated_survey} has emerged as a powerful method for harnessing these distributed data sources to train machine learning (ML) models. Over recent years, FL has garnered significant attention and has been rigorously explored across various configurations: from single-server environments~\cite{mcmahan2017communication, wang2019adaptive}, hierarchical structures~\cite{liu2020client, abad2020hierarchical}, to decentralized networks~\cite{roy2019braintorrent, lalitha2018fully}. This body of research, spanning foundational studies to  
implementations, underscores the adaptability and 
 potential of FL in optimizing data-driven insights at the network edge.
 
\vspace{-0.5mm}

 \subsection{Motivation and Key Questions}

  Despite the advances in FL frameworks, they mostly rely heavily on terrestrial communication infrastructures for model aggregation  
   during the training process. This reliance renders most existing FL methods unsuitable for areas lacking terrestrial communication facilities. Specifically, many remote regions of the Earth, such as mountains, forests, deserts, and coastal areas, do not have well-developed base stations, even though they are home to numerous ground devices, such as IoT sensors, that collect valuable data. The data gathered in these locations are essential for the development of intelligent services tailored to a variety of applications: 
   (i) Disaster predictions in coast, mountain, and forest areas that lack a base station. To achieve this, FL over data samples collected from various types of sensor devices in these remote regions is required. (ii) Autonomous vehicle applications in rural regions. Since these regions have different traffic patterns compared to urban areas, FL needs to be conducted over data samples of vehicles in rural regions. (iii) Medical applications, which is one of the key use cases of FL.  Hospitals located in different areas of the world may want to collaboratively train a global model for disease prediction. In such cases, hospitals located in rural regions can take advantage of satellites based on our approach. (iv) Smart agriculture across farms where a well-developed terrestrial base station is unavailable. In this use case, FL should be conducted using data samples collected from different farms.
Decentralized FL methods~\cite{roy2019braintorrent, lalitha2018fully}, although designed to mitigate some of these challenges, encounter significant obstacles in environments where communication links between devices are unreliable or non-existent, as often found in disaster-affected or maritime regions. Consequently, there is a need for an FL methodology that is  specifically   
 tailored to  
 remote areas, ensuring that the distributed data collected in those regions can be leveraged to develop intelligent services.

  Space-air-ground integrated networks (SAGINs) have recently emerged as a groundbreaking solution within the wireless communications community~\cite{liu2018space, ye2020space}, aimed at extending wireless coverage across the globe, particularly in isolated and remote regions. In addition to the terrestrial nodes located at the \textit{ground layer},  
    SAGINs leverage satellites in the \textit{space layer} and air nodes, such as unmanned aerial vehicles (UAVs), in the \textit{air layer}. This multi-layered architecture enables SAGINs to either complement or entirely supplant traditional terrestrial networks in delivering communication services.
Furthermore, SAGINs are not limited to providing mere connectivity; they also have the potential to act as edge computing platforms~\cite{shang2021computing, yu2021ec, liu2022energy}. In particular, they can undertake computation tasks offloaded from terrestrial, resource-constrained devices, such as IoT sensors.  
 The integration of SAGINs thus promises not only to bridge the connectivity gap in underserved areas but also to enhance the computational capabilities at the network edge, opening new avenues for advanced applications and services.

  Inspired by the capabilities of SAGINs, this paper sets out to explore the orchestration of FL within SAGINs to facilitate FL in remote areas. This  brings forth a set of novel challenges that are absent in conventional FL implementations over terrestrial networks. Our investigation is driven by  
   research questions aimed at unlocking the full potential of FL in the context of SAGINs. 
 First, how should we optimally utilize the unique components of SAGINs, including satellites, air nodes, and ground devices, during the FL process?  Second, how should we address the network dynamics in SAGINs (e.g., dealing with the mobility, varying computation capacities, and the inconsistent coverage times provided by satellites) 
 during FL? Third, 
can we theoretically guarantee the  convergence of   FL  despite the inherent challenges of SAGINs, such as variable network conditions and limited connectivity? Despite the  importance of deploying FL in remote regions 
 for intelligent service development, 
these questions have been largely overlooked in existing research.   
 Our goal is thus to fill this gap by providing insights and solutions that enable effective FL over SAGINs.

 \subsection{Main Contributions}
 
In this paper, we propose a   FL methodology that   takes advantage of  both computation and communication resources of space/air/terrestrial nodes in SAGINs to provide intelligent ML services over remote areas. Compared to prior 
FL methods that rely on base stations,     
 our approach strategically leverages the space and air nodes as both (i) edge computing units and (ii) ML model aggregators during the FL process,  to address the challenges arising from the limited computation powers of ground devices and the absence of terrestrial base stations in the target remote region.  Under this   framework, we propose an adaptive approach to \textit{optimize data offloading}  depending on the network dynamics of SAGINs, including the inconsistent computation capabilities and coverage times of   low-earth orbit (LEO)  satellites.  
 Considering the  mobility of LEO satellites, we also propose an \textit{optimized  data/model handover strategy}  where each satellite transmits the trained model and its dataset to the next incoming satellite to ensure a seamless ML model training process. By  incorporating the handover delay into our latency modeling, we  optimize the amount of data being offloaded across the   layers in SAGINs during the FL process.

Overall, our  contributions can be summarized as follows:  
 \begin{itemize}
\item \textbf{New methodology:}   We introduce a new SAGIN-based FL methodology  with adaptive data offloading and handover,  which         facilitates  intelligent ML services  in remote areas  without  the need for terrestrial communication infrastructures.  Our scheme strategically utilizes the space and air nodes as edge computing units and model aggregators, and captures  the key features of SAGINs including mobility of satellites,   time-varying resources and coverage times of incoming satellites, hierarchical architecture, and computation resources of space/air/terrestrial nodes.

\item \textbf{Analysis and optimization:}  We analyze the latency of the proposed algorithm, and propose an optimized   inter-layer data offloading scheme and an intra-layer data handover strategy for the space layer  to minimize the delay. 
This optimization process takes into account the data transmission delay, data processing delay, and model aggregation delay altogether, as well as various network dynamics in SAGINs.  
We also analytically characterize the convergence bound of our algorithm, and show that the model converges to a stationary point for   non-convex loss functions even when adaptive data offloading is applied.

\item \textbf{Simulations under practical modeling:}  We provide extensive experiments using three FL benchmark datasets. To simulate real-world scenarios, we adopt the Walker-Star function to model a satellite constellation and measure the coverage time  of each satellite over the target region. Experimental results   demonstrate  that the proposed methodology can achieve the target accuracy much faster with less training latency compared to various  baselines.     
 
\end{itemize}
To the best of our knowledge,  this is one of the earliest works to successfully integrate FL with adaptive  data offloading/handover optimization across  space-air-ground layers, while accounting for various network dynamics specific to SAGINs.  

\subsection{Related Works}
 \textbf{FL over terrestrial networks:}  FL has been actively studied in   terrestrial networks where the server (e.g., base station) aggregates   the client models in the system. Most of them consider a   single-server setup   
\cite{mcmahan2017communication, wang2019adaptive, yang2019scheduling, amiri2020federated, chen2020joint, 
chen2020convergence} while  some researchers also study multi-server scenarios \cite{liu2020client, abad2020hierarchical, lim2021decentralized, lim2021dynamic, han2021fedmes}.  In  \cite{wang2019matcha, roy2019braintorrent, lalitha2018fully,
koloskova2020unified}, the authors   investigate decentralized FL where each client aggregates the models via device-to-device communications with its adjacent clients, without relying on the server. 
Data offloading strategies are also studied in FL where each client offloads a portion of its local dataset to the server \cite{huang2022wireless, ganguly2023multi, hosseinalipour2023parallel}. However, prior works on FL    are mostly inapplicable in remote regions, where well-developed base stations are not available and   communication links between devices are unstable (e.g., disaster or maritime regions). 
Compared to these works, we   facilitate  FL in remote areas by strategically leveraging  non-terrestrial network elements, specifically SAGINs.

\textbf{FL with UAVs or satellites:} Another line of research has explored FL over UAVs \cite{wang2020learning,zhang2020federated,zeng2020federated} or satellites \cite{matthiesen2023federated, so2022fedspace, Razmi1, Razmi2, Razmi3, Elmahallawy1, zhai2023fedleo, Elmahallawy3}, where either UAVs or satellites collect their own datasets and are considered as  clients.  After the local training procedure at UAVs or satellites, model aggregation and synchronization are conducted relying on the ground base station \cite{wang2020learning, zhang2020federated, so2022fedspace, Razmi1, Razmi2, Razmi3, Elmahallawy3}  or directly at the UAVs/satellites \cite{Elmahallawy1, zhai2023fedleo}. The problem setup of these studies differs from ours as we focus  on FL over data samples collected at ground devices located in remote regions. This necessitates interaction between ground devices and nodes in the space/air layers not only for model aggregation (to address the lack of base stations in remote regions) but also for computation offloading (to tackle the limited computation capabilities of ground devices).

 Some previous works \cite{rodrigues2023hybrid,  chen2022satellite, wang2022federated, fang2023olive}  have focused on the  setting  where  ground devices collect data and conduct FL  assisted by the UAVs/satellites, similar to   our problem setup.         Specifically in \cite{wang2022federated},  the  satellite  aggregates the models of ground devices via over-the-air aggregation, without requiring any base stations.  The authors of      \cite{rodrigues2023hybrid}  focus on solving the maze problem using the deep Q network assisted by the satellites. In \cite{chen2022satellite, han2024cooperative}, data offloading has been studied for satellite-assisted FL by considering only the space layer. While these works do not consider SAGINs,  a recent work  \cite{fang2023olive} specifically studied FL considering space-air-ground layers.  
However, the nodes in space and air layers are only used as model aggregators, not as edge computing units. Compared to all prior works, the   contribution of this work is to adaptively optimize data offloading     across different layers and handover within the space layer, while taking into account the  network dynamics specific to SAGINs (e.g.,  heterogeneous coverage time and resource availability of current/incoming satellites)  during FL.

\begin{figure*}[t]
  \centering
 \begin{subfigure}[b]{0.45\textwidth}
         \centering
         \includegraphics[width=\textwidth]{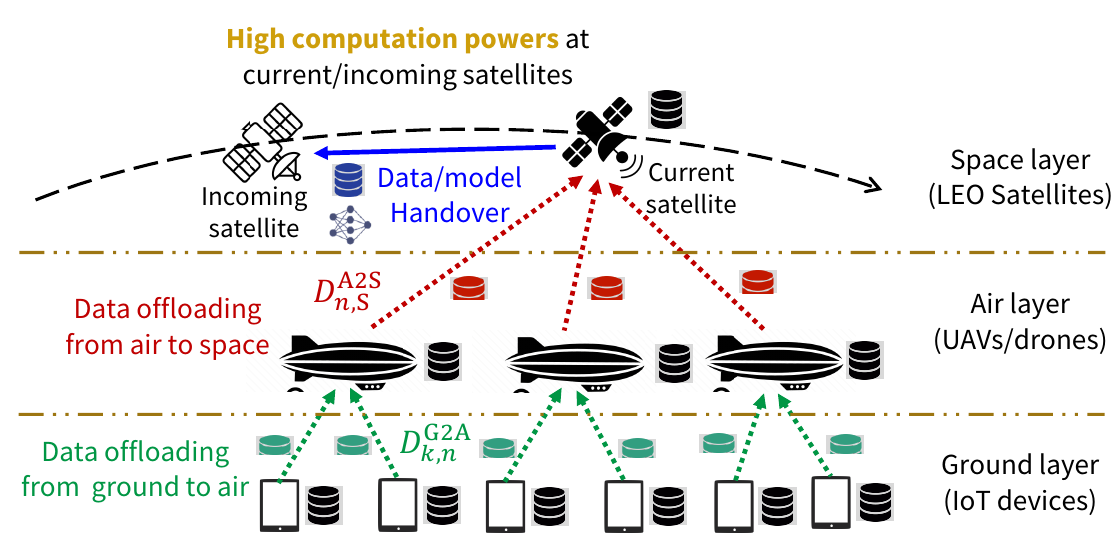} 
     \caption{High computation powers at current/incoming satellites.}\label{fig:overview_a}
 \end{subfigure}  \hspace{5mm} 
   \begin{subfigure}[b]{0.44\textwidth}
         \centering
         \includegraphics[width=\textwidth]{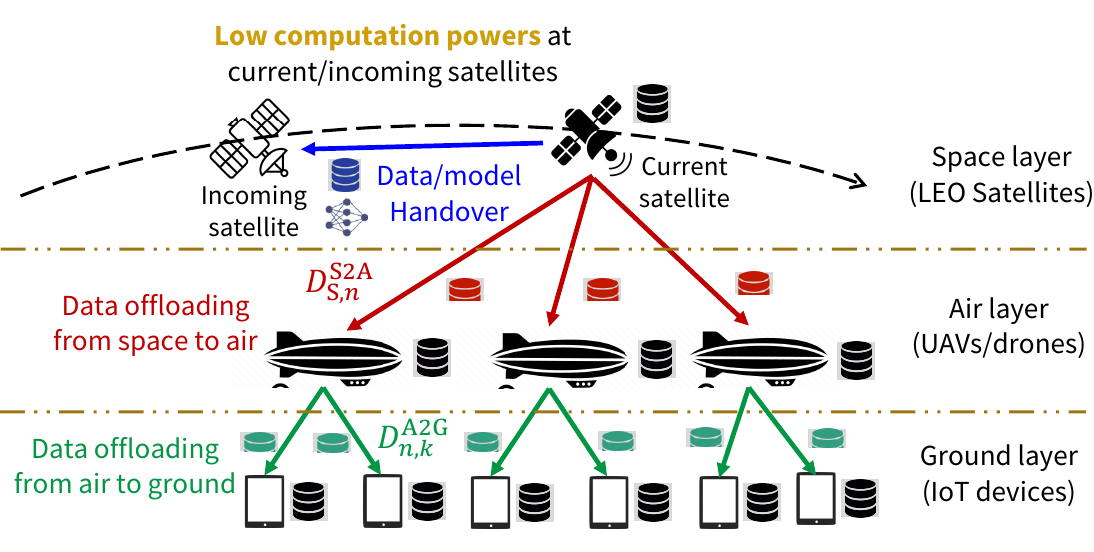} 
     \caption{Low computation powers at current/incoming satellites.}\label{fig:overview_b}
 \end{subfigure}
     \caption{Overview of adaptive data offloading/handover during FL over SAGINs, depending on the current resource availability.}
    \label{fig:overview}
    \vspace{-3mm}
  \end{figure*}

\textbf{Space-air-ground integrated networks (SAGINs):} Motivated by the potential  for providing wide wireless coverage across the Earth,   SAGINs \cite{liu2018space} have been actively studied  in the literature.   Outage analysis is conducted for SAGINs  in \cite{ye2020space}, while network control  methodologies for SAGINs  are considered  in \cite{kato2019optimizing, tang2022federated}.  In  \cite{shang2021computing, yu2021ec, liu2022energy, paul2023digital}, the authors  focused on edge computing in SAGINs, where ground devices offload their computation tasks to  the   space and air layers. 
 Compared to existing studies on SAGINs, the unique position of this work  lies in the integration of  distributed/federated ML, SAGINs, and adaptive data offloading/handover.  
 Beyond enhancing wireless coverage,  we provide additional guidelines for  intelligent ML services   in remote areas with the assistance of SAGINs.

The rest of the paper is organized as follows. We    describe the problem setup 
 in Section \ref{sec:system},  followed by an overview of    the methodology in Section \ref{sec:main}.  In Section \ref{sec:data_off}, we analyze the latency of our scheme and optimize     data offloading. Theoretical convergence results are provided in Section \ref{sec:convergence},  and numerical results are presented in Section \ref{sec:experiments}.
 Finally, we draw conclusion and future directions in Section \ref{sec:conclusion}.

\section{Problem Setup}\label{sec:system}
 
We consider a SAGIN that is composed of  space, air, and ground layers. We let $\mathcal{G}$ be the   set that consists of $K$  terrestrial devices located at a specific target region that lacks a base station.    
We denote $D_k=D_k^l\cup D_k^o$ as the local dataset of  device $k$ with $D_k^l\cap D_k^o=\emptyset$, where $D_k^l$ is the privacy-sensitive dataset that should be kept locally at each device, while $D_k^o$ consists of non-sensitive samples that can be offloaded to other nodes. We define $\alpha_k=|D_k^o|/|D_k|$ as the portion of non-sensitive data samples at ground device $k$, where  $|D|$ represents the number of  samples in dataset $D$.  This problem setting covers various applications, including (i) autonomous vehicles or mobile phones that  collect data with both non-sensitive  classes (e.g., traffic lights, trees) and sensitive classes (e.g., humans),   
(ii)   hospitals with data of   patients  who have  agreed   with the privacy policy and who have not agreed with it, (iii) sensor devices for  disaster predictions in coastal regions that mostly collect  non-sensitive samples.

In the air layer, we consider a set $\mathcal{A}$ with  $N$ air nodes (e.g., UAVs)  covering  the  target region. Each air node $n$ is associated with the device set 
$\mathcal{G}_n$, where $\mathcal{G} = \cup_{n=1}^N\mathcal{G}_n$ holds with $\mathcal{G}_{n_1}\cap \mathcal{G}_{n_2}=\emptyset$ if $n_1\neq n_2$.  In the space layer, we consider LEO satellites that are moving according to their own orbits. Each ground device can directly communicate with the corresponding air node in the air layer, while each air node can communicate with the satellite that is covering the target region. Fig. \ref{fig:overview} illustrates the overview of our system model.

The goal is to  train a shared global model $\mathbf{w}^*$ tailored to the datasets collected at  ground devices in $\mathcal{G}$. Specifically, we aim to  minimize the following objective function: 
\begin{equation}\label{original_loss}
F(\mathbf{w}) = \sum_{k=1}^K\lambda_kF_k(\mathbf{w}),
\end{equation}
where $\lambda_k=\frac{|D_k|}{\sum_{j\in \mathcal{G}}|D_j|}$ is the relative dataset size of device $k$.  $F_k(\mathbf{w})$ is the local loss function of device $k$ defined as  
 $F_k(\mathbf{w})=\frac{1}{|D_k|}\sum_{x\in D_k}\ell(x; \mathbf{w})$,  
where $\ell(x; \mathbf{w})$ is the loss (e.g., cross-entropy loss) obtained with data sample $x$ and model $\mathbf{w}$.

There are several key challenges in achieving the above goal in remote areas. First, it is difficult to aggregate the  trained models within such regions without a base station. Secondly, the terrestrial devices (e.g., IoT sensors) are often equipped with low computation capabilities,  significantly slowing down the training process.  In this work, we use space and air nodes as model aggregators to solve the first challenge, and also use them as edge computing units to process data samples offloaded from the ground layer, to tackle the second challenge. 
 
\section{Methodology Overview}\label{sec:main}
In this section, we provide an overview of our methodology that achieves the aforementioned objectives 
in SAGINs.  The proposed algorithm consists of $R$ global rounds indexed by $r=0,1,\dots, R-1$. In the following, we focus on a specific round $r$ to describe the process of our scheme.  

\subsection{Adaptive Inter-Layer Data  Offloading}\label{subsec_algo:data_offload}

  Let $D_{\textsf{G}, k}^{(r)}$, $D_{\textsf{A}, n}^{(r)}$, and $D_{\textsf{S}}^{(r)}$ denote the  local datasets  at  node $k\in \mathcal{G}$ in the ground  layer, node $n\in \mathcal{A}$ in the air layer, and the  satellite  that is currently serving the targeting region,  respectively,  at the beginning of round $r$. Note that we have
\begin{equation}
D_{\textsf{G}, k}^{(0)} = D_k, \ \forall k \in \mathcal{G}, \ 
D_{\textsf{A}, n}^{(0)} = \emptyset, \ \forall n \in \mathcal{A}, \ 
D_{\textsf{S}}^{(0)} =  \emptyset
\end{equation}
for $r=0$ since data samples are generated at the ground devices.

Depending on various system environments, 
inter-layer data offloading   is  first performed across the   network to obtain the updated datasets $D_{\textsf{G}, k}^{(r+1)}$, $D_{\textsf{A}, n}^{(r+1)}$, and $D_{\textsf{S}}^{(r+1)}$ at the nodes in each layer. Fig. \ref{fig:overview} illustrates example scenarios of  adaptive data offloading depending on the computation capabilities of the satellites. Intuitively, if the current/incoming satellites have relatively high computation powers, more data samples can be offloaded to the space layer. Otherwise, data samples should be transmitted from the space layer to other layers for load balancing.   The data offloading solution is also affected by the coverage times of the satellites over the target region.   

We describe the detailed optimization procedure for  our adaptive data offloading strategy  later   
in Section \ref{sec:data_off}, as it is built upon the analysis  provided in the following subsections.

\subsection{Local Training at Ground and Air Layers}\label{subsec_algo:local_update}
Based on the updated datasets  $D_{\textsf{G}, k}^{(r+1)}$, $D_{\textsf{A}, n}^{(r+1)}$, and $D_{\textsf{S}}^{(r+1)}$  obtained from Section \ref{subsec_algo:data_offload}, the  nodes in the system conduct local model updates. We  first describe the local training process at the ground and air layers. At the beginning of global round $r$, all nodes in the system have the synchronized model represented by  $\mathbf{w}^{(r)}$. Starting from the initial  model  $\mathbf{w}_{\textsf{G}, k}^{(r, 0)} = \mathbf{w}_{\textsf{A}, n}^{(r, 0)} =\mathbf{w}^{(r)}$, each ground device $k$ and air node $n$ updates its model for $H$ local iterations according to
\begin{equation} \label{eq:local__GG}
\hspace{-5mm}
\resizebox{0.46\textwidth}{!}{  $\mathbf{w}_{\textsf{G}, k}^{(r, h+1)} = \mathbf{w}_{\textsf{G}, k}^{(r, h)} - \eta_{\textsf{G}, k}^{(r)}\tilde{\nabla}\ell_{\textsf{G}, k}^{(r+1)}(\mathbf{w}_{\textsf{G}, k}^{(r,h)}), h=0,\ldots H\!-\!1,$}
\hspace{-4mm}
\end{equation}
\begin{equation}\label{eq:local__AA}
\hspace{-4mm}
\resizebox{0.46\textwidth}{!}{  $
\mathbf{w}_{\textsf{A}, n}^{(r,h+1)} = \mathbf{w}_{\textsf{A}, n}^{(r,h)} - \eta_{\textsf{A}, n}^{(r)}\tilde{\nabla}\ell_{\textsf{A}, n}^{(r+1)}(\mathbf{w}_{\textsf{A}, n}^{(r,h)}), h=0,\ldots H\!-\!1,
$}\hspace{-3mm}
\end{equation}
where $\mathbf{w}_{\textsf{G}, k}^{(r,h)}$ and $\mathbf{w}_{\textsf{A}, n}^{(r,h)}$
are the models after $h$  local iterations at  global round $r$, $\ell_{\textsf{G}, k}^{(r+1)}(\cdot) = \frac{1}{|D_{\textsf{G}, k}^{(r+1)}|}\sum_{x\in D_{\textsf{G}, k}^{(r+1)}} \ell(x;\cdot)$ and  $\ell_{\textsf{A}, n}^{(r+1)}(\cdot) = \frac{1}{|D_{\textsf{A}, n}^{(r+1)}|}\sum_{x\in D_{\textsf{A}, n}^{(r+1)}} \ell(x;\cdot)$ are the local loss functions defined at the corresponding nodes. Also, $\tilde{\nabla}\ell_{\textsf{G}, k}^{(r+1)}(\cdot)$ and  $\tilde{\nabla}\ell_{\textsf{A}, n}^{(r+1)}(\cdot)$ denote the computed mini-batch gradients, where the size  of the mini-batch can be  set based on the size of the local dataset.   
 $\eta_{\textsf{G}, k}^{(r)}$ and $ \eta_{\textsf{A}, n}^{(r)}$ represent the learning rates at ground device $k$ and air node $n$, respectively.

 The required local computation times (in seconds) at ground device $k$ and air node $n$  
for model updates are expressed as 
\begin{equation}\label{eq:Ground_node_local}
\tau_{\textsf{G},k}^{\textsf{local}, (r)} = \frac{m_{\textsf{G},k}|D_{\textsf{G},k}^{(r+1)}|}{f_{\textsf{G},k}}, \ \ \tau_{\textsf{A},n}^{\textsf{local}, (r)} = \frac{m_{\textsf{A},n}|D_{\textsf{A},n}^{(r+1)}|}{f_{\textsf{A},n}},  
\end{equation}
respectively, where $f_{\textsf{G},k}$, $f_{\textsf{A},n}$ are the CPU frequencies (in cycles/sec)  and $m_{\textsf{G},k}$, $m_{\textsf{A},n}$ are the numbers of required CPU cycles to update the model with one data sample (in cycles/sample) at the corresponding nodes.

\begin{figure*}[t]
  \centering
         \includegraphics[width=0.93\textwidth]{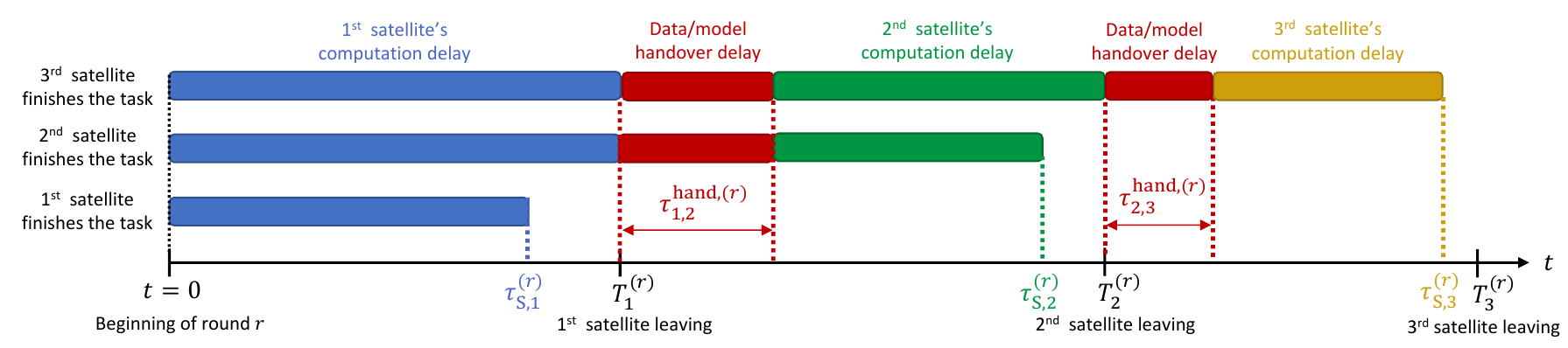} 
              \vspace{-1mm}
      \caption{Illustration of  model training and intra-layer data/model handover procedures at the space layer. If the current satellite is not able to complete the task within its coverage time over the target region, the next  incoming satellite continues local training after receiving the  dataset and the   model from the previous satellite to ensure a seamless FL process.}\label{fig:coverage_times}
   \end{figure*}

 \subsection{Satellite-Side  Training and Data/Model Handover}\label{subsec_algo:local_update_satellite}

In parallel with the local training process at the air/ground layers, the current satellite  also updates the model using dataset $D_{\textsf{S}}^{(r+1)}$.  Starting from $\mathbf{w}_{\textsf{S}}^{(r, 0)} =\mathbf{w}^{(r)}$, the model update process at the satellite can be written as follows:
 \begin{equation}\label{eq:update_sat}
\hspace{-5mm}
\resizebox{0.46\textwidth}{!}{  $\mathbf{w}_{\textsf{S}}^{(r, h+1)} = \mathbf{w}_{\textsf{S}}^{(r, h)} - \eta_{\textsf{S}}^{(r)}\tilde{\nabla}\ell_{\textsf{S}}^{(r+1)}(\mathbf{w}_{\textsf{S}}^{(r,h)} ),  h=0,\ldots H\!-\!1,$}
\hspace{-3mm}
\end{equation}
where $\tilde{\nabla}\ell_{\textsf{S}}^{(r+1)}(\cdot)$ and $\eta_{\textsf{S}}^{(r)}$ are the satellite-side  stochastic mini-batch gradient and learning rate, respectively. The size of  the mini-batch is set to   $|D_{\textsf{S}}^{(r+1)}|/H$ so that   all data samples in  $D_{\textsf{S}}^{(r+1)}$ can be processed in $H$ iterations.

\textbf{Data/model handover:} 
 In satellite networks, satellites are perceived as non-stationary units, where at each snapshot of the network  each LEO satellite covers a different region
compared to other LEO satellites and may have its own specific task tailored to its coverage area (e.g., edge computing, FL, or communication services). In our setup, the current satellite that is covering the target area is responsible for conducting FL over that region. However, a key challenge is that each satellite has a limited coverage time over the target region due to the mobility.  This motivates us to consider an intra-layer data/model handover strategy within the space layer, to ensure a seamless FL process. Specifically, the current satellite transmits the updated model and dataset to the  incoming satellite before leaving the target region, so that this new satellite can continue  model  training in the space layer using  dataset $D_{\textsf{S}}^{(r+1)}$ during its coverage period over the target region.  
These local training and  handover  steps are repeated  
 until all data samples in $D_{\textsf{S}}^{(r+1)}$ are processed, based on  a series of incoming satellites that will cover the target region.

The  handover delay between the $i$-th and $(i+1)$-th satellites at global round $r$ can be written as follows:
\begin{equation}
\tau_{i,i+1}^{\textsf{hand}, (r)} = \frac{Q(\mathbf{w}) + q|D_{\textsf{S}}^{(r+1)}|}{Z_{i, i+1}^{\textsf{ISL},(r)}},
\end{equation}
where  $Q(\mathbf{w})$ is the model size (in bits),  $q$ is the size of each data sample (in bits) and $Z_{i, i+1}^{\textsf{ISL}}$ is the transmission rate for  inter-satellite link (ISL) communications between $i$-th and $(i+1)$-th satellites. Referring to \cite{leyva2021inter, Razmi1}, we have $Z_{i, i+1}^{\textsf{ISL},(r)} = B\log_2(1+\frac{p_{\textsf{S}, i}^{(r)}A_i^{\text{Tx}}A_{i+1}^{\text{Rx}}}{C_{i, i+1}N_0})$, where $B$ is the bandwidth, $p_{\textsf{S}, i}^{(r)}$ is the transmit power of the $i$-th satellite, $A_{i}^{\text{Tx}}$ and $A_{i+1}^{\text{Rx}}$  are  the Tx and Rx gains of antenna,  $C_{i, i+1}$ is the free space path loss between satellites, $N_0$ is the noise power density.

\textbf{Latency at the space layer:} Based on the above data/model handover strategy, we now characterize the  training latency at the space layer. Let   $f_{\textsf{S},i}^{(r)}$ represent the CPU frequency of the $i$-th satellite covering the target region at global round $r$. We also denote $m_{\textsf{S},i}^{(r)}$ as the number of CPU cycles required to process one data sample at the $i$-th satellite at round $r$.  Moreover, let $T_i^{(r)}$ denote the delay until the $i$-th   satellite  leaves  the coverage of the target region,  measured from the moment when global round $r$ has started. Trivially, for the satellites that do not join or leave the region in round $r$, $T_i^{(r)}$ becomes infinity.

To gain insights, we start  with some examples illustrated in Fig. \ref{fig:coverage_times}. Suppose that the first satellite  is able to process   the whole dataset $D_{\textsf{S}}^{(r+1)}$ within the time duration $T_1^{(r)}$. Then, the local computation delay  $\tau_{\textsf{S}, 1}^{(r)}$ (in seconds) at the space layer  
 can be written as follows:
\begin{equation}\label{eq:S__1}
\tau_{\textsf{S}, 1}^{(r)} = {m_{\textsf{S},1}^{(r)}|D_{\textsf{S}}^{(r+1)}|}\big/{f_{\textsf{S},1}^{(r)}}. 
\end{equation}

  However, if $ \tau_{\textsf{S}, 1}^{(r)} > T_1^{(r)}$, indicating that the first satellite is unable to  complete the computation before leaving the target region,   data/model handover from the first satellite to the second satellite  is conducted.   Note that the number of data samples that can be processed at the first satellite within time duration  $T_1^{(r)}$ is $(f_{\textsf{S},1}^{(r)}/m_{\textsf{S},1}^{(r)})T_1^{(r)}$. 
 Hence,  the amount of data samples that should be processed at the satellites other than the first one becomes $|D_{\textsf{S}}^{(r+1)}| - (f_{\textsf{S},1}^{(r)}/m_{\textsf{S},1}^{(r)})T_1^{(r)}$. Now suppose that the second satellite can process all $|D_{\textsf{S}}^{(r+1)}| - (f_{\textsf{S},1}^{(r)}/m_{\textsf{S},1}^{(r)})T_1^{(r)}$ data samples before leaving the target region. Then the computation time at the second satellite to finish local training can be expressed as $ m_{\textsf{S},2}^{(r)}(|D_{\textsf{S}}^{(r+1)}| - \frac{f_{\textsf{S},1}^{(r)}}{m_{\textsf{S},1}^{(r)}}T_1^{(r)})/f_{\textsf{S},2}^{(r)}$. This leads to the following latency result:
     \begin{align}\label{eq:S__2}
              \tau_{\textsf{S}, 2}^{(r)} = T_1^{(r)}+  \tau_{1,2}^{\textsf{hand}, (r)}  +  \frac{m_{\textsf{S},2}^{(r)}(|D_{\textsf{S}}^{(r+1)}| - \frac{f_{\textsf{S},1}^{(r)}}{m_{\textsf{S},1}^{(r)}}T_1^{(r)})}{f_{\textsf{S},2}^{(r)}}.
             \end{align}
The result in (\ref{eq:S__2}) incorporates the computation time at the first satellite, i.e., $T_1^{(r)}$, the  handover delay, i.e., $\tau_{1,2}^{\textsf{hand}, (r)}$, and the computation time at the second satellite, i.e., the last term.

However, if $ \tau_{\textsf{S}, 2}^{(r)} > T_2^{(r)}$, the local training cannot be completed before the second satellite leaves the  target area. In this case, the third satellite processes the remaining data after receiving the information from the second satellite via ISL communication. 
  Overall, we obtain the following result:
\begin{align}\label{eq:local_satellite_time}
 \tau_{\textsf{S}}^{(r)} = 
\begin{dcases}  
\tau_{\textsf{S}, 1}^{(r)},  &    \text{if $\tau_{\textsf{S}, 1}^{(r)}< T_1^{(r)}$ ($1^{\text{st}}$ satellite finishes the task)}\\
\tau_{\textsf{S},2}^{(r)},  &  \text{if $\tau_{\textsf{S}, 2}^{(r)}< T_2^{(r)}$ ($2^{\text{nd}}$ satellite  finishes the task)}\\
\tau_{\textsf{S}, 3}^{(r)},  &   \text{if $\tau_{\textsf{S}, 3}^{(r)}< T_3^{(r)}$ ($3^{\text{rd}}$ satellite  finishes the task)}\\
&\vdotswithin{}
\end{dcases}
\end{align}
where  $\tau_{\textsf{S}, 1}^{(r)}$ and $\tau_{\textsf{S}, 2}^{(r)}$ are defined in (\ref{eq:S__1}) and (\ref{eq:S__2}) while $\tau_{\textsf{S}, 3}^{(r)}$ is written as follows:

{\small
\begin{align}\label{qpqpqpqqpqp}
&\tau_{\textsf{S}, 3}^{(r)}=T_2^{(r)}+ \tau_{2,3}^{\textsf{hand}, (r)} \\  &+ \frac{m_{\textsf{S},3}^{(r)}\Big(|D_{\textsf{S}}^{(r+1)}| - \frac{f_{\textsf{S},1}^{(r)}}{m_{\textsf{S},1}^{(r)}}T_1^{(r)}   - \frac{f_{\textsf{S},2}^{(r)}}{m_{\textsf{S},2}^{(r)}}(T_2^{(r)} - T_1^{(r)} - \tau_{1,2}^{\textsf{hand}, (r)}) \Big)}{f_{\textsf{S},3}^{(r)}}.\nonumber
\end{align}}
As illustrated in Fig. \ref{fig:coverage_times}, the term $T_2^{(r)}$ in (\ref{qpqpqpqqpqp}) captures the delay until the second satellite leaves the target region, $\tau_{2,3}^{\textsf{hand}, (r)}$ is the handover delay, and the last term is the delay for the third satellite to complete the remaining tasks. For an arbitrary $i\geq 2$, we can generalize the result as follows:
  \begin{align}
\tau_{\textsf{S}, i}^{(r)}=T_{i-1}^{(r)}+ \tau_{i-1,i}^{\textsf{hand}, (r)} + \frac{m_{\textsf{S},i}^{(r)}(|D_{\textsf{S}}^{(r+1)}| - \Omega_i^{(r)})}{f_{\textsf{S},i}^{(r)}},
\end{align}
where $\Omega_i^{(r)}$ is the amount of data samples processed  prior to the $i$-th satellite at round $r$. Fig. \ref{fig:coverage_times} summarizes the idea of the repeated local training and data/model handover processes at the space layer.

 \subsection{Model Aggregation}\label{subsec_algo:model_agg}

After  local updates are completed according to Sections  \ref{subsec_algo:local_update} and \ref{subsec_algo:local_update_satellite},  model aggregation is conducted to obtain a new  global model. Specifically, each air node $n$ aggregates the models $\{\mathbf{w}_{\textsf{G},k}^{(r+1)}\}_{k\in\mathcal{G}_n}$ sent from the ground devices in its coverage and the model $\mathbf{w}_{\textsf{A},n}^{(r+1)}$ trained by its own. Then,  each air node $n$ sends the aggregated model to the current satellite for global aggregation. The final global model becomes

 {\small
\begin{equation} 
\mathbf{w}^{(r+1)} =  \sum_{k\in \mathcal{G}}\lambda_{\textsf{G},k}^{(r+1)}\mathbf{w}_{\textsf{G}, k}^{(r, H)} + \sum_{n\in \mathcal{A}}\lambda_{\textsf{A},n}^{(r+1)}\mathbf{w}_{\textsf{A}, n}^{(r, H)}+ \lambda_{\textsf{S}}^{(r+1)} \mathbf{w}_{\textsf{S}}^{(r, H)},
\end{equation}}
where $\lambda_{\textsf{G},k}^{(r+1)} = \frac{|D_{\textsf{G},k}^{(r+1)}|}{\sum_{j\in \mathcal{G}}|D_j|}$, $\lambda_{\textsf{A},n}^{(r+1)} = \frac{|D_{\textsf{A},n}^{(r+1)}|}{\sum_{j\in \mathcal{G}}|D_j|}$,  and $\lambda_{\textsf{S}}^{(r+1)} = \frac{|D_{\textsf{S}}^{(r+1)}|}{\sum_{j\in \mathcal{G}}|D_j|}$ are the portions of data samples  at each node.

The  delay for uploading the model  from ground  device $k$ to air node $n$ 
can be written as  follows:
\begin{equation}\label{eq:model_trabnsmit}
\tau_{k,n}^{\textsf{\textsf{G2A}},  (r)} = \frac{Q(\mathbf{w})}{Z_{k,n}^{\textsf{G2A},(r)}},
\end{equation}
where $Z_{k,n}^{\textsf{G2A},(r)}$ is the uplink communication rate between ground device $k$ and air node $n$  expressed as\footnote{In scenarios where instantaneous channel is available via feedback, the latency can be written without the expectation.}
\begin{equation}
Z_{k,n}^{\textsf{G2A},(r)}=\mathbb{E}\Big [b_{k,n}^{(r)}\log_2(1+\frac{p_{\textsf{G}, k}|h_{k,n}^{(r)}|^2}{b_{k,n}^{(r)}N_0})\Big ].
\end{equation}
Here, $p_{\textsf{G}, k}$ is the transmit power of ground device $k$, $b_{k,n}^{(r)}$ is the bandwidth, and $h_{k,n}^{(r)} = \beta_0/(d_{k,n}^{(r)})^{\gamma^{\textsf{G2A}}}g$ is the channel between device $k$ and air node $n$, which is defined with the distance $d_{k,n}^{(r)}$, pathloss exponent between ground and air $\gamma^{\textsf{G2A}}$,     channel gain  $\beta_0$ at the reference distance
of 1 meter, and Rayleigh fading parameter $g$. Similarly, we can also define the model upload delay from air node $n$ to the current satellite, i.e.,  $\tau_{n,\textsf{S}}^{\textsf{A2S}, (r)}$,  based on the communication rate $Z_{n,\textsf{S}}^{\textsf{A2S},(r)}$ between air node $n$ and the current satellite covering the target region\footnote{Following \cite{pang2021irs, guo2022distributed,  callegaro2021optimal}, Rayleigh fading can be adopted between the ground device and the air node, considering obstacles in remote areas such as forests and mountainous regions.  In scenarios where the line-of-sight link is dominant, we can use the free-path space loss model by setting $h_{k,n}^{(r)} = \beta_0/(d_{k,n}^{(r)})^{2}$ as in \cite{fu2023federated, wu2018joint}}.

\section{Adaptive Data Offloading Optimization} \label{sec:data_off}
 
In this section, we provide details for our    data offloading step outlined in Section \ref{subsec_algo:data_offload}. This process aims to construct    $\{D_{\textsf{G}, k}^{(r+1)}\}_{k\in\mathcal{G}}$, $\{D_{\textsf{A}, n}^{(r+1)}\}_{n\in\mathcal{A}}$, and  $D_{\textsf{S}}^{(r+1)}$ from $\{D_{\textsf{G}, k}^{(r)}\}_{k\in\mathcal{G}}$, $\{D_{\textsf{A}, n}^{(r)}\}_{n\in\mathcal{A}}$,  and $D_{\textsf{S}}^{(r)}$, at the beginning of global round $r$.     
    \vspace{-1mm}

\subsection{Characterization of Data Transmission Direction}

\textbf{Latency without data offloading:} The first step of our approach is to characterize the direction of data transmission. We start  by  deriving  the latency   without data offloading, to see which layer causes more delay. When data offloading is not considered, the overall delay at round $r$ can be written as 
\begin{equation}\label{Lat_wo_offload1}
\tau^{(r)}= \max\Big\{\tau_{S}^{(r)}, \max_{n\in \mathcal{A}}\{\tau_{\textsf{A},n}^{(r)}+\tau_{n,\textsf{S}}^{\textsf{A2S},(r)}\}  \Big\},
\end{equation}
where $\tau_{S}^{(r)}$ is the completion time at the space layer defined in (\ref{eq:local_satellite_time}) and  $\tau_{n,\textsf{S}}^{\textsf{A2S},(r)}$ is the model transmission delay  from   air node $n$ to the current satellite, similar to (\ref{eq:model_trabnsmit}). 
 $\tau_{\textsf{A},n}^{(r)}$  is  the delay until air node $n$ aggregates its own updated model with the models sent from the   devices in its coverage area $\mathcal{G}_n$: 
 \begin{equation}\label{Lat_wo_offload2}
\tau_{\textsf{A},n}^{ (r)} = \max\Big\{\tau_{\textsf{A},n}^{\textsf{local}, (r)}, \max_{k\in \mathcal{G}_n}\{\tau_{\textsf{G},k}^{\textsf{local}, (r)}+ \tau_{k,n}^{\textsf{G2A}, (r)} \}\Big\}.
\end{equation}  
Here, $\tau_{\textsf{A},n}^{\textsf{local}, (r)}$ and $\tau_{\textsf{G},k}^{\textsf{local}, (r)}$  are the local computation times at air node $n$   and ground device $k$, respectively, as described in (\ref{eq:Ground_node_local}).
Here, we note that all notations in (\ref{Lat_wo_offload1}) and (\ref{Lat_wo_offload2}) are defined with the datasets  before data offloading, i.e., $\{D_{\textsf{G}, k}^{(r)}\}_{k\in\mathcal{G}}$, $\{D_{\textsf{A}, n}^{(r)}\}_{n\in\mathcal{A}}$, $D_{\textsf{S}}^{(r)}$, to characterize the data offloading direction.

\textbf{Data transmission scenarios:} Our adaptive data offloading method is motivated by the dynamic nature   of SAGINs, including the computation   capabilities as well as the coverage times of current/incoming satellites.  
We consider two different scenarios depending on the direction of data transmission.

\textit{\underline{(i) Case I:}}   
$\tau_{S}^{(r)} > \max_{n\in \mathcal{A}}\{\tau_{\textsf{A},n}^{(r)}+\tau_{n,\textsf{S}}^{\textsf{A2S},(r)}\}$\textit{(Offloading from space to air/ground)}\textbf{.} 
 Case I  considers the scenario where   the current and the next few incoming satellites have relatively  low computation/communication capabilities. In this case, we allow 
data samples  to be  transmitted from  the space layer to air/ground layers for load balancing.

\textit{\underline{(ii) Case II:}}  
$\tau_{S}^{(r)}<\max_{n\in \mathcal{A}}\{\tau_{\textsf{A},n}^{(r)}+\tau_{n,\textsf{S}}^{\textsf{A2S},(r)}\}$ \textit{(Offloading from   air/ground to space)}\textbf{.} 
In this case, the current/incoming satellites have relatively large computation powers. Hence, we propose data transmission   from air/ground layers to the space layer for load balancing.

\textbf{Objective:} Our objective is to adaptively optimize  data offloading across space-air-ground layers   to minimize the  latency. 
By incorporating the data offloading delay, we can rewrite the overall latency in (\ref{Lat_wo_offload1}) into the following form:
\begin{equation}\label{Lat_with_lllof}
\bar{\tau}^{(r)}:= \max\Big\{\bar{\tau}_{S}^{(r)}, \max_{n\in \mathcal{A}}\{\bar{\tau}_{\textsf{A},n}^{(r)}+\tau_{n,\textsf{S}}^{\textsf{A2S},(r)}\}  \Big\}.
\end{equation}
In (\ref{Lat_with_lllof}),  $\bar{\tau}_{S}^{(r)}$ is the  new delay at the space layer and
\begin{equation}\label{eq_sdf}
\bar{\tau}_{\textsf{A},n}^{(r)} := \max\Big\{\bar{\tau}_{\textsf{A},n}^{\textsf{local},(r)}, \max_{k\in\mathcal{G}_n}\{\bar{\tau}_{\textsf{G},k}^{\textsf{local}, (r)} + \tau_{k,n}^{\textsf{G2A}, (r)}\} \Big\}
\end{equation}
is the  new completion time at air node $n$ until   all the models in its coverage   are aggregated, considering data offloading. $\bar{\tau}_{\textsf{A},n}^{\textsf{local},(r)}$ and $\bar{\tau}_{\textsf{G},k}^{\textsf{local}}$ are the updated delays to finish local training at air node $n$ and ground device $k$, respectively, under this data offloading framework.

In the following, we will characterize   the new delays $\bar{\tau}_{S}^{(r)}$, $\bar{\tau}_{\textsf{A},n}^{\textsf{local},(r)}$, and $\bar{\tau}_{\textsf{G},k}^{\textsf{local}}$  in (\ref{Lat_with_lllof}) and (\ref{eq_sdf}) by considering data offloading.  Then, we will   optimize the amount of data being offloaded across the   layers to minimize  $\bar{\tau}^{(r)}$.

\subsection{Case I:    Data Offloading  from  Space  to Air/Ground}
We first consider  Case I.
Let $D_{\textsf{S},n}^{\textsf{S2A}, (r)}$ be the dataset sent from the space layer to air node $n$ in the air layer. 

 \textbf{Dataset  and latency characterization at the space layer:} Then the updated dataset  $D_{\textsf{S}}^{ (r+1)}$ at the space layer after data offloading satisfies the following criterion:
\begin{equation}\label{eq:sate_dataset_updated}
|D_{\textsf{S}}^{ (r+1)}| = |D_{\textsf{S}}^{ (r)}| - \sum_{n\in\mathcal{A}}|D_{\textsf{S},n}^{\textsf{S2A}, (r)}|.
\end{equation}
Accordingly, we can obtain the updated satellite-side delay $\bar{\tau}_{\textsf{S}}^{(r)}$    by inserting $|D_{\textsf{S}}^{ (r+1)}| = |D_{\textsf{S}}^{ (r)}| - \sum_{n\in\mathcal{A}}|D_{\textsf{S},n}^{\textsf{S2A}, (r)}|$ to (\ref{eq:local_satellite_time}). In   (\ref{eq:sate_dataset_updated}), $\{|D_{\textsf{S},n}^{\textsf{S2A}, (r)}|\}_{n\in\mathcal{A}}$ is the set of   parameters that we would like to optimize.  We also aim to optimize the load balancing between air and ground layers. To achieve this, we will first 
study the load balancing between air node $n$ and the associated ground devices  in $\mathcal{G}_n$ when  $|D_{\textsf{S},n}^{\textsf{S2A}, (r)}|$ is given. After that, we focus on  the load balancing between the space and air layers.

We  first characterize the direction of data  transmission between the air and ground.    If (i) $|D_{\textsf{S},n}^{\textsf{S2A}, (r)}|$ is provided from the space layer to  air node $n$, and (ii)   data offloading between air and ground layers is not performed,   the local computation delay at air node $n$ can be rewritten as follows: 
\begin{equation}\label{eqeee}
\tau_{\textsf{A},n}^{\textsf{local},(r)} = \max\{\frac{m_{\textsf{A},n}|D_{\textsf{A},n}^{(r)}|}{f_{\textsf{A},n}},   \frac{q|D_{\textsf{S},n}^{\textsf{S2A}, (r)}|}{Z_{\textsf{S},n}^{\textsf{S2A},(r)}}\}+ \frac{m_{\textsf{A},n}|D_{\textsf{S},n}^{\textsf{S2A}, (r)}|}{f_{\textsf{A},n}}.
\end{equation}
The result in (\ref{eqeee}) can be interpreted as follows. 
At the beginning of round $r$, the current satellite transmits dataset $D_{\textsf{S},n}^{\textsf{S2A}, (r)}$ to air node $n$.  This incurs delay of $\frac{q|D_{\textsf{S},n}^{\textsf{S2A}, (r)}|}{Z_{\textsf{S},n}^{\textsf{S2A}, (r)}}$, where $Z_{\textsf{S},n}^{\textsf{S2A}, (r)}$ is the downlink communication rate between the current satellite and air node $n$. In parallel, air node $n$ conducts local update based on the dataset  $D_{\textsf{A},n}^{(r)}$, causing delay of $\frac{m_{\textsf{A},n}|D_{\textsf{A},n}^{(r)}|}{f_{\textsf{A},n}}$. When both processes are completed, air node $n$ can update the model using dataset $D_{\textsf{S},n}^{\textsf{S2A}, (r)}$ received from the satellite, which is captured in the last term of (\ref{eqeee}).

Now if $\tau_{\textsf{A},n}^{\textsf{local}, (r)} > \max_{k\in \mathcal{G}_n}\{\tau_{\textsf{G},k}^{\textsf{local}, (r)}+ \tau_{k,n}^{\textsf{G2A}, (r)} \}$, i.e., if the computation  time at air node $n$ is larger than the completion time at the ground layer in its associated region,  we let air node $n$ transmit data samples to the ground layer for load balancing. Otherwise, i.e., if 
$\tau_{\textsf{A},n}^{\textsf{local}, (r)} < \max_{k\in \mathcal{G}_n}\{\tau_{\textsf{G},k}^{\textsf{local}, (r)}+ \tau_{k,n}^{\textsf{G2A}, (r)} \}$,   we let air node   $n$  receive data samples from the corresponding ground devices for load balancing. In the following, we describe our method assuming $\tau_{\textsf{A},n}^{\textsf{local}, (r)} > \max_{k\in \mathcal{G}_n}\{\tau_{\textsf{G},k}^{\textsf{local}, (r)}+ \tau_{k,n}^{\textsf{G2A}, (r)} \}$, where the result for the second case can be   obtained in a similar way.

 \textbf{Dataset  and latency characterization at air/ground layers:} We  define $D_{n,k}^{\textsf{A2G}, (r)}$ as  the dataset that is sent from   air node $n$ to ground device $k\in \mathcal{G}_n$ at global round $r$.  Then, the following holds for the updated dataset $D_{\textsf{A},n}^{ (r+1)}$ at air node $n$:
\begin{equation}
|D_{\textsf{A},n}^{ (r+1)}| = |D_{\textsf{A},n}^{ (r)}| +    |D_{\textsf{S},n}^{\textsf{S2A}, (r)}|  - \sum_{k\in\mathcal{G}_n}|D_{ n,k}^{\textsf{A2G}, (r)}|,
\end{equation}
which is obtained after receiving $|D_{\textsf{S},n}^{\textsf{S2A}, (r)}|$ samples from the satellite and sending $\sum_{k\in\mathcal{G}_n}|D_{ n,k}^{\textsf{A2G}, (r)}|$ samples to the ground devices in $\mathcal{G}_n$. For each ground device $k\in\mathcal{G}_n$, we can write 
\begin{equation}\label{khgmx2}
|D_{\textsf{G},k}^{ (r+1)}| = |D_{\textsf{G},k}^{ (r)}|
+ |D_{n,k}^{\textsf{A2G}, (r)}|,
\end{equation}
after receiving data from the corresponding air node $n$.

From the above definitions on the updated datasets at each layer, we can  write  $\bar{\tau}_{\textsf{A}, n}^{\textsf{local}, (r)}$ in (\ref{eq_sdf}), which represents the delay for air node $n$ to finish  computation,  as follows:  
\begin{align}\label{eq:lddlld}
&\bar{\tau}_{\textsf{A}, n}^{\textsf{local}, (r)} = \\
&\begin{dcases}  
\frac{m_{\textsf{A},n}|D_{\textsf{A},n}^{ (r+1)}|}{f_{\textsf{A},n}},  \  \  \text{if $|D_{\textsf{A},n}^{(r+1)}| \leq |D_{\textsf{A},n}^{(r)}|$}   \\
\max\Big\{\frac{m_{\textsf{A},n}|D_{\textsf{A},n}^{(r)}|}{f_{\textsf{A},n}},   \frac{q|D_{\textsf{S},n}^{\textsf{S2A}, (r)}|}{Z_{\textsf{S},n}^{\textsf{S2A},(r)}}\Big\} \\ \ \  + \frac{m_{\textsf{A},n}(|D_{\textsf{S},n}^{\textsf{S2A}, (r)}| - \sum_{k\in\mathcal{G}_n}|D_{ n,k}^{\textsf{A2G}, (r)}|)}{f_{\textsf{A},n}}, \ \text{otherwise} \nonumber
\end{dcases}
\end{align}
In (\ref{eq:lddlld}), if $|D_{\textsf{A},n}^{(r+1)}| \leq |D_{\textsf{A},n}^{(r)}|$, air node $n$ can finish computation  without waiting for 
  dataset  $D_{\textsf{S},n}^{\textsf{S2A}, (r)}$ from the satellite. On the other hand,  if $|D_{\textsf{A},n}^{(r+1)}| > |D_{\textsf{A},n}^{(r)}|$, it indicates that air node  $n$ also needs to   process   data samples received from the satellite.
  For both cases, air node  $n$ transmits $|D_{n,k}^{\textsf{A2G}, (r)}|$ data samples to ground device $k$ after  receiving data from the satellite.

Hence, for  ground device $k$, we can   write the completion time in (\ref{eq_sdf}) as    follows:
 
{\small
\begin{align}\label{qiiien}
\bar{\tau}_{\textsf{G},k}^{\textsf{local}, (r)} &= \max\Big\{\underbrace{\frac{m_{\textsf{G},k}|D_{\textsf{G},k}^{(r)}|}{f_{\textsf{G},k}}}_{\text{Comp. with original data}}, \underbrace{\frac{q|D_{\textsf{S},n}^{\textsf{S2A},(r)}|}{Z_{\textsf{S},n}^{\textsf{S2A},(r)}}+ \frac{q|D_{n,k}^{\textsf{A2G},(r)}|}{Z_{n,k}^{\textsf{A2G},(r)}}}_{\text{Comm.   for receiving data samples}} \Big\} \nonumber \\ 
&+ \underbrace{\frac{m_{\textsf{G},k}|D_{n,k}^{\textsf{A2G}, (r)}|}{f_{\textsf{G},k}}.}_{\text{Comp. with   received data from air node $n$}} 
\end{align}
}
Specifically, each ground device   starts computation with its original data when  round $r$ begins, and in parallel, waits until data samples from  air node $n$ arrives. Then, each device finishes computation using  data  samples received  from air node $n$.

	\begin{algorithm}[t]
 	\caption{Load Balancing Between Air Node $n$ and the Ground Devices in $\mathcal{G}_n$  }\label{algo:Air_Ground}
	\small
		\begin{algorithmic}[1]
		\STATE \textbf{Input:}  $\nu_{L,1}=\nu_{L,2}=0$, an appropriate $\nu_{U_1}$, $\nu_{U_2}$, and  small $\epsilon_1$, $\epsilon_2$. Initialized $|D_{n,k}^{\textsf{A2G}, (r)}|=0$ for all $k\in \mathcal{G}_n$.  Fixed $|D_{\textsf{S},n}^{\textsf{S2A}, (r)}|$. \\
	\STATE \textbf{Output: } Optimal data allocation  $\{|D_{n,k}^{\textsf{A2G}, (r)}|\}_{k\in\mathcal{G}_n}$ between air node $n$ and ground devices in $\mathcal{G}_n$. \\
			 \WHILE{$\nu_{U, 1}-\nu_{L,1}\geq \epsilon_1$}
			 		\STATE{Set $Y_n= (\nu_{U, 1}+ \nu_{L,1})/2$}
			 \STATE{Obtain $\{|D_{n,k}^{\textsf{A2G}, (r)}|\}_{k\in \mathcal{G}_n}$ based on the following while loop:}
	 \STATE{Set an appropriate $\nu_{L_2}$ and $\nu_{U_2}$.}	\WHILE{$\sum_{k\in\mathcal{G}_n}|D_{n,k}^{\textsf{A2G},(r)}|<(1-\epsilon_2)Y_{n}$ or $\sum_{k\in\mathcal{G}_n}|D_{n,k}^{\textsf{A2G},(r)}|> (1+\epsilon_2)Y_n$}
		\FOR{each $k\in \mathcal{G}_n$}
		\STATE{Compute $|D_{n,k}^{\textsf{A2G},(r)}|$  to make $\bar{\tau}_{\textsf{G},k}^{\textsf{local}, (r)} + \tau_{k,n}^{\textsf{G2A}, (r)}$ in (\ref{qiiien}) and  $\frac{1}{2}(\nu_{U, 2}+\nu_{L,2})$ as close as possible within  range $|D_{n,k}^{\textsf{A2G}, (r)}|\in [0, \min\{|D_{\textsf{A},n}^{(r)}|, Y_n\}]$  using bisection search.}
		\ENDFOR
		\IF{$\sum_{k\in \mathcal{G}_n}|D_{n,k}^{\textsf{A2G}, (r)}|\leq (1-\epsilon_2)Y_n$}
		\STATE{$\nu_{L, 2} \leftarrow \frac{1}{2}(\nu_{U, 2}+\nu_{L,2})$}
		\ELSE
		\STATE{$\nu_{U, 2} \leftarrow \frac{1}{2}(\nu_{U, 2}+\nu_{L,2})$}
		\ENDIF
		\ENDWHILE
				 \STATE{Compute $\max_{k\in\mathcal{G}_n}\{\bar{\tau}_{\textsf{G},k}^{\textsf{local}, (r)} + \tau_{k,n}^{\textsf{G2A}, (r)}\}$ based on (\ref{qiiien}) 
 and the obtained $\{|D_{n,k}^{\textsf{A2G}, (r)}|\}_{k\in \mathcal{G}_n}$}
 \STATE{Compute $\bar{\tau}_{\textsf{A},n}^{\textsf{local}, (r)}$ according to (\ref{eq:lddlld})}				 \STATE{\textbf{if} $\bar{\tau}_{\textsf{A},n}^{\textsf{local}, (r)} \geq \max_{k\in\mathcal{G}_n}\{\bar{\tau}_{\textsf{G},k}^{\textsf{local}, (r)} + \tau_{k,n}^{\textsf{G2A}, (r)}\}$,  set $\nu_{L,1} = Y_n$.}
				 		\STATE{\textbf{else} set $\nu_{U,1} = Y_n$.}
	 			\ENDWHILE
	\end{algorithmic}
\end{algorithm}

\setlength{\textfloatsep}{12pt}

 \textbf{Load balancing between air/ground layers:}   
 For load balancing between air and ground layers, we first  optimize $\{|D_{n,k}^{\textsf{A2G}, (r)}|\}_{k\in \mathcal{G}_n}$ that minimizes  $\bar{\tau}_{\textsf{A},n}^{(r)}$ 
in (\ref{eq_sdf}), by solving
\begin{equation}\label{subproblem_alhphaskdk}
\hspace{-4.9mm}
  \min_{\{|D_{n,k}^{\textsf{A2G}, (r)}|\}_{k\in \mathcal{G}_n}}
  \hspace{-2mm}
  \max\Big\{\bar{\tau}_{\textsf{A},n}^{\textsf{local},(r)}, \max_{k\in\mathcal{G}_n}\{\bar{\tau}_{\textsf{G},k}^{\textsf{local}, (r)} + \tau_{k,n}^{\textsf{G2A}, (r)}\} \Big\} 
  \hspace{-5mm}
 \end{equation} 
when $|D_{\textsf{S},n}^{\textsf{S2A}, (r)}|$  is given.  Note that the completion time at the ground layer, i.e.,  $\max_{k\in\mathcal{G}_n}\{\bar{\tau}_{\textsf{G},k}^{\textsf{local}, (r)} + \tau_{k,n}^{\textsf{G2A}, (r)}\}$, is an increasing function of $|D_{n,k}^{\textsf{A2G}, (r)}|$ while the computation delay at the air layer, i.e.,  $\bar{\tau}_{\textsf{A},n}^{\textsf{local}}$, is  a decreasing function of $|D_{n,k}^{\textsf{A2G}, (r)}|$. Hence, as described in Algorithm \ref{algo:Air_Ground}, we can use bisection search to make  $\bar{\tau}_{\textsf{A},n}^{\textsf{local}, (r)}$ and $\max_{k\in\mathcal{G}_n}\{\bar{\tau}_{\textsf{G},k}^{\textsf{local}, (r)} + \tau_{k,n}^{\textsf{G2A}, (r)}\}$ as close as possible, by controlling our optimization parameters $\{|D_{n,k}^{\textsf{A2G}, (r)}|\}_{k\in \mathcal{G}_n}$. In  Algorithm \ref{algo:Air_Ground}, we  first solve
\begin{align}\label{subproblem_alhpha}
 &  \min_{\{|D_{n,k}^{\textsf{A2G}, (r)}|\}_{k\in \mathcal{G}_n}}\max\Big\{\bar{\tau}_{\textsf{A},n}^{\textsf{local},(r)}, \max_{k\in\mathcal{G}_n}\{\bar{\tau}_{\textsf{G},k}^{\textsf{local}, (r)} + \tau_{k,n}^{\textsf{G2A}, (r)}\} \Big\}\nonumber \\  
& \text{subject to:}   
  \sum_{k\in \mathcal{G}_n}|D_{n,k}^{\textsf{A2G}, (r)}|=Y_n 
 \end{align}
 for a given $Y_n$, and then optimize $Y_n$ to minimize $\bar{\tau}_{\textsf{A},n}^{(r)}$ in (\ref{eq_sdf}), by implementing bisection search in a hierarchical way.

\begin{algorithm}[t]
 	\caption{Load Balancing Across Space-Air-Ground Layers }\label{algo:overall}
		\small
		\begin{algorithmic}[1]
		\STATE \textbf{Input: }$\nu_{L,1}=\nu_{L,2}=0$, an appropriate $\nu_{U_1}$, $\nu_{U_2}$, and  small $\epsilon_1$, $\epsilon_2$. Initialized $|D_{\textsf{S},n}^{\textsf{S2A}, (r)}|=0$ for all $n\in \mathcal{A}$.    \\
	\STATE \textbf{Output: } Optimal data allocations $\{|D_{\textsf{S},n}^{\textsf{S2A}, (r)}|\}_{n\in\mathcal{A}}$ and $\{|D_{n,k}^{\textsf{A2G}, (r)}|\}_{k\in\mathcal{G}_n}$ for all $n\in\mathcal{A}$. \\
			 \WHILE{$\nu_{U, 1}-\nu_{L,1}\geq \epsilon_1$}
			 		\STATE{Set $X= (\nu_{U, 1}+ \nu_{L,1})/2$}
			 \STATE{Obtain $\{|D_{\textsf{S},n}^{\textsf{S2A}, (r)}|\}_{n\in \mathcal{A}}$ based on the following while loop:}
		\WHILE{$\sum_{n\in\mathcal{A}}|D_{\textsf{S},n}^{\textsf{S2A}, (r)}|<(1-\epsilon_2)X$ or $\sum_{n\in\mathcal{A}}|D_{\textsf{S},n}^{\textsf{S2A}, (r)}|> (1+\epsilon_2)X$ }
		\FOR{each $n\in \mathcal{A}$}
		\STATE{Compute $|D_{\textsf{S},n}^{\textsf{S2A}, (r)}|$  to make $\bar{\tau}_{\textsf{A},n}^{(r)}+\tau_{n,\textsf{S}}^{\textsf{A2S},(r)}$ and  $\frac{1}{2}(\nu_{U, 2}+\nu_{L,2})$ as close as possible within  range $|D_{\textsf{S},n}^{\textsf{S2A},(r)}|\in [0, \min\{|D_{\textsf{S}}^{(r)}|, X\}]$, using bisection search and $\{|D_{n,k}^{\textsf{A2G}, (r)}|\}_{k\in\mathcal{G}_n}$ obtained from  \textbf{Algorithm \ref{algo:Air_Ground}}.}
		\ENDFOR	
		\IF{$\sum_{n\in \mathcal{A}}|D_{\textsf{S},n}^{\textsf{S2A}, (r)}|\leq (1-\epsilon_2)X$}
		\STATE{$\nu_{L, 2} \leftarrow \frac{1}{2}(\nu_{U, 2}+\nu_{L,2})$}
		\ELSE
		\STATE{$\nu_{U, 2} \leftarrow \frac{1}{2}(\nu_{U, 2}+\nu_{L,2})$}
		\ENDIF
		\ENDWHILE
		 \STATE{Compute $\bar{\tau}_{\textsf{A},n}^{(r)}$ in (\ref{eq_sdf})     based on the obtained $\{|D_{n,k}^{\textsf{A2G}, (r)}|\}_{k\in\mathcal{G}_n}$ for all $n\in\mathcal{A}$ and $\{|D_{\textsf{S},n}^{\textsf{S2A}, (r)}|\}_{n\in \mathcal{A}}$.} 
    \STATE{Compute $\bar{\tau}_{\textsf{S}}^{(r)}$ according to (\ref{eq:local_satellite_time}) and $|D_{\textsf{S}}^{(r+1)}|$ in (\ref{eq:sate_dataset_updated})}
				 \STATE{\textbf{if} $\bar{\tau}_{\textsf{S}}^{(r)} \geq \max_{n\in \mathcal{A}}\{\bar{\tau}_{\textsf{A},n}^{(r)}+\tau_{n,\textsf{S}}^{\textsf{A2S},(r)}\}$,  set $\nu_{L,1} = X$.}
				 		\STATE{\textbf{else} set $\nu_{U,1} = X$.}
	 			\ENDWHILE
	\end{algorithmic}
\end{algorithm}

\setlength{\textfloatsep}{12pt}

\textbf{Load balancing across space-air-ground layers:}  Now we revisit our final goal, which is to jointly optimize    $\{|D_{\textsf{S},n}^{\textsf{S2A}, (r)}|\}_{n\in\mathcal{A}}$ and $\{|D_{n,k}^{\textsf{A2G}, (r)}|\}_{k\in\mathcal{G}_n}$ for all $n\in\mathcal{A}$, to minimize the overall  latency $\bar{\tau}^{(r)}$ in (\ref{Lat_with_lllof}) based on the obtained $\bar{\tau}_{S}^{(r)}$, $\bar{\tau}_{\textsf{A},n}^{\textsf{local},(r)}$, and $\bar{\tau}_{\textsf{G},k}^{\textsf{local}}$.  The overall optimization  procedure is summarized in Algorithm \ref{algo:overall}.  Specifically, we solve
\begin{equation}
 \min_{\{|D_{\textsf{S},n}^{\textsf{S2A}, (r)}|\}_{n\in \mathcal{A}},
 }
 \max\Big\{\bar{\tau}_{S}^{(r)}, \max_{n\in \mathcal{A}}\{\bar{\tau}_{\textsf{A},n}^{(r)}+\tau_{n,\textsf{S}}^{\textsf{A2S},(r)}\}  \Big\},
\end{equation}
for load balancing between space and air layers. During optimization, Algorithm  \ref{algo:Air_Ground} is adopted to obtain   $\{|D_{n,k}^{\textsf{A2G}, (r)}|\}_{k\in\mathcal{G}_n}$ for load balancing between air and ground layers and to compute $\bar{\tau}_{\textsf{A},n}^{(r)}$, for a given  $|D_{\textsf{S},n}^{\textsf{S2A}, (r)}|$. Overall, we make  $\bar{\tau}_{S}^{(r)}$ and $ \max_{n\in \mathcal{A}}\{\bar{\tau}_{\textsf{A},n}^{(r)}+\tau_{n,\textsf{S}}^{\textsf{A2S},(r)}\}$ as close as possible by applying bisection search in a hierarchical way.

\begin{remark}
 In practice, Algorithm 1 and Algorithm 2 can be implemented at the nearest gateway to obtain optimized data offloading solutions. The solutions are subsequently sent to the corresponding nodes to execute the data offloading process. 
\end{remark}

\subsection{Case II:    Data Offloading  From   Air/Ground    to Space}
Now we consider Case II, where data samples are transmitted   from air/ground  to  space. 
 Let $|D_\textsf{n,S}^{\textsf{A2S}, (r)}|$ be the number of data samples sent from the air node $n$ to the current satellite.

  \textbf{Dataset  and latency characterization at the space layer:} The satellite-side dataset after data offloading satisfies:
\begin{equation}
|D_{\textsf{S}}^{ (r+1)}| = |D_{\textsf{S}}^{ (r)}| + \sum_{n\in\mathcal{A}}|D_{n, \textsf{S}}^{\textsf{A2S}, (r)}|.
\end{equation}
The satellite-side delay $\bar{\tau}_{\textsf{S}}^{(r)}$  can be  updated accordingly based on  $|D_{\textsf{S}}^{ (r+1)}| = |D_{\textsf{S}}^{ (r)}| + \sum_{n\in\mathcal{A}}|D_{n,\textsf{S}}^{\textsf{A2S}, (r)}|$  and (\ref{eq:local_satellite_time}).

As in Case I, we start by characterizing the data transmission direction between air and ground layers. Without any data transmission between air and ground layers, the completion time at air node $n$ can be written as  follows:
\begin{equation}\label{eqiii}
\tau_{\textsf{A},n}^{\textsf{local},(r)} = \max\Big\{\frac{m_{\textsf{A},n}(|D_{\textsf{A},n}^{(r)}|- |D_{n,\textsf{S}}^{\textsf{A2S}, (r)}|)}{f_{\textsf{A},n}},   \frac{q|D_{n, \textsf{S}}^{\textsf{A2S}, (r)}|}{Z_{n, \textsf{S}}^{\textsf{A2S},(r)}}\Big\},
\end{equation}
when $|D_{n,\textsf{S}}^{\textsf{A2S}(r)}|$ is given. Different from Case I, in (\ref{eqiii}),  both the computation time and the data offloading delay contribute to $\tau_{\textsf{A},n}^{\textsf{local},(r)}$. This is because   the air node can upload the  model to the satellite only when all data samples in $|D_{n, \textsf{S}}^{\textsf{A2S}, (r)}|$ are transmitted to the satellite.

Now we  consider the following two cases, depending on whether air node $n$ should transmit data to the ground layer  or collect data from the ground layer. If $\tau_{\textsf{A},n}^{\textsf{local}, (r)} < \max_{k\in \mathcal{G}_n}\{\tau_{\textsf{G},k}^{\textsf{local}, (r)}+ \tau_{k,n}^{\textsf{G2A}, (r)} \}$,   we let devices in $\mathcal{G}_n$ offload data to  the associated air node $n$ for load balancing.
 Otherwise, we let air node   $n$   transmit data samples to the corresponding ground devices. We consider the first case for description. The result for the second case can be obtained in a similar way.

\textbf{Dataset  and latency characterization at air/ground layers:} Let $D_{k,n}^{\textsf{G2A}, (r)}$ be the dataset that is sent from   ground device  $k\in \mathcal{G}_n$ to air node $n$.  Then, we have
\begin{equation}
|D_{\textsf{A},n}^{ (r+1)}| = |D_{\textsf{A},n}^{ (r)}| -     |D_{n, \textsf{S}}^{\textsf{A2S}, (r)}|  +\sum_{k\in\mathcal{G}_n}|D_{ k,n}^{\textsf{G2A}, (r)}|
\end{equation}
at each air node $n$, after  transmitting $|D_{n,\textsf{S}}^{\textsf{A2S}, (r)}|$  samples to the satellite and   receiving $\sum_{k\in\mathcal{G}_n}|D_{ k,n}^{\textsf{G2A}, (r)}|$  samples from ground devices in cluster $n$. For each ground device $k\in\mathcal{G}_n$, we  obtain
\begin{equation}\label{prk2}
|D_{\textsf{G},k}^{ (r+1)}| = |D_{\textsf{G},k}^{ (r)}|
 -  |D_{k,n}^{\textsf{G2A}, (r)}|
\end{equation}
after transmitting data to the associated air node.

  From these definitions, we obtain the following result:
\begin{align}\label{eq:pwqwp}
&\bar{\tau}_{\textsf{A}, n}^{\textsf{local}, (r)} = \\
&\begin{dcases}  
\max\Big\{\frac{m_{\textsf{A},n}|D_{\textsf{A},n}^{ (r+1)}|}{f_{\textsf{A},n}}, \frac{q|D_{n,\mathcal{S}}^{\textsf{A2S}, (r)}|}{Z_{n,\mathcal{S}}^{\textsf{A2S}, (r)}}\Big\},  \  \  \text{if $|D_{\textsf{A},n}^{(r+1)}| \leq |D_{\textsf{A},n}^{(r)}|$}   \\
\max\Bigg\{\max\Big\{\frac{m_{\textsf{A},n}|D_{\textsf{A},n}^{(r)}|}{f_{\textsf{A},n}},   \max_{k\in\mathcal{G}_n}\{\frac{q|D_{k,n}^{\textsf{G2A}, (r)}|}{Z_{k,n}^{\textsf{G2A},(r)}}\}\Big\} \\ \ \  + \frac{m_{\textsf{A},n}( \sum_{k\in\mathcal{G}_n}|D_{ k,n}^{\textsf{G2A}, (r)}|- |D_{n,\textsf{S}}^{\textsf{A2S}, (r)}|) }{f_{\textsf{A},n}}, \frac{q|D_{n,\mathcal{S}}^{\textsf{A2S}, (r)}|}{Z_{n,\mathcal{S}}^{\textsf{A2S}, (r)}}\Bigg\},\\ \ \ \  \ \text{otherwise}. \nonumber
\end{dcases}
\end{align}
We note that air node $n$ is ready to transmit the model to the satellite when data offloading to satellite is also completed. This is captured in the latency result above.

 At each ground device $k$, we can write
\begin{align} \label{eqdkkklsi}
\bar{\tau}_{\textsf{G},k}^{\textsf{local}, (r)}  =   
\max\Big\{\frac{m_{\textsf{G},k}(|D_{\textsf{G},k}^{(r)}| - |D_{k,n}^{\textsf{G2A}, (r)}|)}{f_{\textsf{G},k}},  \frac{q|D_{k,n}^{\textsf{G2A}, (r)}|}{Z_{k,n}^{\textsf{G2A}, (r)}} 
\Big\},
\end{align}
In (\ref{eqdkkklsi}), we take the maximum of local computation time and data offloading delay. Again, this is because the ground device can start uploading the updated model only if both the local computation and data offloading processes  are completed.

 \textbf{Load balancing between air/ground layers:} For load balancing between air and ground layers, our goal is to optimize $\{|D_{k,n}^{\textsf{G2A}, (r)}|\}_{k\in \mathcal{G}_n}$.  It can be seen that the completion time at the ground layer, i.e., $\max_{k\in\mathcal{G}_n}\{\bar{\tau}_{\textsf{G},k}^{\textsf{local}, (r)} + \tau_{k,n}^{\textsf{G2A}, (r)}\}$,  is a decreasing function of  $|D_{k,n}^{\textsf{G2A}, (r)}|$ if $|D_{k,n}^{\textsf{G2A}, (r)}|\leq \frac{m_{\textsf{G},k}Z_{k,n}^{\textsf{G2A}, (r)}|D_{\textsf{G},k}^{(r)}|}{m_{\textsf{G},k}Z_{k,n}^{\textsf{G2A}, (r)}+qf_{\textsf{G},k}}$, and an increasing function of  $|D_{k,n}^{\textsf{G2A}, (r)}|$ otherwise.
Also, the delay $\bar{\tau}_{\textsf{A},n}^{\textsf{local}, (r)}$ at the air layer is  an increasing function of $|D_{k,n}^{\textsf{G2A}, (r)}|$. Hence, similar to Algorithm \ref{algo:Air_Ground}, we can find \{|$D_{k,n}^{\textsf{G2A}, (r)}|\}_{k\in \mathcal{G}_n}$ by using bisection search in a hierarchical way to make  $\bar{\tau}_{\textsf{A},n}^{(r)}$ and $\max_{k\in\mathcal{G}_n}\{\bar{\tau}_{\textsf{G},k}^{\textsf{local}, (r)} + \tau_{k,n}^{\textsf{G2A}, (r)}\}$ as close as possible within the range following range:
\begin{equation}
|D_{k,n}^{\textsf{G2A}, (r)}|\in\Big[0, \min\Big\{\frac{m_{\textsf{G},k}Z_{k,n}^{\textsf{G2A}, (r)}|D_{\textsf{G},k}^{(r)}|}{m_{\textsf{G},k}Z_{k,n}^{\textsf{G2A}, (r)}+qf_{\textsf{G},k}}, |D_{\textsf{G},k}^{(r)}|-|D_k^l|\Big\}\Big].
\end{equation}
Recall that $|D_k^l|$ is the number of privacy-sensitive samples at ground device $k$. Hence, 
$|D_{\textsf{G},k}^{(r)}|-|D_k^l|$ represents  the amount of non-sensitive data of ground device $k$ at round $r$, which captures the feasible number of samples for offloading.

\textbf{Load balancing across space-air-ground layers:}  Finally, we optimize   $\{|D_{n,\textsf{S}}^{\textsf{A2S}, (r)}|\}_{n\in\mathcal{A}}$ and  $\{|D_{k,n}^{\textsf{G2A}, (r)}|\}_{k\in\mathcal{G}_n}$ for all $n\in\mathcal{A}$, to minimize the overall  latency $\bar{\tau}^{(r)}$ in (\ref{Lat_with_lllof}) based on the obtained $\bar{\tau}_{S}^{(r)}$, $\bar{\tau}_{\textsf{A},n}^{\textsf{local},(r)}$, and $\bar{\tau}_{\textsf{G},k}^{\textsf{local}}$. Similar to Algorithm \ref{algo:overall} for Case I, we can make   $\bar{\tau}_{S}^{(r)}$ and $ \max_{n\in \mathcal{A}}\{\bar{\tau}_{\textsf{A},n}^{(r)}+\tau_{n,\textsf{S}}^{\textsf{A2S},(r)}\}$ as close as possible by applying bisection search, where the solution for load balancing between air and ground layers is adopted during this process.

 \subsection{Complexity Analysis}
Algorithm \ref{algo:Air_Ground} involves load balancing between an air node and the associated ground devices, utilizing nested loops and bisection searches. The primary loop, governed by the variables $\nu_{L,1}$ and $\nu_{L,2}$, iterates using a bisection method until a specified precision $\epsilon_1$ is achieved, contributing a complexity of $\mathcal{O}(\log(\frac{1}{\epsilon_1}))$ \cite{sikorski1982bisection,wang2022interference}. Within this loop, an inner loop also utilizes bisection search to meet a precision $\epsilon_2$, adding a complexity of $\mathcal{O}(\log(\frac{1}{\epsilon_2}))$. The for-loop iterates over $n$ ground devices, with each iteration involving a bisection search that contributes $\mathcal{O}\left(\log \left(\min\{|D_{\textsf{A},n}^{(r)}|, Y_n\}\right)\right)$ complexity \cite{flores1971average,shi2016smoothed}. Summing these, the overall time complexity of Algorithm \ref{algo:Air_Ground} can be written as $\mathcal{O}\left(\log(\frac{1}{\epsilon_1}) \times \log(\frac{1}{\epsilon_2}) \times |\mathcal{G}_n| \times \log \left(\min\{|D_{\textsf{A},n}^{(r)}|, Y_n\}\right) \right)$, reflecting the combined logarithmic and linear components of the nested operations. Similarly,  the complexity of Algorithm \ref{algo:overall} becomes $\mathcal{O}\left(\log(\frac{1}{\epsilon_1}) \times \log(\frac{1}{\epsilon_2}) \times |\mathcal{A}| \times \log \left(\min\{|D_{\textsf{S}}^{(r)}|, X\}\right) \right)$.

\section{Convergence Analysis}
\label{sec:convergence}

In this section, we investigate the convergence property of the proposed algorithm. 
After data offloading is performed in the $r$-th  training round, the global loss function defined  in (\ref{original_loss}) can be rewritten in the following form:
\begin{align}\label{global_FL}
	  F(\mathbf{w})  = & \sum_{k\in \mathcal{G}} \lambda_{\textsf{G}, k}^{(r)} \ell_{\textsf{G}, k}^{(r+1)}(\mathbf{w}) + \sum_{n\in \mathcal{A}} \lambda_{\textsf{A}, n}^{(r)} \ell_{\textsf{A}, n}^{(r+1)}(\mathbf{w})\nonumber \\ &+ \lambda_{\textsf{S}}^{(r)} \ell_{\textsf{S}}^{(r+1)} (\mathbf{w}).
\end{align} 
We note that the global loss function $F(\mathbf{w})$ is time-invariant 
 because the global dataset does not change; rather only the data samples are exchanged among the nodes.  
On the other hand, the local losses, i.e., $\ell_{\textsf{G}, k}^{(r+1)}(\mathbf{w})$, $\ell_{\textsf{A}, n}^{(r+1)}(\mathbf{w}) $, and $\ell_{\textsf{S}}^{(r+1)} (\mathbf{w})$,  vary over time.  
 Our goal is to  analyze the evolution of $\|\nabla F(\mathbf{w}^{(r)})\|$  
 to characterize  the convergence behavior for non-convex loss functions. We rely on the following assumptions.

\begin{assumption}\label{assump_smooth} 
$\ell_{\textsf{G}, k}^{(r+1)}(\mathbf{w})$, $\ell_{\textsf{A}, n}^{(r+1)}(\mathbf{w}) $ and $\ell_{\textsf{S}}^{(r+1)} (\mathbf{w})$, are $L$-smooth for any $k\in \mathcal{G}$, $n\in \mathcal{A}$, and for any $r$.
\end{assumption}
\begin{assumption}\label{bound_variance}
The mini-batch gradients $\tilde{\nabla}\ell_{\textsf{G}, k}^{(r+1)}(\mathbf{w})$, $\tilde{\nabla}\ell_{\textsf{A}, n}^{(r+1)}(\mathbf{w})$, and $\tilde{\nabla}\ell_{\textsf{S}}^{(r+1)}(\mathbf{w})$  are unbiased estimates of ${\nabla}\ell_{\textsf{G}, k}^{(r+1)}(\mathbf{w})$, ${\nabla}\ell_{\textsf{A}, n}^{(r+1)}(\mathbf{w})$, and ${\nabla}\ell_{\textsf{S}}^{(r+1)}(\mathbf{w})$, respectively. The  variance  is bounded as $\mathbb{E} \|\tilde{\nabla}\ell_{\textsf{G}, k}^{(r+1)}(\mathbf{w}) - {\nabla}\ell_{\textsf{G}, k}^{(r+1)}(\mathbf{w})\|^2 \leq \sigma_g^2$, $\forall k\in \mathcal{G}$, which also holds for $\tilde{\nabla}\ell_{\textsf{A}, n}^{(r+1)}(\mathbf{w})$ and $\tilde{\nabla}\ell_{\textsf{S}}^{(r+1)}(\mathbf{w})$.   	
\end{assumption}
\begin{assumption}
\label{bound_heterogeneity}
The gradient dissimilarity between each local loss function 
 and the global loss function   $F(\mathbf{w})$  is bounded as 
$\mathbb{E} \left \|  \nabla \ell_{\textsf{G}, k}^{(r+1)}(\mathbf{w})   \! -\! F(\mathbf{w}) \right\|^2 \leq c_r\left \|F(\mathbf{w}) \right\|^2+  \delta_r^2$,  
$\forall k\in \mathcal{G}$. This holds for $\ell_{\textsf{A}, n}^{(r+1)}(\mathbf{w})$, $\forall n\in \mathcal{A}$ and $\ell_{\textsf{S}}^{(r+1)}(\mathbf{w})$ as well.
\end{assumption}
Assumptions \ref{assump_smooth}--\ref{bound_heterogeneity} are standard and have been widely adopted in the analyses of existing works    \cite{wang2019matcha, hosseinalipour2023parallel,ganguly2023multi}, where Assumption \ref{bound_heterogeneity} specifically quantifies the data heterogeneity in each   round  $r$.  
We present our main theorem below.

\begin{theorem} \label{theorem_full}
	Suppose that Assumptions \ref{assump_smooth}--\ref{bound_heterogeneity} hold and the learning rates satisfies
	\equa{\label{theorem_eta_full}
\eta_{\textsf{G}, k}^{(r)} = \eta_{\textsf{A}, n}^{(r)} = \eta_{\textsf{S}}^{(r)} = \eta^{(r)} \leq \frac{1}{2\sqrt{1\!+\!c_r}HL},
} 
where $H$ denotes the number of local iterations at each node per global round. Then  under non-convex settings, our algorithm  satisfies the following convergence result:
\begin{align}\label{equa_theorem}
 & \frac{1}{\Gamma_R} \sum_{r=0}^{R-1} \eta^{(r)} \mathbb{E} \Big\|\nabla F\Big(\mathbf{w}^{(r)}\Big)\Big\|^2
 \leq 4\frac{F\left(\mathbf{w}^{(0)}\right)  - F^*}{ H\Gamma_R} \nonumber \\
 & +  \frac{4L}{\Gamma_R} \sum_{r=0}^{R-1}  (\eta^{(r)})^2 \!\Big( \sum_{k\in \mathcal{G}}  (\lambda_{\textup{\textsf{G}}, k}^{(r)})^2  +  \sum_{n\in \mathcal{A}} (\lambda_{\textup{\textsf{A}}, n}^{(r)})^2 + (\lambda_{\textup{\textsf{S}}}^{(r)})^2 \Big)   \sigma_g^2 \nonumber \\
 & +  \frac{2H^2 L^2 \sigma_g^2}{\Gamma_R}\sum_{r=0}^{R-1}  (\eta^{(r)})^3  + \frac{4H^2 L^2}{\Gamma_R} \sum_{r=0}^{R-1}  (\eta^{(r)})^3 \delta_r^2,
\end{align}
where  $F^*$ is the minimum value that $F(\mathbf{w})$ can achieve and $\Gamma_R = \sum_{r=0}^{R-1}  \eta^{(r)}$ is the summation of learning rates.
\end{theorem}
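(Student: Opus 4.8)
The plan is the standard one-round descent analysis for local-SGD-type methods, made uniform across the ground/air/satellite node classes and carrying the round-dependent heterogeneity constants $c_r,\delta_r$ through. First I would collapse one global round into a recursion on the synchronized iterate. Since all learning rates in round $r$ equal $\eta^{(r)}$, and since the aggregation weights are exactly the dataset fractions appearing in the corresponding local losses, so that $\sum_{k\in\mathcal{G}}\lambda_{\textsf{G},k}^{(r)}\ell_{\textsf{G},k}^{(r+1)}+\sum_{n\in\mathcal{A}}\lambda_{\textsf{A},n}^{(r)}\ell_{\textsf{A},n}^{(r+1)}+\lambda_{\textsf{S}}^{(r)}\ell_{\textsf{S}}^{(r+1)}=F$, the aggregation step yields $\mathbf{w}^{(r+1)}=\mathbf{w}^{(r)}-\eta^{(r)}\sum_{h=0}^{H-1}\mathbf{G}^{(r,h)}$, where $\mathbf{G}^{(r,h)}$ is the $\lambda^{(r)}$-weighted sum of the per-node mini-batch gradients evaluated at the drifted local models $\mathbf{w}_{\bullet}^{(r,h)}$ (the space-layer handover is transparent here: the satellite slot simply performs $H$ consecutive SGD steps). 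By Assumption \ref{bound_variance} the conditional mean of $\mathbf{G}^{(r,h)}$ is its full-gradient analogue $\bar{\mathbf{G}}^{(r,h)}$, and because all local models coincide with $\mathbf{w}^{(r)}$ at $h=0$, $\bar{\mathbf{G}}^{(r,0)}=\nabla F(\mathbf{w}^{(r)})$. As $F$ is a convex combination of $L$-smooth functions it is $L$-smooth, so I would apply the descent lemma to $F$ along this step and take conditional expectations.

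For the first-order term I would use the polarization identity $-\langle\nabla F(\mathbf{w}^{(r)}),\bar{\mathbf{G}}^{(r,h)}\rangle=\tfrac12\big(\|\nabla F(\mathbf{w}^{(r)})-\bar{\mathbf{G}}^{(r,h)}\|^2-\|\nabla F(\mathbf{w}^{(r)})\|^2-\|\bar{\mathbf{G}}^{(r,h)}\|^2\big)$, which isolates the descent term $-\tfrac{\eta^{(r)}H}{2}\|\nabla F(\mathbf{w}^{(r)})\|^2$, a benign term $-\tfrac{\eta^{(r)}}{2}\sum_h\|\bar{\mathbf{G}}^{(r,h)}\|^2$, and the client-drift term $+\tfrac{\eta^{(r)}}{2}\sum_h\|\nabla F(\mathbf{w}^{(r)})-\bar{\mathbf{G}}^{(r,h)}\|^2$. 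For the second-order term I would split $\mathbf{G}^{(r,h)}$ into its mean and zero-mean noise; using mutual independence of the per-node mini-batch noises and Assumption \ref{bound_variance}, the noise contributes $\big(\sum_{k\in\mathcal{G}}(\lambda_{\textsf{G},k}^{(r)})^2+\sum_{n\in\mathcal{A}}(\lambda_{\textsf{A},n}^{(r)})^2+(\lambda_{\textsf{S}}^{(r)})^2\big)\sigma_g^2$ per iteration --- exactly the weighted factor in \eqref{equa_theorem} --- while the mean is bounded by $H\sum_h\|\bar{\mathbf{G}}^{(r,h)}\|^2$ via Cauchy--Schwarz, so that under $\eta^{(r)}\le 1/(2\sqrt{1+c_r}HL)\le 1/(2HL)$ the $\|\bar{\mathbf{G}}^{(r,h)}\|^2$ contributions net non-positive and are dropped.

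It remains to control the client-drift term. By Jensen's inequality over the aggregation weights and $L$-smoothness, $\|\nabla F(\mathbf{w}^{(r)})-\bar{\mathbf{G}}^{(r,h)}\|^2\le L^2$ times a $\lambda^{(r)}$-weighted average of the squared local deviations $\|\mathbf{w}^{(r)}-\mathbf{w}_{\bullet}^{(r,h)}\|^2$, so it suffices to bound $U_h$, the largest over nodes of $\mathbb{E}\|\mathbf{w}^{(r)}-\mathbf{w}_{\bullet}^{(r,h)}\|^2$. Writing $\mathbf{w}_{\bullet}^{(r,h)}-\mathbf{w}^{(r)}$ as an accumulation of $h\le H$ local SGD steps, applying Cauchy--Schwarz, bounding $\|\nabla\ell_{\bullet}^{(r+1)}(\mathbf{w}_{\bullet}^{(r,h')})\|^2$ by $L$-smoothness in terms of the deviation plus $\|\nabla\ell_{\bullet}^{(r+1)}(\mathbf{w}^{(r)})\|^2$, and then invoking Assumption \ref{bound_heterogeneity} to replace the latter by $c_r\|\nabla F(\mathbf{w}^{(r)})\|^2+\delta_r^2$ (and Assumption \ref{bound_variance} for the mini-batch noise), I obtain a self-referential estimate of the form $\max_{h\le H}U_h\le 3(\eta^{(r)}HL)^2\max_h U_h+\mathcal{O}\!\big((\eta^{(r)}H)^2\big)\big[(1+c_r)\|\nabla F(\mathbf{w}^{(r)})\|^2+\delta_r^2+\sigma_g^2\big]$. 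The hypothesis $\eta^{(r)}\le 1/(2\sqrt{1+c_r}HL)$ forces the self-coefficient $3(\eta^{(r)}HL)^2\le 3/4$, so the recursion closes and $\max_h U_h=\mathcal{O}\!\big((\eta^{(r)}H)^2\big)\big[(1+c_r)\|\nabla F(\mathbf{w}^{(r)})\|^2+\delta_r^2+\sigma_g^2\big]$.

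Substituting this back, the drift term is at most $\tfrac{\eta^{(r)}L^2H}{2}\max_h U_h$, which carries a multiple of $\|\nabla F(\mathbf{w}^{(r)})\|^2$ with coefficient $\mathcal{O}\!\big((1+c_r)(\eta^{(r)}HL)^2\,\eta^{(r)}H\big)$; the same learning-rate condition keeps this below $\tfrac{\eta^{(r)}H}{4}$, so it is absorbed into $-\tfrac{\eta^{(r)}H}{2}\|\nabla F(\mathbf{w}^{(r)})\|^2$, leaving $-\tfrac{\eta^{(r)}H}{4}\|\nabla F(\mathbf{w}^{(r)})\|^2$ plus per-round errors of orders $LH(\eta^{(r)})^2\big(\sum_{k\in\mathcal{G}}(\lambda_{\textsf{G},k}^{(r)})^2+\sum_{n\in\mathcal{A}}(\lambda_{\textsf{A},n}^{(r)})^2+(\lambda_{\textsf{S}}^{(r)})^2\big)\sigma_g^2$ and $L^2H^3(\eta^{(r)})^3(\sigma_g^2+\delta_r^2)$. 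Summing $\mathbb{E}[F(\mathbf{w}^{(r+1)})]\le\mathbb{E}[F(\mathbf{w}^{(r)})]-\tfrac{\eta^{(r)}H}{4}\mathbb{E}\|\nabla F(\mathbf{w}^{(r)})\|^2+(\text{errors})$ over $r=0,\dots,R-1$, using $F(\mathbf{w}^{(r)})\ge F^*$, and dividing by $H\Gamma_R/4$ then yields \eqref{equa_theorem}. I expect the drift recursion to be the main obstacle: it must be proved uniformly over the three node classes, and the constant multiplying $c_r$ there has to be tuned so that the single hypothesis $\eta^{(r)}\le 1/(2\sqrt{1+c_r}HL)$ simultaneously pushes the self-coefficient below one and makes the drift-induced $\|\nabla F(\mathbf{w}^{(r)})\|^2$ term absorbable; any slack there would inflate the learning-rate requirement or spoil the clean constants in \eqref{equa_theorem}.
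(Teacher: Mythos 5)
Your proposal follows essentially the same route as the paper's proof: a smoothness descent step with the polarization identity isolating the descent, noise, and drift contributions (the paper's Lemma~\ref{progress_lem_full}), a self-referential local-drift recursion closed via the step-size condition and Assumption~\ref{bound_heterogeneity} (the paper's Lemma~\ref{drift_lem}), absorption of the drift-induced $\|\nabla F(\mathbf{w}^{(r)})\|^2$ term using $\eta^{(r)}\le 1/(2\sqrt{1+c_r}HL)$, and telescoping over rounds. The only differences are cosmetic---you bound the maximum per-node deviation where the paper bounds the $\lambda$-weighted average $\Phi_r$, and you leave per-round constants in big-$\mathcal{O}$ form where the paper tracks them exactly---so the argument is correct and matches the paper's approach.
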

\begin{proof}
    See Appendix \ref{proof_theorem}
\end{proof}
The impact of data heterogeneity after each round of data offloading  is reflected both in the  learning rate condition (\ref{theorem_eta_full}) and the last term of (\ref{equa_theorem}) in the convergence bound. From (\ref{theorem_eta_full}), we see that as the extent of data heterogeneity after data offloading gets larger,   a smaller learning rate is required to guarantee the convergence of the  algorithm. We also observe from (\ref{equa_theorem}) that the bound increases as the heterogeneity of data distributions across the nodes grows. The second term of the right-hand side of (\ref{equa_theorem}) captures the effect of the portion of data samples at each node on the convergence bound, which is time-varying due to data offloading. Additionally,    by selecting an appropriate learning rate that satisfies $\sum_{r=0}^{R-1}  (\eta^{(r)})^2 \rightarrow 0$,    $\sum_{r=0}^{R-1}  (\eta^{(r)})^3 \rightarrow 0$ and $\Gamma_R\rightarrow \infty$ for $R\rightarrow \infty$, the upper bound  
will diminish to zero. In particular, we can either adopt a decaying learning rate according to $\eta^{(r)} = \frac{\eta^{(0)}}{r+1}$ or keep it constant as $\eta^{(r)} = \frac{1}{\sqrt{HR}}$. This guarantees convergence to a stationary point of the non-convex loss function.

\section{Experimental Results} \label{sec:experiments}
In this section, we provide experimental results to validate the effectiveness of the proposed   methodology in SAGINs.

 \subsection{Simulation Setup}

\textbf{Dataset and model:} We consider the following benchmark datasets for FL: MNIST, FMNIST, and CIFAR-10. Using MNIST and FMNIST, we train a convolutional neural network with two convolutional layers and two fully connected layers, and a convolutional neural network with two convolutional layers and one fully connected layer,   respectively. Using CIFAR-10, we train the VGG-11 model.  We conduct FL using the  training set of each dataset, and evaluate the performance of the constructed global model using the  testing set.

\textbf{SAGIN setting:} We consider  $K=50$ ground devices located at a squared target region of $1200$ m $\times$ $1200$ m. There are $N=5$ air nodes at a height of $20$ km above the target area, each serving $10$ ground devices without overlapping. A series of LEO satellites cover the target region in each global round, where we adopt the \texttt{walkerStar} function in MATLAB to construct a constellation model.  Fig. \ref{fig:constellation} shows the created satellite constellation, where   $80$  LEO  satellites are distributed evenly across $5$ different orbits with altitude of $800$ km and inclination of $85^{\circ}$. We set the minimum elevation angle to communicate to $15^{\circ}$, and  the    latitude  and longitude of  the target region are $40^{\circ}$ N and $86^{\circ}$ W, respectively. We use \texttt{accessIntervals} function to calculate the coverage time of each satellite over the target region.  Referring to the   settings of prior works \cite{rodrigues2023hybrid, fang2023olive, Razmi1}, we  adopt the  following parameter values for simulations: $f_{\textsf{G},k}=10^8$ Hz, $f_{\textsf{A},n}=10^9$ Hz, $f_{\textsf{S},i}\in [1, 10]\times10^9$ Hz,  $m_{\textsf{G},k}=m_{\textsf{A},n}=m_{\textsf{S}} = 3\times10^9$ cycles/sample, $p_{\textsf{G},k}=0.1$ W, $p_{\textsf{A},n}=1$ W, $p_{\textsf{S},i}=10$ W, $Z_{i, i+1}^{\textsf{ISL},(r)}=3.125$ Mbps, $N_0=3.98\times10^{-21}$ W/Hz. Here, to model the time-varying resource availability at the space layer over the target region, the CPU frequencies of satellites $f_{\textsf{S},i}$ are sampled from  a specific uniform distribution $[1, 10]\times10^9$.

The training set of each dataset is distributed to the ground devices in two different scenarios: IID (independent and identically distributed) and non-IID cases. For the IID case, we allocate the training samples to the ground devices  uniformly  at random. For the non-IID scenario, we sort the training set according to each sample's class, split the sorted dataset into 200 shards, and then randomly assign 4 shards to each ground device. This introduces heterogeneous data distributions among ground devices. Note that the nodes in the space and air layers do not hold data at the beginning. We set the portion of non-sensitive   to $\alpha_k=\alpha=0.8$ for all ground devices, and also study the  effect  of $\alpha$ in Section \ref {subsec:ablation}.

\begin{figure}[t]
  \centering
         \includegraphics[width=0.36\textwidth]{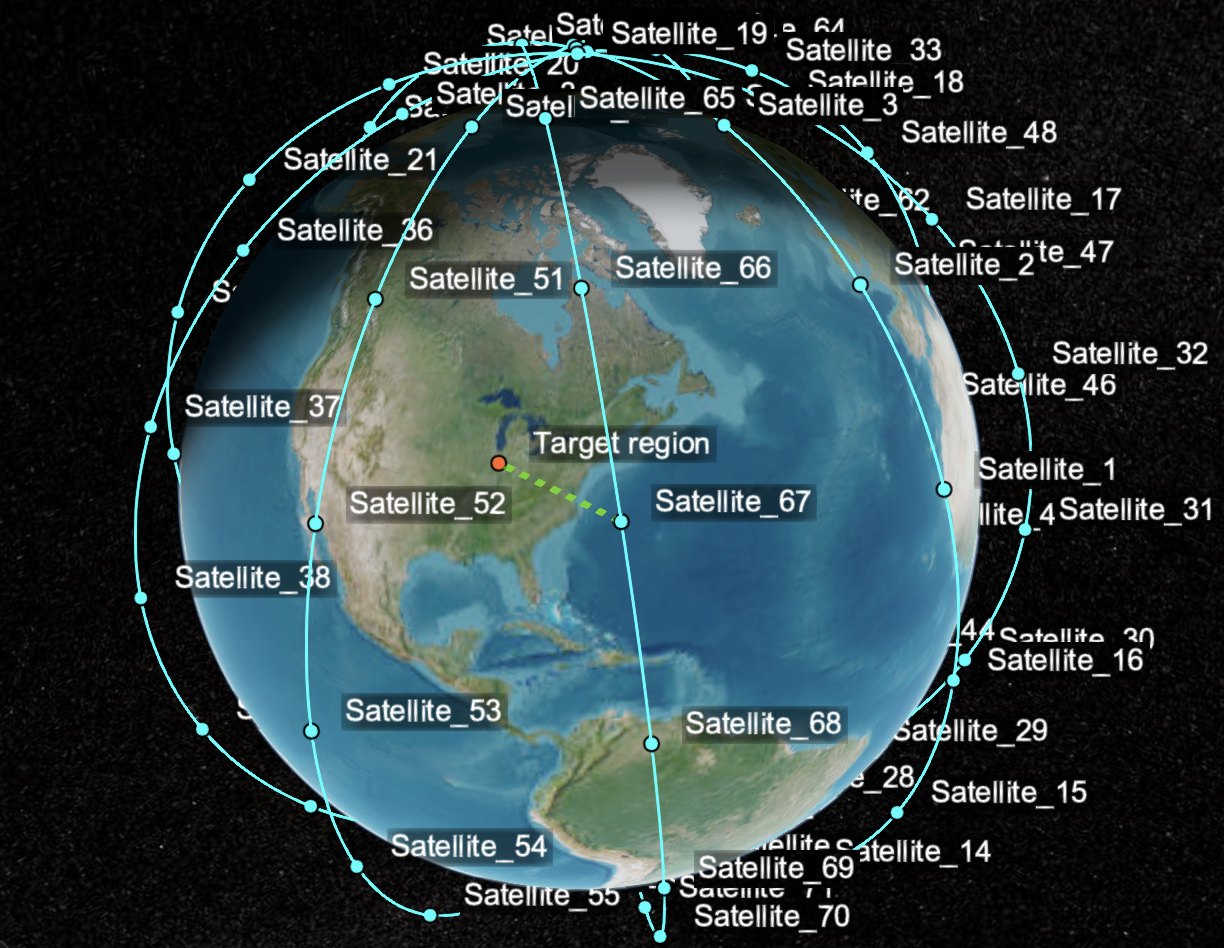} 
      \caption{Illustration of the satellite constellation   constructed based on the \texttt{walkerStar} function.}\label{fig:constellation}
      \vspace{-1.5mm}
   \end{figure}
\begin{figure*}[t]
        \vspace{-0.5mm}
    \centering
 \begin{subfigure}[b]{0.3\textwidth}
         \centering
    \includegraphics[width=\textwidth]{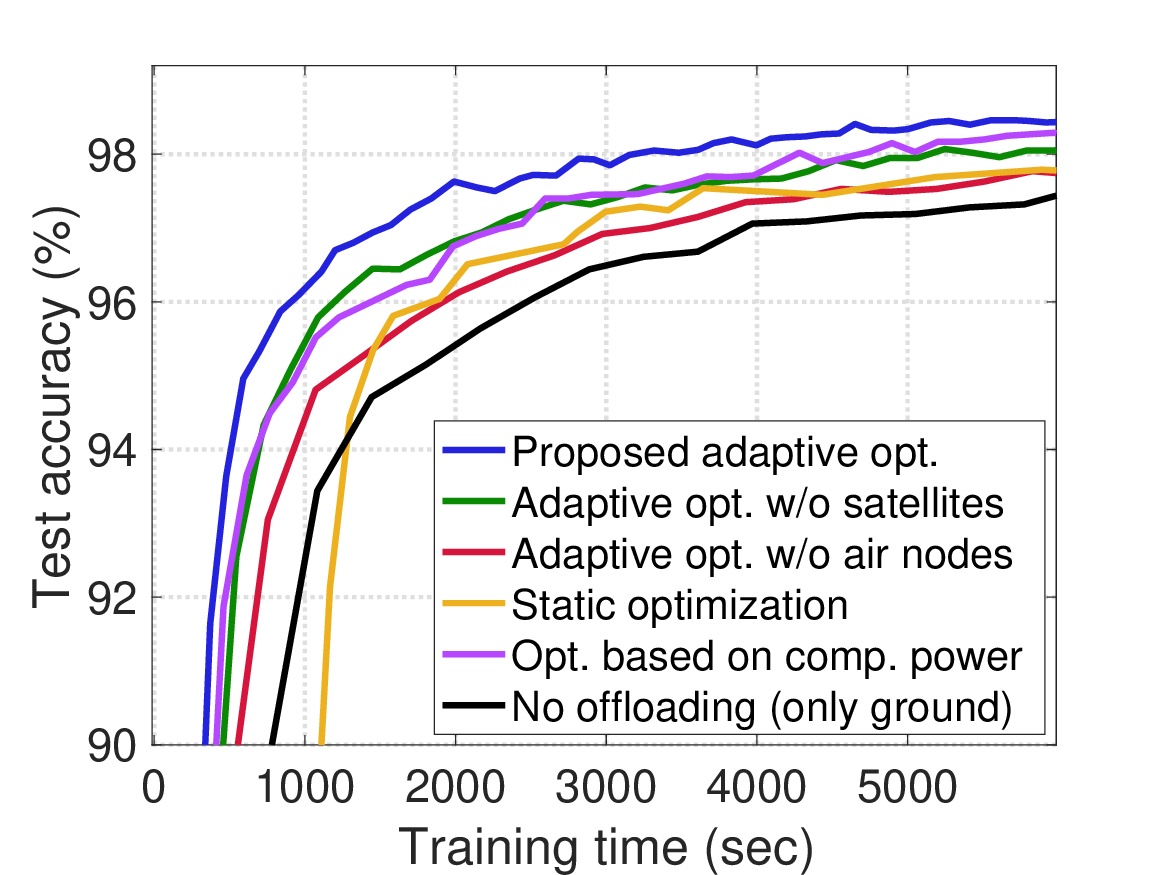}
            \vspace{-4.5mm}
       \caption{MNIST, IID} 
        \vspace{-0.5mm}
 \end{subfigure}  
    \begin{subfigure}[b]{0.3\textwidth}
         \centering
  \includegraphics[width=\textwidth]{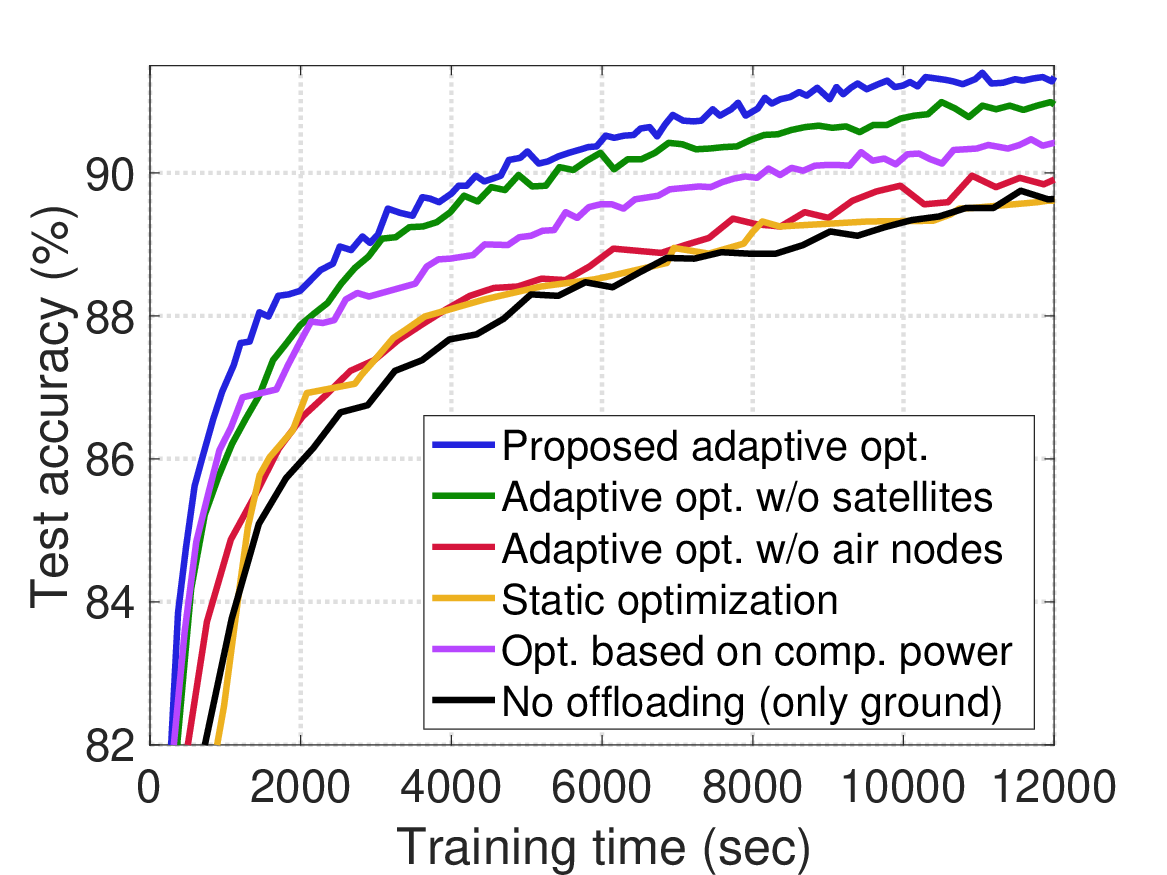}
              \vspace{-4.5mm}
       \caption{FMNIST, IID}
               \vspace{-0.5mm}
  \end{subfigure}
 \begin{subfigure}[b]{0.3\textwidth}
         \centering
    \includegraphics[width=\textwidth]
 {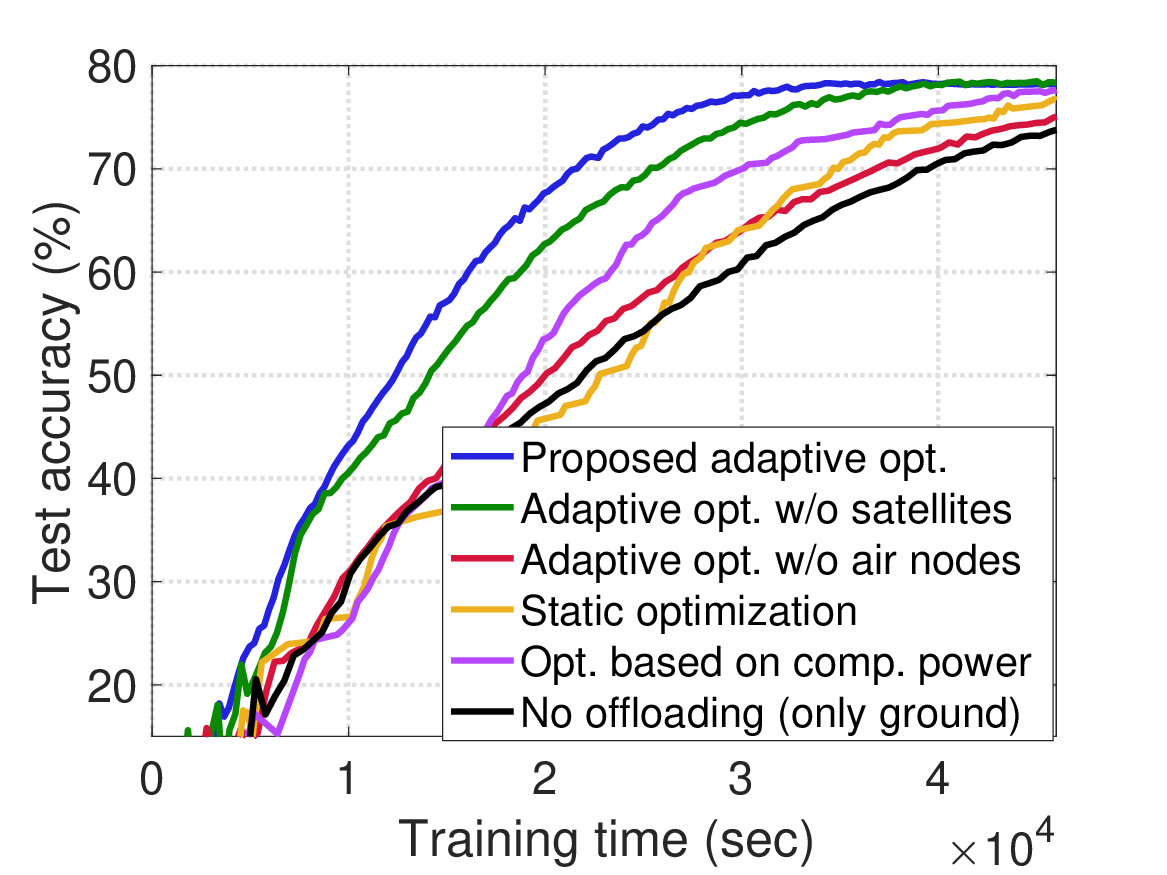}
             \vspace{-4.5mm}
       \caption{CIFAR-10, IID} 
               \vspace{-0.5mm}
  \end{subfigure}
       \begin{subfigure}[b]{0.3\textwidth}
         \centering
\includegraphics[width=\textwidth] {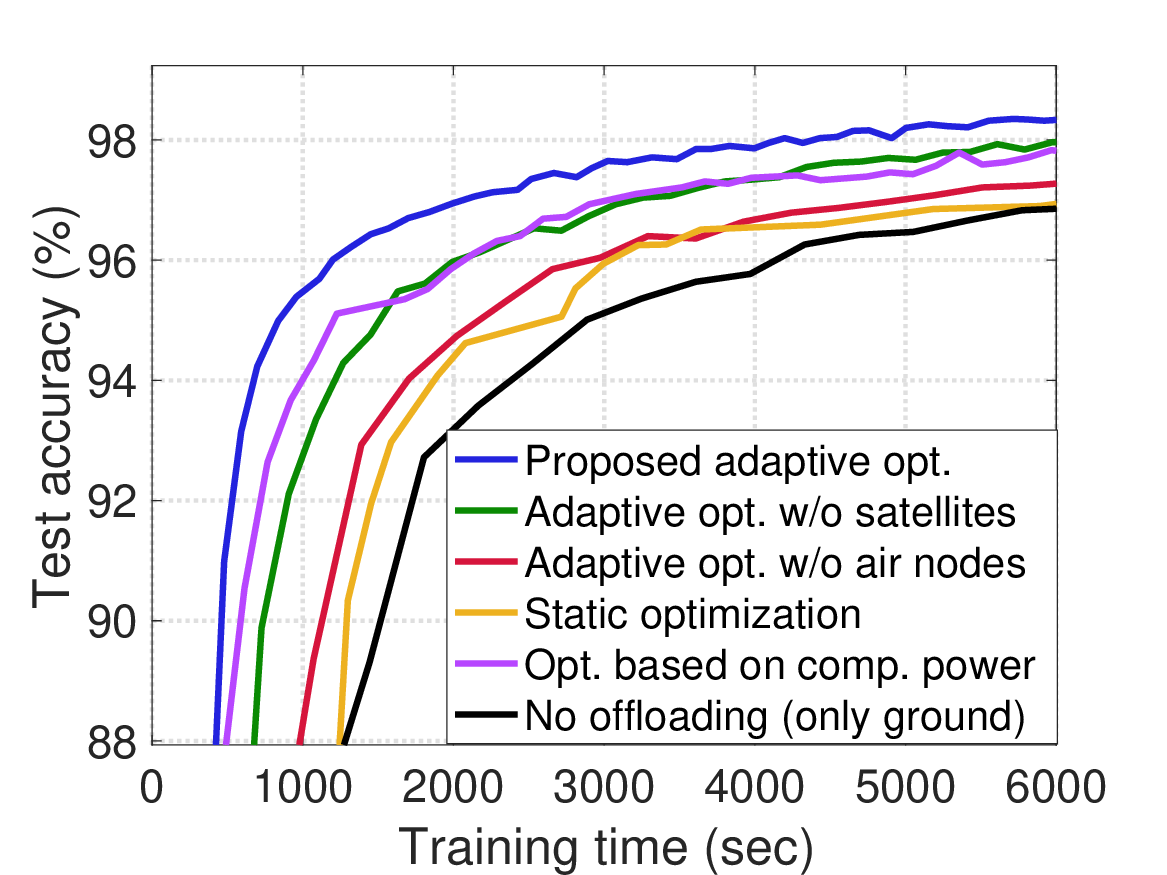}
            \vspace{-4.5mm}
    \caption{MNIST, Non-IID}
 \end{subfigure}  
    \begin{subfigure}[b]{0.3\textwidth}
         \centering
\includegraphics[width=\textwidth]
{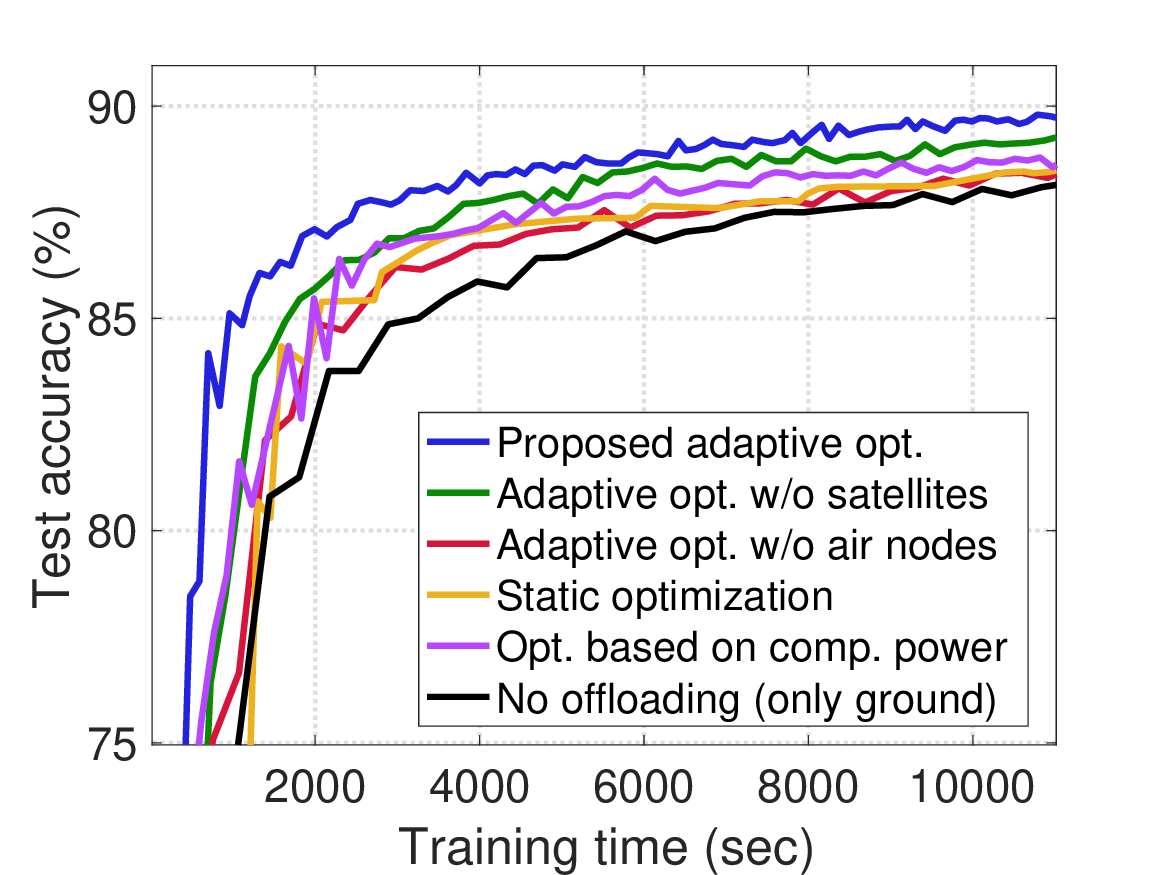}
            \vspace{-4.5mm}
    \caption{FMNIST, Non-IID}
  \end{subfigure}
     \begin{subfigure}[b]{0.3\textwidth}
         \centering
    \includegraphics[width=\textwidth]{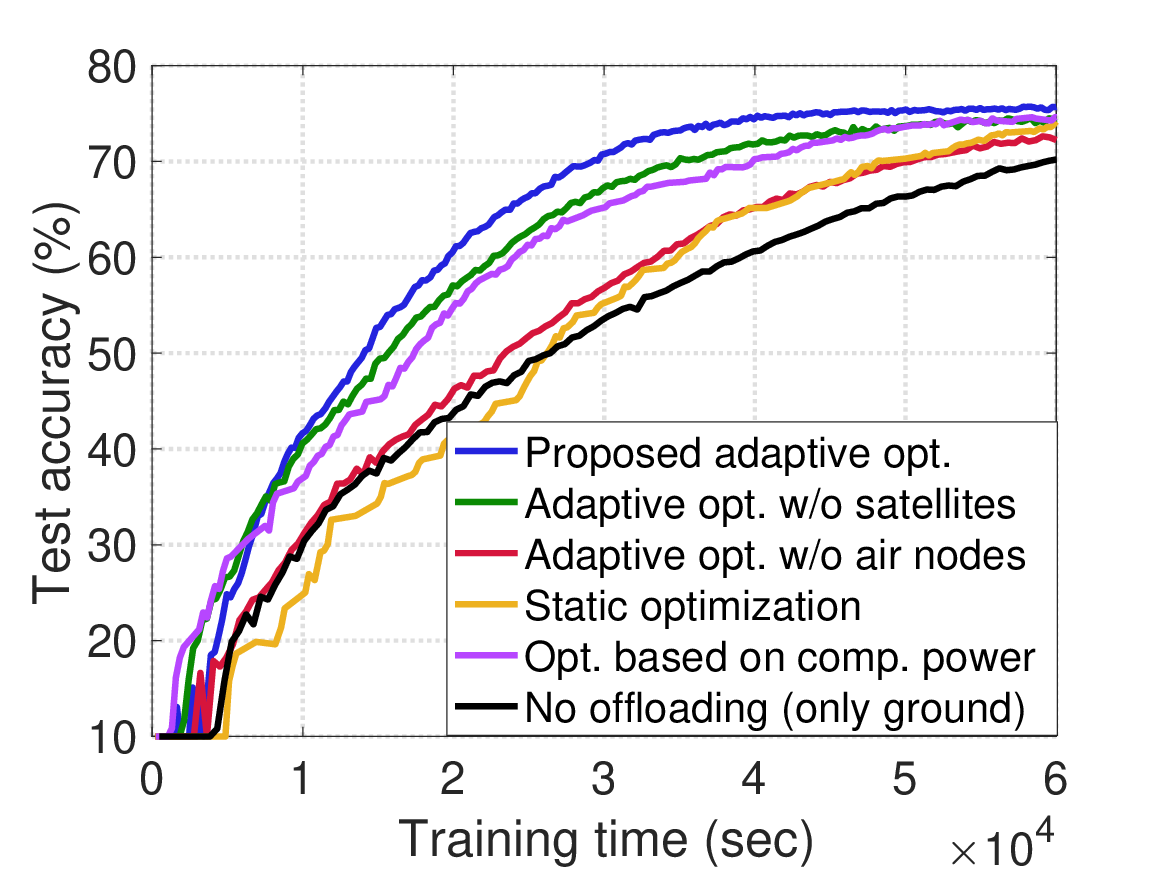}
                \vspace{-4.5mm}
       \caption{CIFAR-10, Non-IID}
  \end{subfigure}
  \vspace{-0.5mm}
     \caption{Accuracy versus training time plots. For the static optimization scheme, we apply our inter-layer data offloading scheme only in the first global round and keep the intra-layer data fixed throughout the remaining rounds.  The results show  the advantage of adaptive data offloading optimization considering both space and air layers. }
    \label{fig:exp_overall}
     \end{figure*}

\textbf{Comparison schemes:}  For baselines,  we first consider  the scheme where only the ground devices process data without any data offloading, to see the advantage of adopting  nodes in space and air layers as edge computing units.  Satellites and air nodes are only used to aggregate the updated models. This baseline represents the majority of existing works that do not involve data offloading. Secondly, we consider optimizing data offloading only between the air and ground layers. Hence, the   satellite-side  computation power is not utilized during local model updates. Similarly, we optimize data offloading only between ground and space layers, without using the computational capabilities of air nodes during local update.  We also consider the static optimization scheme, which applies our  optimization strategy   only at the initial global round and keeps the same solution throughout the remaining FL process. This baseline utilizes the computational resources of all three layers of SAGINs and is considered to see the impact of adaptive data offloading instead of using a fixed solution. Finally, we consider another baseline that utilizes the resources of all layers of SAGINs, where the number
of data samples processed at each node is proportional to its computational power.  For a fair comparison, we use FedAvg   to aggregate the models in all baseline schemes  and our methodology.

\subsection{Main Experimental Results}\label{subsec:exp}

We first observe  Fig. \ref{fig:exp_overall}, which reports the accuracy versus training time plots in different settings. Our key takeaways are as follows. First, the scheme   without  data offloading  achieves slow convergence, since the computation resources of space and air nodes are not utilized in this method.  Utilizing only the computation resources of ground devices causes delays.    We also observe that the fixed data offloading scheme achieves relatively low performance since the varying resource availability at the satellites are not considered in the scheme. If too many or too    few data samples are offloaded to the space layer,  the training process can be slowed down. This highlights the importance of adaptively optimizing   data offloading,  instead of   relying on a fixed solution.  We see that our approach, which leverages both the space and air layers,  attains superior performance compared to the baselines that utilize  only one of these layers.   The proposed scheme also outperforms the scheme with optimized fixed data offloading and the baseline that conducts data offloading proportional to the computational power of each node in SAGINs.   Further ablation studies on  the effect of each layer are provided in the next  subsection. The overall results highlight the significance of (i) inter-layer data offloading across space-air-ground, and (ii) adaptively conducting this to account for the network dynamics in SAGINs.
 
 \begin{figure}[t]
    \centering
 \begin{subfigure}[b]{0.48\textwidth}
         \centering
    \includegraphics[width=\textwidth]{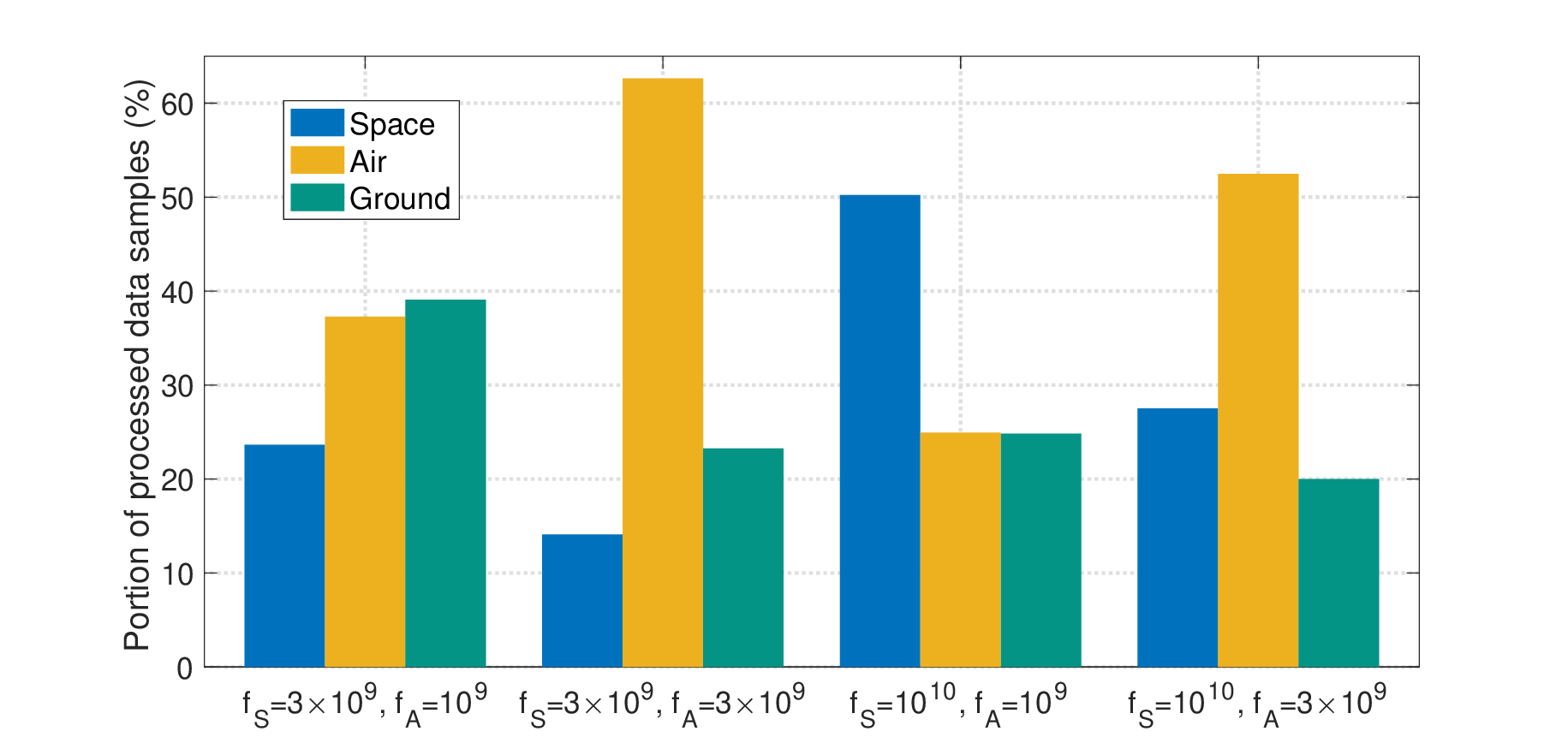}
    \vspace{-6mm}
       \caption{Portion of data samples processed at each layer. We increase the CPU frequency of the (i) air node, (ii) space node, and (iii) both. } \label{subfig_porton_smaples}
 \end{subfigure}  
       \begin{subfigure}[b]{0.24\textwidth}
         \centering
\includegraphics[width=\textwidth] {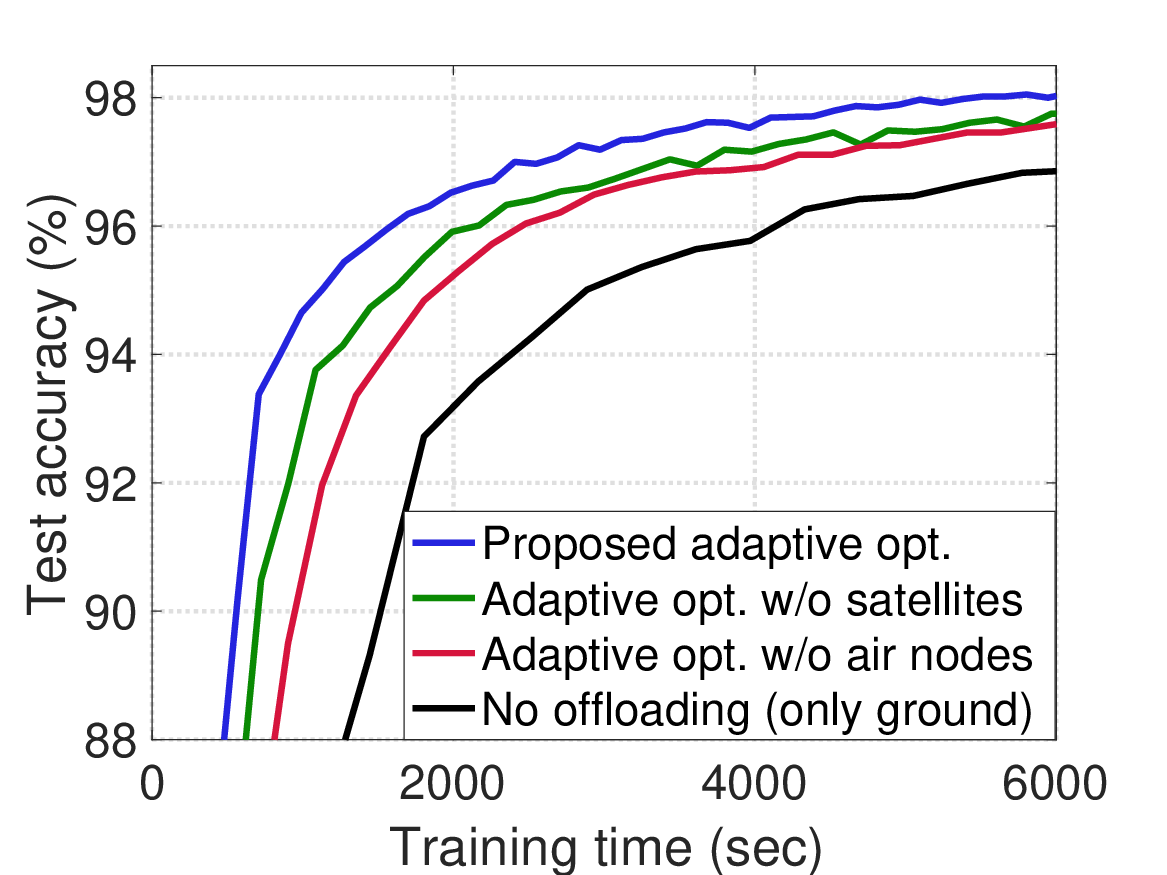}
\caption{Performance with $f_{\textsf{S}}=3\times 10^{9}$ Hz, $f_{\textsf{A}}=10^{9}$ Hz.} \label{dkdohg}
 \end{subfigure}  
    \begin{subfigure}[b]{0.24\textwidth}
         \centering
\includegraphics[width=\textwidth]
{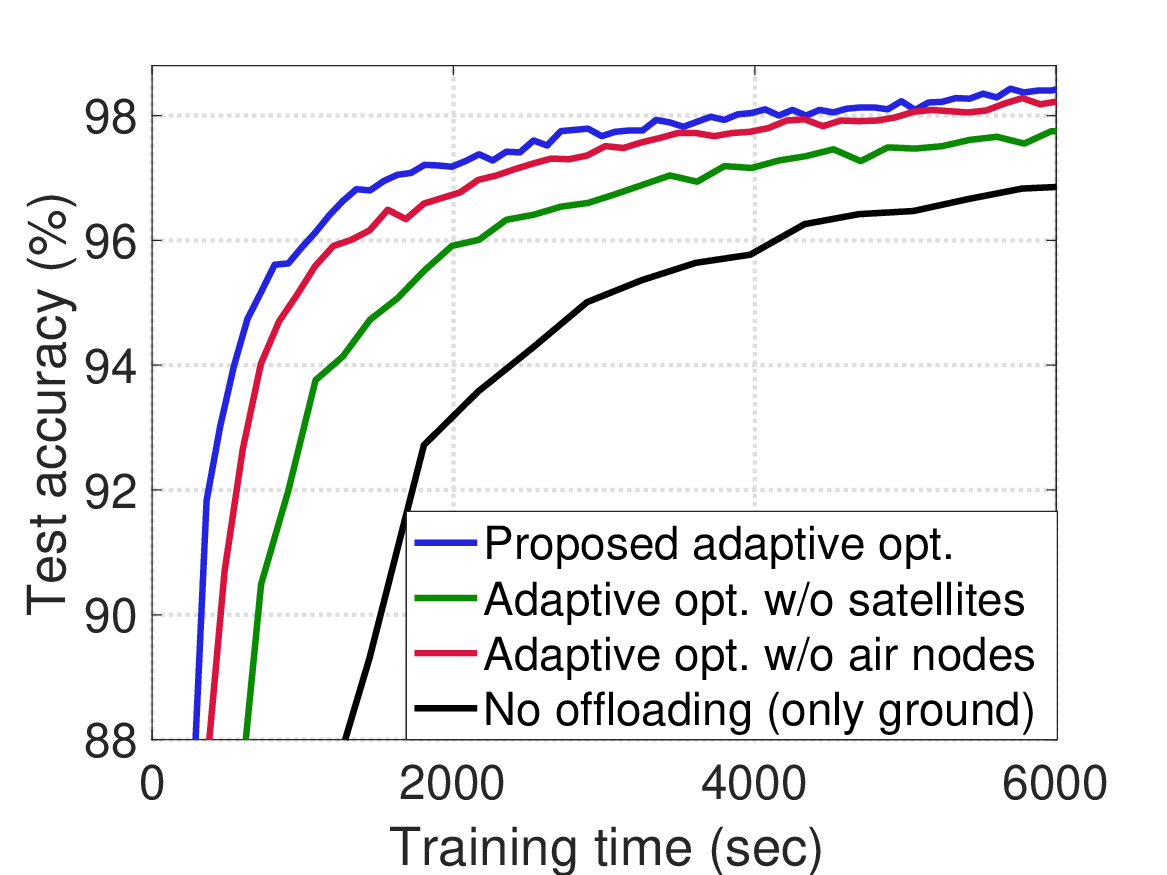}
\caption{Performance with $f_{\textsf{S}}=  10^{10}$ Hz, $f_{\textsf{A}}=10^{9}$ Hz.}\label{subfig___3}
  \end{subfigure}
     \caption{Effect of computation capabilities of space/air nodes. }
\label{fig:satellite_power}
     \end{figure}

\subsection{Varying System Parameters}\label{subsec:ablation}
 
\textbf{Effect of  computation powers of space and air nodes:}
 In Fig. \ref{fig:satellite_power}, we investigate   the effect of computational capabilities at different layers, which can be adapted based on the battery constraint of each node.  In extreme cases, the CPU frequency can drop to 0 if the battery is close to
0, and it can reach the maximum CPU frequency if the battery is sufficient.  MNIST is considered in a non-IID setup. For these experiments, we set the CPU frequencies of space and air nodes (i.e., $f_{\textsf{S}}$ and $f_{\textsf{A}}$, respectively) to the values depicted in the figure. Fig. \ref{subfig_porton_smaples} first shows the portion of data samples processed at each layer in our solution, depending on $f_{\textsf{S}}$ and $f_{\textsf{A}}$. In the first case with $f_{\textsf{S}}=3\times 10^{9}$ Hz and $f_{\textsf{A}}=10^{9}$ Hz (a scenario where both space and air nodes have insufficient battery), a relatively large number of data samples are allocated to the ground layer due to the limited batteries at the space and air nodes.  The air layer is allocated with more data samples than the space layer, indicating that the air nodes are considered more important than the satellites. This can be  also confirmed from  the accuracy curve in Fig. \ref{dkdohg}, by comparing the scheme  without satellites and  the one without air nodes. Now if   $f_{\textsf{A}}$ increases from $10^{9}$ Hz to $3\times 10^{9}$ Hz (i.e., a scenario where the air node has more battery compared to the previous case), the portion of data samples processed at the air node becomes more dominant.  On the other hand, if we increase $f_{\textsf{S}}$ from $10^{9}$ Hz to $10^{10}$ Hz (i.e., if the satellite has sufficient battery) while setting $f_{\textsf{A}}=10^{9}$ Hz, the role of the space layer becomes crucial,   as   also verified in Fig.  \ref{subfig___3}. Finally, when both space and air layers have sufficient resources ($f_{\textsf{S}}=  10^{10}$ Hz and $f_{\textsf{A}}=3\times10^{9}$ Hz), only $20\%$ of data samples are allocated to the ground layer. This allocation  is the minimum amount of data that should be processed at the ground layer considering the portion of non-sensitive samples ($\alpha=0.8$). Again, the   results underscore the significance of  taking advantage of the computation resources across all layers in SAGINs during the FL process.

\begin{figure}[t]
    \centering

       \begin{subfigure}[b]{0.24\textwidth}
         \centering
\includegraphics[width=\textwidth]{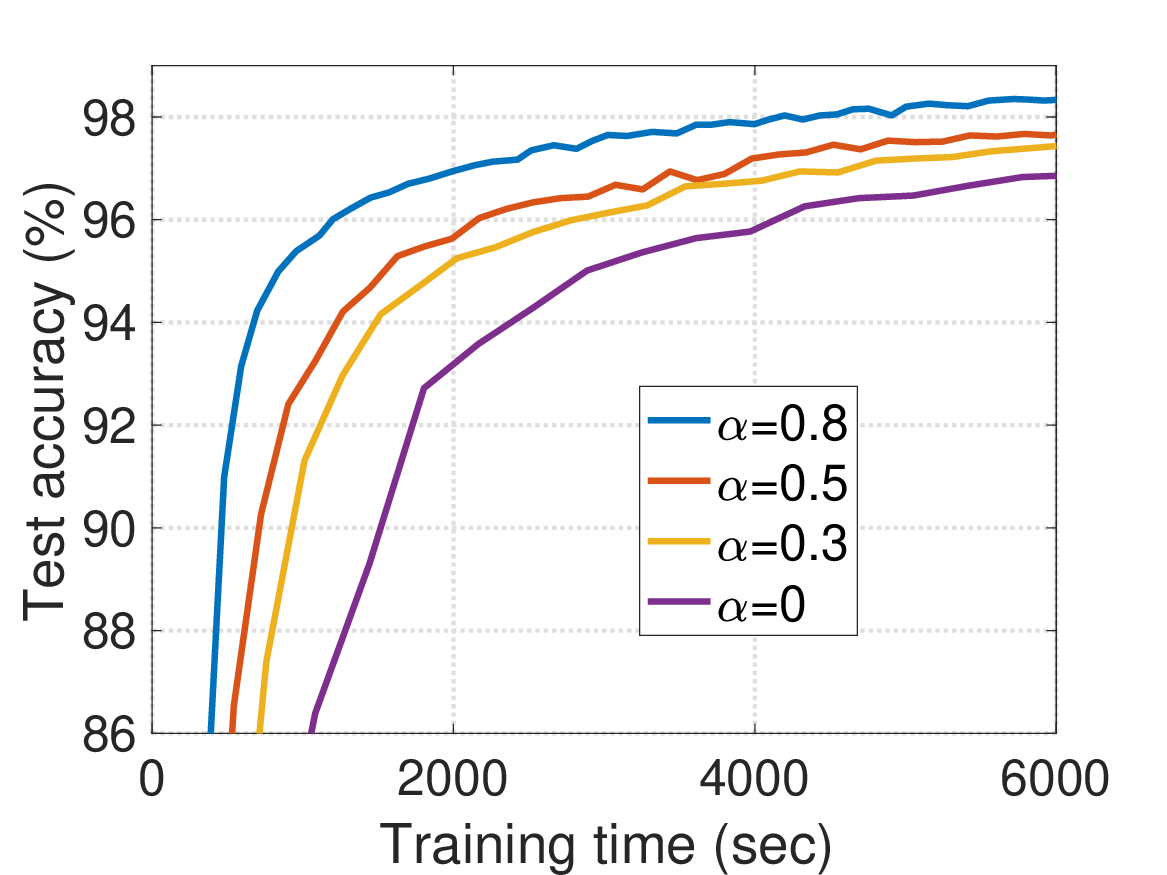}
    \caption{Accuracy versus training time on MNIST.} 
 \end{subfigure}  
    \begin{subfigure}[b]{0.24\textwidth}
         \centering
\includegraphics[width=\textwidth]
{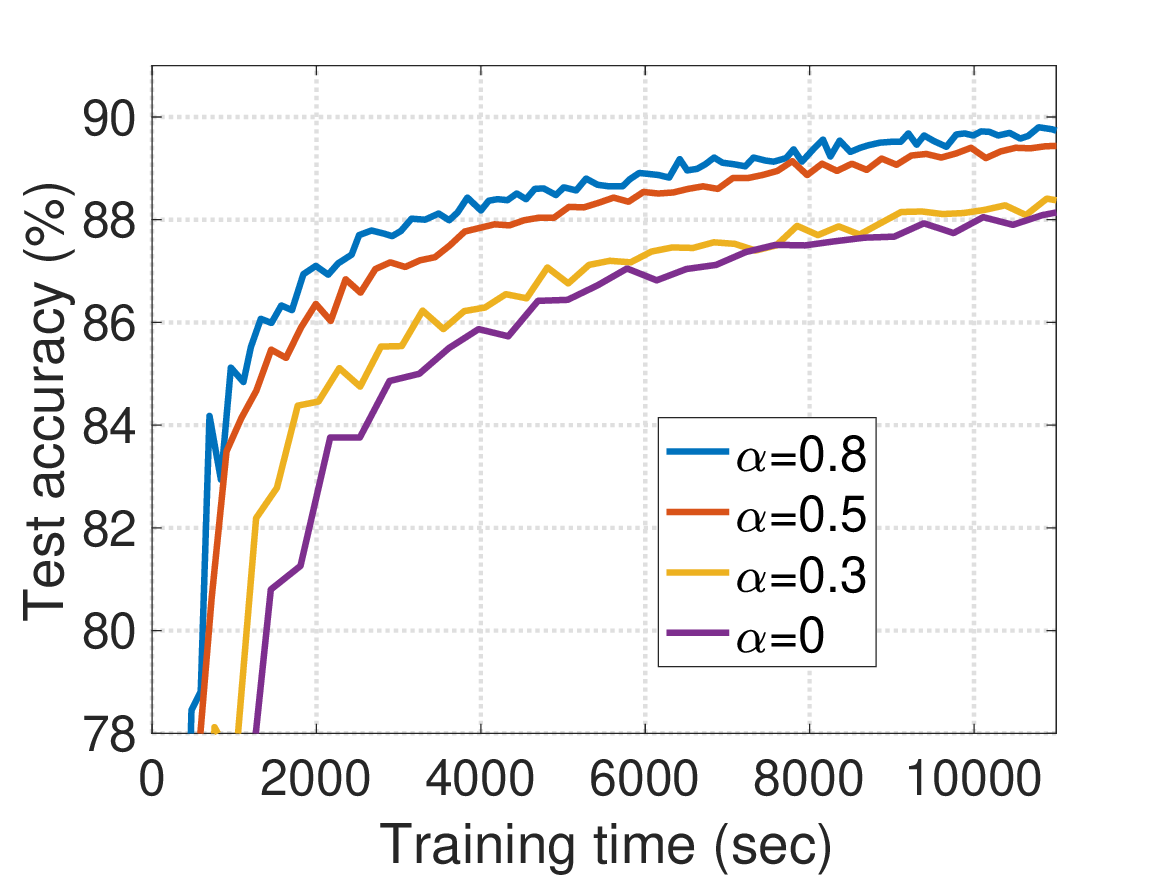}
    \caption{Accuracy versus training time on FMNIST.} 
  \end{subfigure}
       \begin{subfigure}[b]{0.24\textwidth}
         \centering
\includegraphics[width=\textwidth]{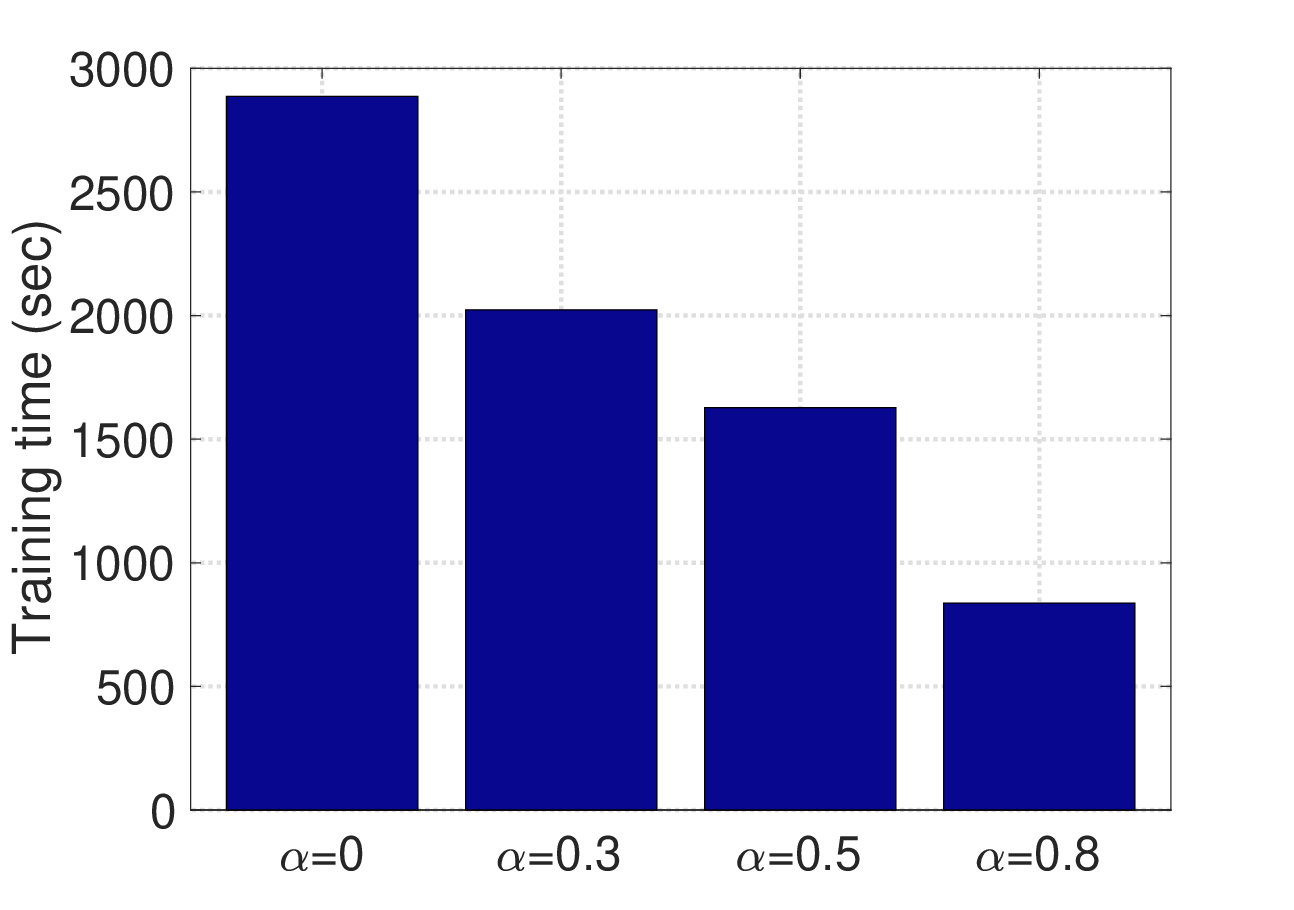}
    \caption{Training time to achieve 95\% accuracy on MNIST.} 
 \end{subfigure} 
   \begin{subfigure}[b]{0.24\textwidth}
         \centering
\includegraphics[width=\textwidth]
{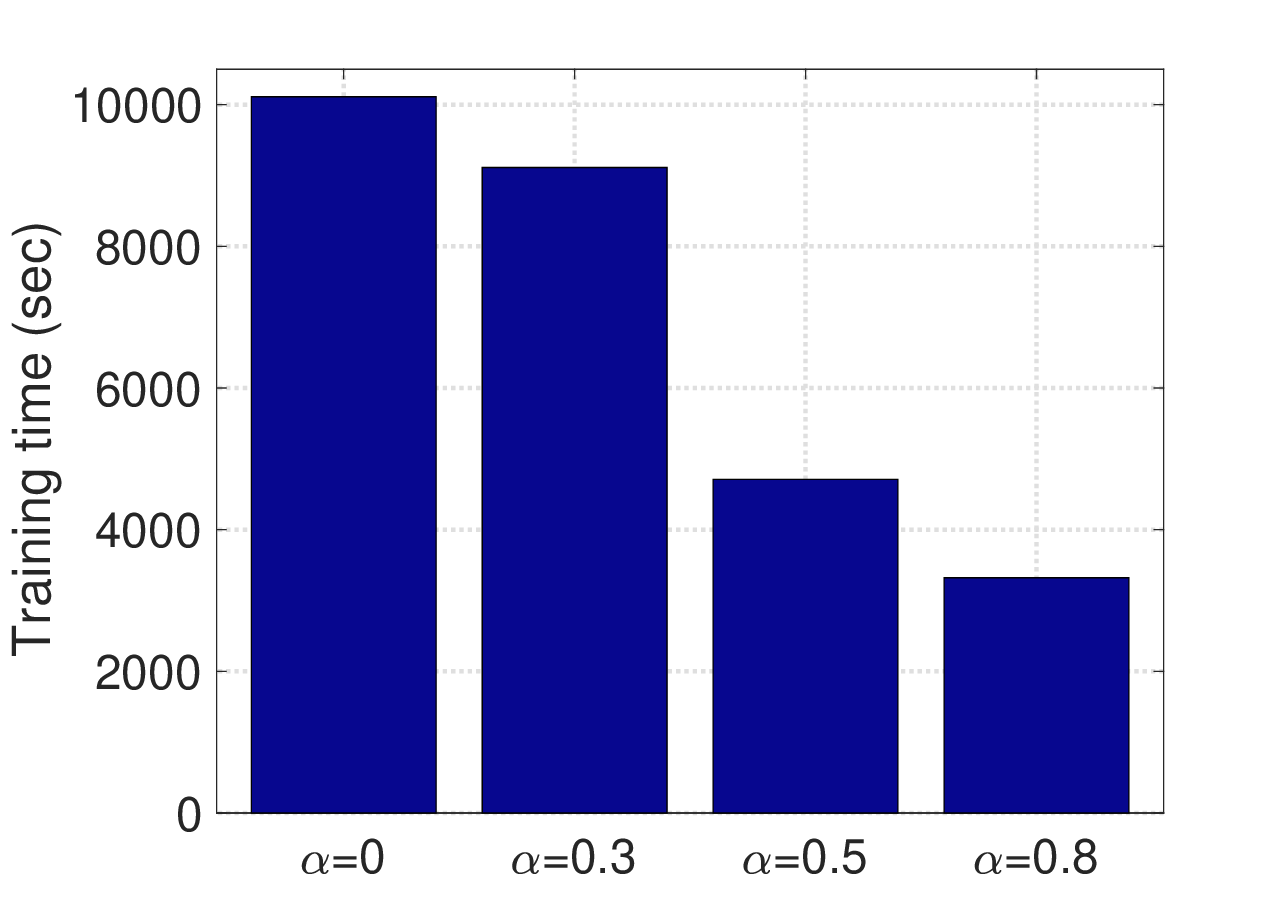}
    \caption{Training time to achieve 88\% accuracy 
 on FMNIST.} 
  \end{subfigure}
     \caption{Effect of the portion of non-sensitive  samples on our solution ($\alpha=0$ reduces to no data offloading).}
\label{fig:non_sensitive}
      \end{figure}

\textbf{Effect of the portion of non-sensitive data:}  In Fig. \ref{fig:non_sensitive}, we also study how the portion of non-sensitive samples $\alpha$ in each ground device's local dataset, affects the FL performance. If all data samples are privacy-sensitive (i.e., $\alpha=0$), the setting reduces to conventional FL with no data offloading. Accuracy curves and the training time required to achieve the target accuracy are reported under the non-IID setting.   We see that our methodology achieves the target accuracy faster as  $\alpha$ increases, since a larger $\alpha$ provides a more flexible data offloading solution for our scheme.

\textbf{Experiments with free-space path loss model:}   In practice, there often exists a line-of-sight link between the ground device and the air node. To validate the effectiveness of our approach under this setting, we use the free-space path loss model between the ground device and the air node, as adopted in \cite{fu2023federated, wu2018joint}, considering that the line-of-sight link is dominant. We also adopt this free-space path loss model for satellite communication, where there is always a line-of-sight link. Fig. \ref{fig:LOS} shows the
results using the CIFAR-10 dataset in both IID and non-IID scenarios. Compared to the setting with Rayleigh fading in Fig. \ref{fig:exp_overall}, all schemes in Fig. \ref{fig:LOS} achieve faster convergence with less training time due to the reduced communication delay. It can be seen that our scheme consistently
outperforms existing baselines by strategically taking advantage of the resources across space-air-ground integrated networks. The overall results further confirm
the effectiveness and applicability of our method.

\begin{figure}[t]
     \centering
       \begin{subfigure}[b]{0.24\textwidth}
         \centering
\includegraphics[width=\textwidth] {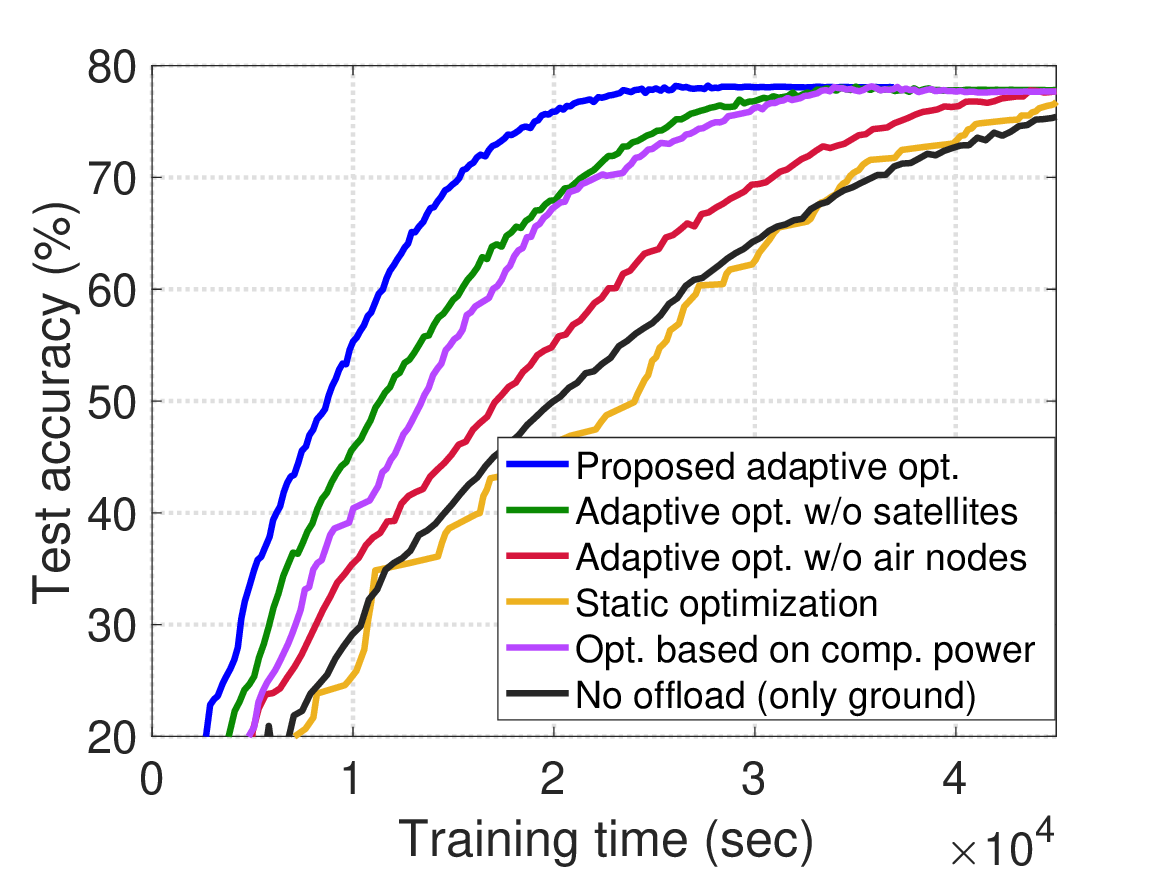}
    \caption{CIFAR-10, IID}
 \end{subfigure}  
          \begin{subfigure}[b]{0.24\textwidth}
         \centering
\includegraphics[width=\textwidth] {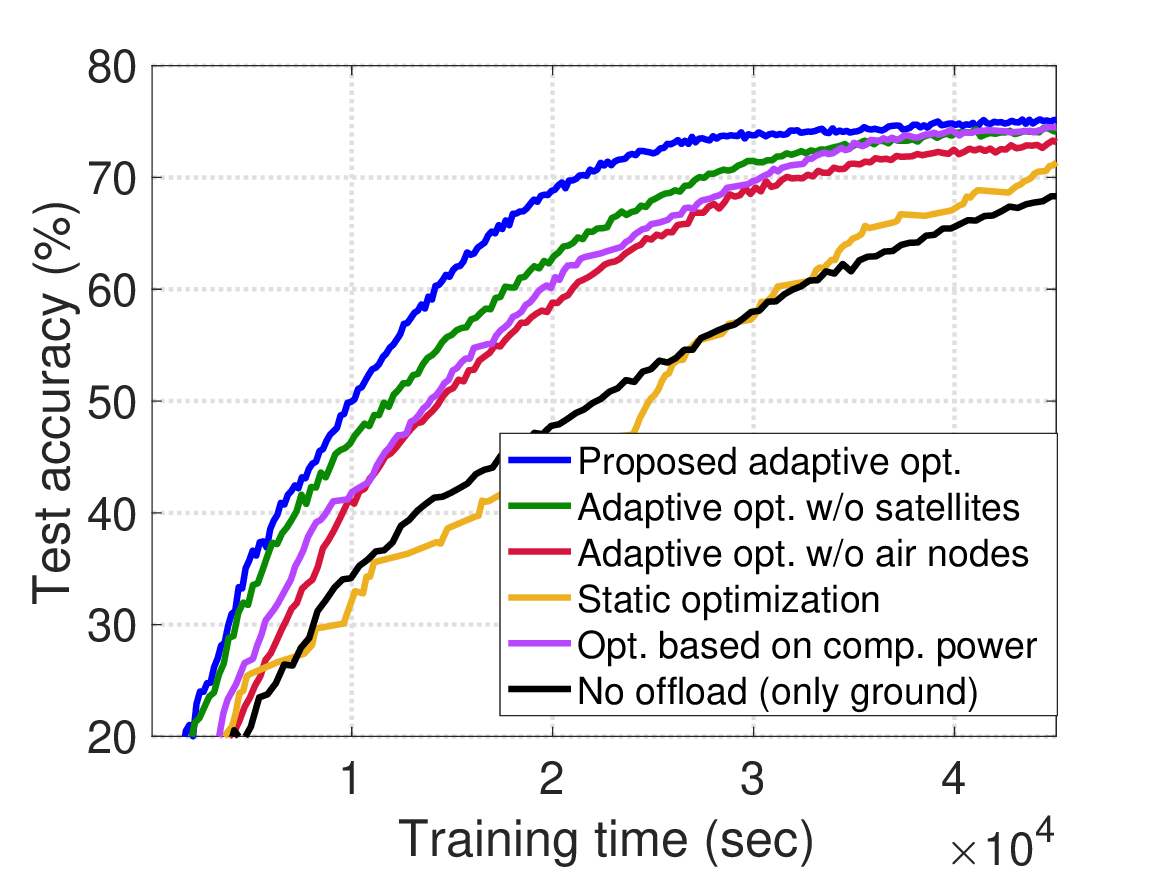}
    \caption{CIFAR-10, Non-IID}
 \end{subfigure} 
     \caption{Experimental results using the free-space pathloss model  with a dominant line-of-sight link.}
    \label{fig:LOS}
    \end{figure}

\section{Conclusion and Future Directions} \label{sec:conclusion}
In this paper, we proposed a distributed ML methodology that orchestrates FL in space-air-ground integrated networks. The core idea was to take advantage of both computation and communication resources of different layers in SAGINs to facilitate/accelerate FL in remote regions. We analytically characterized the latency of our method, and proposed an  adaptive data offloading solution to minimize the training time depending on the current resource availability. We also derived the convergence bound of the scheme and guaranteed convergence to a stationary point  for non-convex loss functions. The advantages of the proposed method as well as the effects of system parameters are investigated via simulations.

There are several promising  directions for future research in this domain. One  direction is to optimize the trajectories of air nodes to achieve a better performance within our framework. Another direction is to introduce an additional layer by considering   the base stations or   geostationary earth orbit satellites  that can  connect  to the LEO satellites, to  further enhance the performance. 
\appendix
\subsection{Proof of Theorem \ref{theorem_full}}\label{proof_theorem}
For ease of notation, we adopt the following equivalent form for the global loss function: 
\equa{\label{global_FL}
	   F(\mathbf{w})  & \triangleq \sum_{i\in \mathcal{P}} \lambda_{i}^{(r)} \ell_{i}^{(r+1)}(\mathbf{w}),
}
where $\mathcal{P} := \left\{(\textsf{G}, k) \mid k\in \mathcal{G} \right\} \cup \left\{(\textsf{A}, n) \mid n  \in \mathcal{A} \right\}\cup\{\textsf{S}\}$ and $\sum_{i\in \mathcal{P}} \lambda_{i}^{(r)} = 1$.
Additionally, we define \begin{small}$\Phi_r = \sum_{h=0}^{H-1} \sum_{i\in \mathcal{P}} \lambda_{i}^{(r)} \mathbb{E} \left\| \mathbf{w}^{(r,h)}_{i} - \mathbf{w}^{(r)} \right\|^2,$\end{small} where $\mathbf{w}^{(r,h)}_{i}$ represents the models parameter of device $i$ after $h$ local iterations within the $r$-th round. It denotes the intermediate model in (\ref{eq:local__GG}), (\ref{eq:local__AA}), or (\ref{eq:update_sat}).
To prove the convergence of the proposed algorithm, we first investigate how each round of training reduces the global loss, as formalized in Lemma \ref{progress_lem_full}. 
\begin{lemma}\label{progress_lem_full}
Under Assumptions \ref{assump_smooth}-\ref{bound_heterogeneity} and  $\eta^{(r)} \leq \frac{1}{2HL}$, we have
\vspace{-2mm}
\begin{small}
\begin{align}\label{progress_full_equation}
\mathbb{E} \!\left[ \!F\left(\mathbf{w}^{(r\!+\!1)}\right) \!\right] 
\! \leq & \mathbb{E}\left[F\left(\mathbf{w}^{(r)}\right)\right] 
\!-\! \frac{\eta^{(r)} H}{2}  \mathbb{E}\left\|\nabla F\left(\mathbf{w}^{(r)}\right) \right\|^2 \nonumber \\
& +\!  \frac{\eta^{(r)} L^2}{2} \!\Phi_r \!+\! (\eta^{(r)})^2 HL \sigma_g^2 \sum_{i\in \mathcal{P}} (\lambda_{i}^{(r)})^2. 
\end{align}
\end{small}
\end{lemma}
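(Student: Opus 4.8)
I would prove Lemma~\ref{progress_lem_full} as a standard one-communication-round descent estimate for local SGD. It uses only $L$-smoothness (Assumption~\ref{assump_smooth}) and the unbiasedness/bounded-variance property (Assumption~\ref{bound_variance}); Assumption~\ref{bound_heterogeneity} is listed for uniformity with the subsequent lemma and is not needed at this stage (it enters only when $\Phi_r$ is bounded). Since $F = \sum_{i\in\mathcal{P}}\lambda_i\,\ell_i^{(r+1)}$ is a convex combination of $L$-smooth functions, $F$ is $L$-smooth, so the quadratic upper bound gives $\mathbb{E}[F(\mathbf{w}^{(r+1)})] \le \mathbb{E}[F(\mathbf{w}^{(r)})] + \mathbb{E}\langle\nabla F(\mathbf{w}^{(r)}),\,\mathbf{w}^{(r+1)}-\mathbf{w}^{(r)}\rangle + \frac{L}{2}\mathbb{E}\|\mathbf{w}^{(r+1)}-\mathbf{w}^{(r)}\|^2$. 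From the local recursions (\ref{eq:local__GG})--(\ref{eq:update_sat}) and the aggregation rule, and using that the aggregation weights sum to one, the global step is $\mathbf{w}^{(r+1)}-\mathbf{w}^{(r)} = -\eta^{(r)}\mathbf{d}$ with $\mathbf{d} := \sum_{i\in\mathcal{P}}\lambda_i\sum_{h=0}^{H-1}\tilde{\nabla}\ell_i^{(r+1)}(\mathbf{w}_i^{(r,h)})$. I would set $\mathbf{g}_h := \sum_{i\in\mathcal{P}}\lambda_i\nabla\ell_i^{(r+1)}(\mathbf{w}_i^{(r,h)})$, so that $\mathbf{g}_0 = \nabla F(\mathbf{w}^{(r)})$ and $\bar{\mathbf{d}} := \sum_{h=0}^{H-1}\mathbf{g}_h$ is the conditional mean of $\mathbf{d}$ (each stochastic gradient conditioned on its own iterate via Assumption~\ref{bound_variance} and the tower property).

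\textbf{Inner-product term.} I would expand $-\eta^{(r)}\mathbb{E}\langle\nabla F(\mathbf{w}^{(r)}),\bar{\mathbf{d}}\rangle = -\eta^{(r)}\sum_{h=0}^{H-1}\mathbb{E}\langle\nabla F(\mathbf{w}^{(r)}),\mathbf{g}_h\rangle$ and apply $-\langle a,b\rangle = \frac{1}{2}\|a-b\|^2 - \frac{1}{2}\|a\|^2 - \frac{1}{2}\|b\|^2$ termwise with $a = \nabla F(\mathbf{w}^{(r)})$ and $b = \mathbf{g}_h$. The $-\frac{1}{2}\|a\|^2$ pieces give exactly $-\frac{\eta^{(r)}H}{2}\mathbb{E}\|\nabla F(\mathbf{w}^{(r)})\|^2$. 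The $\frac{1}{2}\|a-b\|^2 = \frac{1}{2}\|\nabla F(\mathbf{w}^{(r)})-\mathbf{g}_h\|^2$ pieces, after Jensen over the weights $\lambda_i$ and $L$-smoothness of each $\ell_i^{(r+1)}$, are at most $\frac{L^2}{2}\sum_{i}\lambda_i\|\mathbf{w}_i^{(r,h)}-\mathbf{w}^{(r)}\|^2$, whose $h$-sum-in-expectation is $\frac{L^2}{2}\Phi_r$, producing the $+\frac{\eta^{(r)}L^2}{2}\Phi_r$ term. The $-\frac{1}{2}\|b\|^2 = -\frac{1}{2}\|\mathbf{g}_h\|^2$ pieces I would keep as a reserve $-\frac{\eta^{(r)}}{2}\sum_{h}\mathbb{E}\|\mathbf{g}_h\|^2$.

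\textbf{Quadratic term and cancellation.} For $\frac{L(\eta^{(r)})^2}{2}\mathbb{E}\|\mathbf{d}\|^2$ I would use $\|\mathbf{d}\|^2 \le 2\|\bar{\mathbf{d}}\|^2 + 2\|\mathbf{d}-\bar{\mathbf{d}}\|^2$. For the stochastic part, unbiasedness together with the conditional uncorrelatedness of the per-step gradient noise — both across the $H$ local steps of a node and across distinct nodes, given the iterates — makes all cross terms vanish in expectation, leaving $\mathbb{E}\|\mathbf{d}-\bar{\mathbf{d}}\|^2 \le H\sigma_g^2\sum_{i}\lambda_i^2$, which supplies the last term $(\eta^{(r)})^2 HL\sigma_g^2\sum_{i}\lambda_i^2$ of (\ref{progress_full_equation}). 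For the mean part, $\|\bar{\mathbf{d}}\|^2 = \|\sum_h\mathbf{g}_h\|^2 \le H\sum_h\|\mathbf{g}_h\|^2$, so it contributes $+HL(\eta^{(r)})^2\sum_h\mathbb{E}\|\mathbf{g}_h\|^2$; adding the reserved $-\frac{\eta^{(r)}}{2}\sum_h\mathbb{E}\|\mathbf{g}_h\|^2$, the net coefficient of $\sum_h\mathbb{E}\|\mathbf{g}_h\|^2$ is $\frac{\eta^{(r)}}{2}(2HL\eta^{(r)}-1) \le 0$ precisely because $\eta^{(r)} \le \frac{1}{2HL}$, so this term is discarded. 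What survives is exactly (\ref{progress_full_equation}).

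\textbf{Main obstacle.} The only step that is not purely mechanical is the stochastic bookkeeping in the quadratic term: obtaining $\sum_i\lambda_i^2$ — rather than the cruder $(\sum_i\lambda_i)^2 = 1$ — requires treating the local-SGD increments as a martingale-difference-type sequence conditioned on the running iterates, so that cross terms (within a node across its $H$ updates, and between different nodes) drop out; and one must pair the $\|\bar{\mathbf{d}}\|^2$ contribution with the reserved $-\sum_h\|\mathbf{g}_h\|^2$ so that the step-size ceiling $\eta^{(r)}\le\frac{1}{2HL}$ is exactly what renders the leftover non-positive. A minor point I would state explicitly is that the satellite's mini-batch of size $|D_{\textsf{S}}^{(r+1)}|/H$ makes its successive updates dependent draws from the dataset, but since Assumption~\ref{bound_variance} is phrased directly in terms of each per-step gradient estimate, the argument needs no modification there.
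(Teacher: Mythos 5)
Your proposal is correct and follows essentially the same route as the paper's proof: the $L$-smoothness descent inequality, the identity $-\langle a,b\rangle=\tfrac12\|a-b\|^2-\tfrac12\|a\|^2-\tfrac12\|b\|^2$ on the inner-product term (yielding the $-\tfrac{\eta^{(r)}H}{2}\mathbb{E}\|\nabla F\|^2$ and $\tfrac{\eta^{(r)}L^2}{2}\Phi_r$ pieces), a noise/mean split of the quadratic term with vanishing cross terms giving $H\sigma_g^2\sum_i(\lambda_i^{(r)})^2$, and cancellation of the residual full-gradient term via $\eta^{(r)}\le\tfrac{1}{2HL}$. The only cosmetic difference is that you apply the identity termwise over $h$ and then invoke $\|\sum_h \mathbf{g}_h\|^2\le H\sum_h\|\mathbf{g}_h\|^2$, whereas the paper applies it once to the $h$-averaged gradient and cancels directly—both give the identical bound under the identical step-size condition.
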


To characterize the evolution of $\mathbb{E}\left\|\nabla F\left(\mathbf{w}^{(r)}\right) \right\|^2$ as shown in Theorem \ref{theorem_full}, we need  to further   bound the term $\Phi_r = \sum_{h=0}^{H-1} \sum_{i\in \mathcal{P}} \lambda_{i}^{(r)} \mathbb{E} \left\| \mathbf{w}^{(r,h)}_{i} \!-\! \mathbf{w}^{(r)} \right\|^2$ that appears in Lemma \ref{progress_lem_full}. We establish an upper bound for $\Phi_r$ in Lemma \ref{drift_lem}.
\begin{lemma}
\label{drift_lem}
Under Assumptions \ref{assump_smooth}-\ref{bound_heterogeneity} and $\eta^{(r)} \leq \frac{1}{2HL}$, we have 
\begin{small}
\equa{\label{}
 \Phi_r  \!\leq \!2(1\!+\!c_r)H^3(\eta^{(r)})^2 \mathbb{E} \!\left \|\nabla F(\mathbf{w}^{(r)}) \right \|^2 \!+\! \frac{2}{3}H^3(\eta^{(r)})^2 (\sigma_g^2 \!+\! 3 \delta_r^2). \nonumber 
}
\end{small}
\end{lemma}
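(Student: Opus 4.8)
The plan is to bound, for each node $i\in\mathcal{P}$, the per-node drift $\Phi_r^{(i)}:=\sum_{h=0}^{H-1}\mathbb{E}\|\mathbf{w}_i^{(r,h)}-\mathbf{w}^{(r)}\|^2$ by a quantity that does not depend on $i$, and then recover the bound on $\Phi_r=\sum_{i\in\mathcal{P}}\lambda_i^{(r)}\Phi_r^{(i)}$ using $\sum_{i\in\mathcal{P}}\lambda_i^{(r)}=1$. Unrolling the local update rule (one of (\ref{eq:local__GG}), (\ref{eq:local__AA}), (\ref{eq:update_sat})) from the shared initialization $\mathbf{w}_i^{(r,0)}=\mathbf{w}^{(r)}$ gives $\mathbf{w}_i^{(r,h)}-\mathbf{w}^{(r)}=-\eta^{(r)}\sum_{j=0}^{h-1}\tilde{\nabla}\ell_i^{(r+1)}(\mathbf{w}_i^{(r,j)})$, so Jensen's inequality yields $\mathbb{E}\|\mathbf{w}_i^{(r,h)}-\mathbf{w}^{(r)}\|^2\le h(\eta^{(r)})^2\sum_{j=0}^{h-1}\mathbb{E}\|\tilde{\nabla}\ell_i^{(r+1)}(\mathbf{w}_i^{(r,j)})\|^2$.

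Next I would estimate each summand. Conditioning on $\mathbf{w}_i^{(r,j)}$ and using the unbiasedness and variance bound of Assumption \ref{bound_variance} gives $\mathbb{E}\|\tilde{\nabla}\ell_i^{(r+1)}(\mathbf{w}_i^{(r,j)})\|^2\le\mathbb{E}\|\nabla\ell_i^{(r+1)}(\mathbf{w}_i^{(r,j)})\|^2+\sigma_g^2$. Writing $\nabla\ell_i^{(r+1)}(\mathbf{w}_i^{(r,j)})=[\nabla\ell_i^{(r+1)}(\mathbf{w}_i^{(r,j)})-\nabla\ell_i^{(r+1)}(\mathbf{w}^{(r)})]+\nabla\ell_i^{(r+1)}(\mathbf{w}^{(r)})$, applying $\|a+b\|^2\le 2\|a\|^2+2\|b\|^2$, bounding the first bracket by $L\|\mathbf{w}_i^{(r,j)}-\mathbf{w}^{(r)}\|$ via $L$-smoothness (Assumption \ref{assump_smooth}), and bounding $\|\nabla\ell_i^{(r+1)}(\mathbf{w}^{(r)})\|^2\le 2(1+c_r)\mathbb{E}\|\nabla F(\mathbf{w}^{(r)})\|^2+2\delta_r^2$ via Assumption \ref{bound_heterogeneity} (plus one more use of $\|a+b\|^2\le 2\|a\|^2+2\|b\|^2$), I obtain $\mathbb{E}\|\tilde{\nabla}\ell_i^{(r+1)}(\mathbf{w}_i^{(r,j)})\|^2\le 2L^2\mathbb{E}\|\mathbf{w}_i^{(r,j)}-\mathbf{w}^{(r)}\|^2+4(1+c_r)\mathbb{E}\|\nabla F(\mathbf{w}^{(r)})\|^2+4\delta_r^2+\sigma_g^2$. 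Plugging this back, summing over $h=0,\dots,H-1$, and using $\sum_{h=0}^{H-1}h^2\le H^3/3$ together with $\sum_{h=0}^{H-1}h\sum_{j=0}^{h-1}\mathbb{E}\|\mathbf{w}_i^{(r,j)}-\mathbf{w}^{(r)}\|^2\le\frac{H^2}{2}\Phi_r^{(i)}$ produces the self-referential inequality $\Phi_r^{(i)}\le (\eta^{(r)})^2H^2L^2\Phi_r^{(i)}+\frac{H^3}{3}(\eta^{(r)})^2\big(4(1+c_r)\mathbb{E}\|\nabla F(\mathbf{w}^{(r)})\|^2+\sigma_g^2+4\delta_r^2\big)$.

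Here the step-size hypothesis $\eta^{(r)}\le\frac{1}{2HL}$ is exactly what makes the self-coefficient $(\eta^{(r)})^2H^2L^2\le\frac14<1$; moving that term to the left and dividing by $\frac34$ loses only a bounded factor ($\le\frac43$) and leaves a closed-form bound on $\Phi_r^{(i)}$ that is uniform in $i$. Taking the $\lambda_i^{(r)}$-weighted sum over $i\in\mathcal{P}$ and using $\sum_{i\in\mathcal{P}}\lambda_i^{(r)}=1$ transfers the same bound to $\Phi_r$; since the accumulated constants (here $\frac{16}{9}$ in front of $(1+c_r)\mathbb{E}\|\nabla F(\mathbf{w}^{(r)})\|^2$ and of $\delta_r^2$, and $\frac49$ in front of $\sigma_g^2$) sit below the coefficients $2(1+c_r)$, $2$, and $\frac23$ appearing in the statement, a final round of constant bookkeeping gives $\Phi_r\le 2(1+c_r)H^3(\eta^{(r)})^2\mathbb{E}\|\nabla F(\mathbf{w}^{(r)})\|^2+\frac23 H^3(\eta^{(r)})^2(\sigma_g^2+3\delta_r^2)$.

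The crux I expect is closing the self-referential inequality: the smoothness step reintroduces the earlier drifts $\mathbb{E}\|\mathbf{w}_i^{(r,j)}-\mathbf{w}^{(r)}\|^2$ for $j<h$ on the right-hand side, so the argument terminates only because $\eta^{(r)}$ is taken small, and one must keep the constants spawned along the way (from Jensen, from the repeated $\|a+b\|^2\le 2\|a\|^2+2\|b\|^2$, from the arithmetic sums $\sum_{h}h$ and $\sum_h h^2$, and from solving the recursion) comfortably under the clean coefficients claimed. Everything else is routine bookkeeping.
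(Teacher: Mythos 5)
Your proposal is correct and follows essentially the same route as the paper's proof: unroll the $H$ local SGD steps, apply Jensen, bound the stochastic-gradient second moment via the variance bound, smoothness, and the heterogeneity assumption, and then close the resulting self-referential drift inequality using $\eta^{(r)} \leq \frac{1}{2HL}$. The only differences are cosmetic bookkeeping — you work with the per-node drift and nested two-term splits (yielding constants $\tfrac{16}{9}$, $\tfrac{4}{9}$, $\tfrac{16}{9}$), whereas the paper works directly with the $\lambda$-weighted drift $s^{(r,\tau)}$ and a single three-term split — and both land comfortably within the stated coefficients.
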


The proofs of Lemmas \ref{progress_lem_full} and \ref{drift_lem} are provided in Appendix \ref{proof_lemmas}. Combining  Lemmas \ref{progress_lem_full} and \ref{drift_lem}, we obtain 
\vspace{-4mm}

\begin{small}
\begin{align}\label{}
& \mathbb{E}\! \left[ \! F\left(\mathbf{w}^{(r\!+\!1)}\right) \! \right] 
  \leq \mathbb{E}\! \left[ \! F\left(\mathbf{w}^{(r)}\right) \! \right]  
\!-\! \frac{\eta^{(r)} H}{2}  \mathbb{E} \left\|\nabla F\left(\mathbf{w}^{(r)}\right)\right\|^2  \nonumber \\
& +\!  (\eta^{(r)})^2 HL \sigma_g^2 \sum_{i\in \mathcal{P}} (\lambda_{i}^{(r)})^2 \!+\!  \frac{\eta^{(r)} L^2}{2} \left\{2(1+c_r)H^3(\eta^{(r)})^2 \right. \nonumber \\
&  \left. \times \mathbb{E} \left \|\nabla F(\mathbf{w}^{(r)}) \right \|^2 + \frac{2}{3}H^3(\eta^{(r)})^2 (\sigma_g^2 + 3 \delta_r^2)\right\}. \nonumber 
\end{align}
\end{small}

Reorganizing the above inequality and utilizing (\ref{theorem_eta_full}) give rise to the following result:

\vspace{-4mm}

\begin{small}
\begin{align}\label{}
& \eta^{(r)} \mathbb{E} \left\|\nabla F\left(\mathbf{w}^{(r)}\right)\right\|^2 \leq 4 \frac{\mathbb{E} \! \left[ \!F\left(\mathbf{w}^{(r)}\right)\!\right] \!-\!  \mathbb{E} \left[ \!F\left(\mathbf{w}^{(r\!+\!1)}\right) \! \right]}{ H } \nonumber \\
&+\! 4 (\eta^{(r)})^2 L \sigma_g^2 \sum_{i\in \mathcal{P}} (\lambda_{i}^{(r)})^2 \!+\!  2(\eta^{(r)})^3 H^2 L^2 \sigma_g^2 + 4(\eta^{(r)})^3 H^2 L^2 \delta_r^2. \nonumber 
\end{align}
\end{small}

By telescopic expansion of the above inequality from $r=0$ to $R-1$, we can obtain the result shown in Theorem \ref{theorem_full}.

\vspace{-2mm}
\subsection{Proof of Lemmas}\label{proof_lemmas}

\subsubsection{Proof of Lemma \ref{progress_lem_full}}
 For ease of notation, we denote $\bm e_i^{(r,h)}, \ i \in \mathcal{P} = \left\{(\textsf{G}, k) \mid k\in \mathcal{G} \right\} \cup \left\{(\textsf{A}, n) \mid n  \in \mathcal{A} \right\}\cup\{\textsf{S}\}$ as a mini-batch gradient $\tilde{\nabla}\ell_{\textsf{G}, k}^{(r+1)}(\mathbf{w}^{(r,h)}_{\textsf{G}, k}), k\in \mathcal{G}$, $\tilde{\nabla}\ell_{\textsf{A}, n}^{(r+1)}(\mathbf{w}^{(r,h)}_{\textsf{A}, n}), n\in \mathcal{A}$, or $\tilde{\nabla}\ell_{\textsf{S}}^{(r+1)}(\mathbf{w}^{(r,h)}_{\textsf{S}})$. 

Due to the smoothness of local loss functions described in Assumption \ref{assump_smooth}, the global loss function $F(\mathbf{w})$ is $L$-smooth as well.
Based on the iteration 
$\sum_{h=0}^{H-1}  \sum_{i \in \mathcal{P}} \lambda_{i}^{(r)} \bm e_i^{(r,h)} = \mathbf{w}^{(r\!+\!1)} - \mathbf{w}^{(r)}$, we have
\vspace{-4mm}

\begin{small}
\begin{align}\label{descent_lemma_1}
& \mathbb{E} \!\left[ \!F\left(\mathbf{w}^{(r\!+\!1)}\right) \!\right] 
 \!\leq \! \mathbb{E} \!\left[\!F\left(\mathbf{w}^{(r)}\right)\!\right] \!+\! (\eta^{(r)})^2\frac{L}{2} \underbrace{\mathbb{E} \! \left \|\! \sum_{h=0}^{H-1} \! \sum_{i \in \mathcal{P}} \! \lambda_{i}^{(r)} \bm e_i^{(r,h)} \! \right\|^2 }_{\Psi_2}
 \nonumber \\
 & \qquad \underbrace{- \eta^{(r)}  \mathbb{E} \left \langle \! \nabla \!F\left(\mathbf{w}^{(r)}\right), 
 \sum_{h=0}^{H-1} \! \sum_{i \in \mathcal{P}} \! \lambda_{i}^{(r)} \bm e_i^{(r,h)} \! \right  \rangle }_{\Psi_1}.  
\end{align}
\end{small}

We next bound $\Psi_1$ and $\Psi_2$. 
First, for $\Psi_1$, we have 
\begin{small}
	\[
\Psi_1 \!=\! - \eta^{(r)} H \mathbb{E} \left \langle \nabla F\left(\mathbf{w}^{(r)}\right) , \frac{1}{H} \sum_{h=0}^{H-1}  \sum_{i \in \mathcal{P}} \lambda_{i}^{(r)} {\nabla} \ell_{i}^{(r\!+\!1)}(\mathbf{w}^{(r,h)}_{i})  \right \rangle.
\]
\end{small}

Due to \begin{small}
    $-\langle \mathbf{a},  \mathbf{b}\rangle = \frac{1}{2} \|\mathbf{a}-\mathbf{b}\|^2 - \frac{1}{2}\|\mathbf{a}\|^2 - \frac{1}{2}\|\mathbf{b}\|^2$
\end{small}, we have 
\begin{small}
\equa{\label{inne_gra_app}
& \Psi_1 \!=\! -\! \frac{\eta^{(r)}\! H}{2} \!\left\{ \!\mathbb{E} \!\left\|\!\frac{1}{H} \!\sum_{h=0}^{H\!-\!1}\! \sum_{i \in \mathcal{P}} \! \lambda_{i}^{(r)} {\nabla} \ell_{i}^{(r\!+\!1)}\!(\mathbf{w}^{(r,h)}_{i}) \!\right\|^2 \!+\! \mathbb{E} \left\|\!\nabla \!F\left(\!\mathbf{w}^{(r)}\!\right) \!\right\|^2 \!\right\} \\
&\! +\! \frac{\eta^{(r)} H}{2} \underbrace{ \mathbb{E} \left\|\frac{1}{H} \sum_{h=0}^{H-1} \sum_{i \in \mathcal{P}} \lambda_{i}^{(r)} {\nabla} \ell_{i}^{(r\!+\!1)}(\mathbf{w}^{(r,h)}_{i})- \nabla F(\mathbf{w}^{(r)}) \right\|^2 }_{\Psi_3}. \nonumber
}
\end{small}

Now based on   $ \sum_{i \in \mathcal{P}} \lambda_{i}^{(r)} {\nabla} \ell_{i}^{(r\!+\!1)}(\mathbf{w}^{(r)}) = \nabla F(\mathbf{w}^{(r)})$, $\sum_{i \in \mathcal{P}} \lambda_{i}^{(r)}  = 1$,  the Jensen's inequality, and Assumption \ref{assump_smooth}, we can  bound $\Psi_3$ as
\vspace{-4mm}

\begin{small}
\begin{align}
\Psi_3 
\leq & \frac{L^2}{H} \sum_{h=0}^{H-1} \sum_{i\in \mathcal{P}} \lambda_{i}^{(r)} \mathbb{E} \left\| \mathbf{w}^{(r,h)}_{i} - \mathbf{w}^{(r)} \right\|^2, \nonumber 
\end{align}
\end{small}
where the inequality comes from Assumption \ref{assump_smooth}.

For $\Psi_2$, by using the Cauchy-Schwartz inequality, we have
\vspace{-4mm}

\begin{small}
\begin{align}
\Psi_2 \leq & 2 \mathbb{E} \left \| \sum_{h=0}^{H-1} \sum_{i\in \mathcal{P}} \lambda_{i}^{(r)}  \left(\tilde{\nabla} \ell_{i}^{(r\!+\!1)}(\mathbf{w}^{(r,h)}_{i}) - {\nabla} \ell_{i}^{(r\!+\!1)}(\mathbf{w}^{(r,h)}_{i}) \right) \right\|^2 \nonumber \\
& + 2 \mathbb{E} \left \|\sum_{h=0}^{H-1} \sum_{i\in \mathcal{P}} \lambda_{i}^{(r)} {\nabla} \ell_{i}^{(r\!+\!1)}(\mathbf{w}^{(r,h)}_{i}) \right\|^2 \nonumber \\
\leq & 2 H \sigma_g^2 \sum_{i\in \mathcal{P}} (\lambda_{i}^{(r)})^2 + 2 \mathbb{E} \left \|\sum_{h=0}^{H-1} \sum_{i\in \mathcal{P}} \lambda_{i}^{(r)} {\nabla} \ell_{i}^{(r\!+\!1)}(\mathbf{w}^{(r,h)}_{i}) \right\|^2.\nonumber
\end{align}
\end{small}

 Utilizing $\eta^{(r)} \leq \frac{1}{2HL}$ and combining $\Psi_1$, $\Psi_2$, and $\Psi_3$ with (\ref{descent_lemma_1})  give rise to Lemma \ref{progress_lem_full}.

\subsubsection{Proof of Lemma \ref{drift_lem}}
We first denote $s^{(r,\tau)} = \sum_{i\in \mathcal{P}} \lambda_{i}^{(r)} \mathbb{E} \left\| \mathbf{w}^{(r,\tau)}_{i} - \mathbf{w}^{(r)} \right\|^2$, which can be bounded as

\begin{small}
\begin{align}
&s^{(r,\tau)} \!\!=\! (\eta^{(r)})^2 \!\sum_{i\in \mathcal{P}} \!\lambda_{i}^{(r)}  \mathbb{E} \!\left\|  \sum_{k=0}^{\tau-1} \!\bm e_i^{(t,k)} \!\right\|^2 
\!\!\leq \!\! \tau (\eta^{(r)})^2 \!\sum_{h=0}^{\tau-1} \!\sum_{i\in \mathcal{P}} \!\lambda_{i}^{(r)}\!
\mathbb{E}\! \left\|  \bm e_i^{(r,h)} \!\right\|^2 \nonumber \\
=& \tau (\eta^{(r)})^2 \sum_{h=0}^{\tau-1} \sum_{i\in \mathcal{P}} \lambda_{i}^{(r)} \mathbb{E} \left \|
 \bm e_i^{(r,h)} - \nabla {\ell}_i^{(r\!+\!1)}(\mathbf{w}_i^{(r,h)}) \right\|^2 \nonumber \\ 
 &+ \tau (\eta^{(r)})^2 \sum_{h=0}^{\tau-1} \sum_{i\in \mathcal{P}} \lambda_{i}^{(r)} \mathbb{E} \left \| \nabla {\ell}_i^{(r\!+\!1)}(\mathbf{w}_i^{(r,h)}) \right\|^2 \nonumber \\
 \leq & \tau (\eta^{(r)})^2 \sum_{h=0}^{\tau-1} \underbrace{ \sum_{i\in \mathcal{P}} \lambda_{i}^{(r)} \mathbb{E} \left \| \nabla {\ell}_i^{(r\!+\!1)}(\mathbf{w}_i^{(r,h)})  \right\|^2}_{\Psi_4} +  (\eta^{(r)})^2 \tau^2  \sigma_g^2. \label{T_4_T_6}
\end{align}
\end{small}

Next, we establish an upper bound for $\Psi_4$ as
\vspace{-4mm}

\begin{small}
\begin{align}
\Psi_4
&= \sum_{i\in \mathcal{P}} \lambda_{i}^{(r)} \mathbb{E} \left \| \nabla {\ell}_i^{(r\!+\!1)}(\mathbf{w}_i^{(r,h)}) \!\mp\! \nabla {\ell}_i^{(r\!+\!1)}(\mathbf{w}^{(r)})   \! \mp\! \nabla F(\mathbf{w}^{(r)}) \right\|^2 \nonumber\\
\leq & 3L^2 \!\sum_{i\in \mathcal{P}} \lambda_{i}^{(r)} \mathbb{E} \!\left \|\! \mathbf{w}_i^{(r,h)} \!-\! \mathbf{w}^{(r)} \!\right\|^2  \!+\! (3\!+\!3c_r) \mathbb{E} \!\left \|\!F(\mathbf{w}^{(r)}) \!\right\|^2\!+\!  3\delta_r^2, \nonumber
\end{align}
\end{small}

\noindent where the last inequality comes from Assumption \ref{bound_heterogeneity}. By plugging the upper bound of $\Psi_4$  into (\ref{T_4_T_6}) and taking summation over $\tau$ from $1$ to $H-1$, we obtain

\vspace{-4mm}

\begin{small}
\begin{align}
\sum_{\tau=1}^{H-1} \!s^{(r,\tau)} 
\!
\leq & 2 H^2 L^2 (\eta^{(r)})^2  \sum_{h=0}^{H\!-\!1}\! s^{(r,h)} 
\!+\! (1\!+\!c_r) \!H^3\!(\eta^{(r)})^2  \nonumber \\
&\times \mathbb{E} \left \|\nabla F(\mathbf{w}^{(r)}) \right \|^2 +  H^3(\eta^{(r)})^2 (\frac{1}{3}\sigma_g^2 + \delta_r^2),\label{iterate_s}
\end{align}
\end{small}
\vspace{-4mm}

\noindent where we utilize the property of arithmetic sequence.
Utilizing $s^{(r,0)}\!=\!0$ and rearranging (\ref{iterate_s}), we have
\begin{small}
\equa{
(1\!-\!2H^2 L^2 (\eta^{(r)})^2)  \sum_{\tau=0}^{H-1} s^{(r,\tau)} 
\leq & (1+c_r)H^3(\eta^{(r)})^2 \mathbb{E} \left \| \nabla F(\mathbf{w}^{(r)}) \right \|^2 \\
&+ H^3(\eta^{(r)})^2 (\frac{1}{3}\sigma_g^2 + \delta_r^2). \nonumber
}
\end{small}

Since $\eta^{(r)} \leq \frac{1}{2HL}$ holds, we have $(1-2 H^2 L^2 (\eta^{(r)})^2) \geq \frac{1}{2}$. 
Scaling the above inequality gives rise to Lemma \ref{drift_lem}.

\bibliography{Reference_space_air_ground}
\bibliographystyle{IEEEtran}
\begin{IEEEbiography}[{\includegraphics[width=1in,height=1.25in,clip,keepaspectratio]{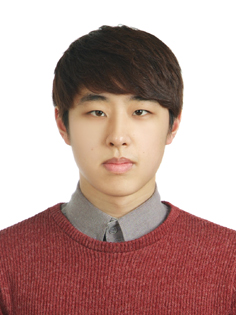}}]{Dong-Jun Han}  (Member, IEEE) is an Assistant Professor of School of Computer Science and Engineering at Yonsei University, South Korea. Previously, he was a  postdoctoral researcher  in the School of Electrical and Computer Engineering at  Purdue University. He received the B.S. degrees in mathematics and electrical engineering, and the M.S. and Ph.D. degrees  in electrical engineering from Korea Advanced Institute of Science and Technology (KAIST), South Korea, in 2016,  2018, and 2022, respectively.   His research interest is at the intersection of communications, networking, and machine learning, specifically in distributed/federated machine learning and network optimization. 
\end{IEEEbiography}

\begin{IEEEbiography}[{\includegraphics[width=1in,height=1.25in,clip,keepaspectratio]{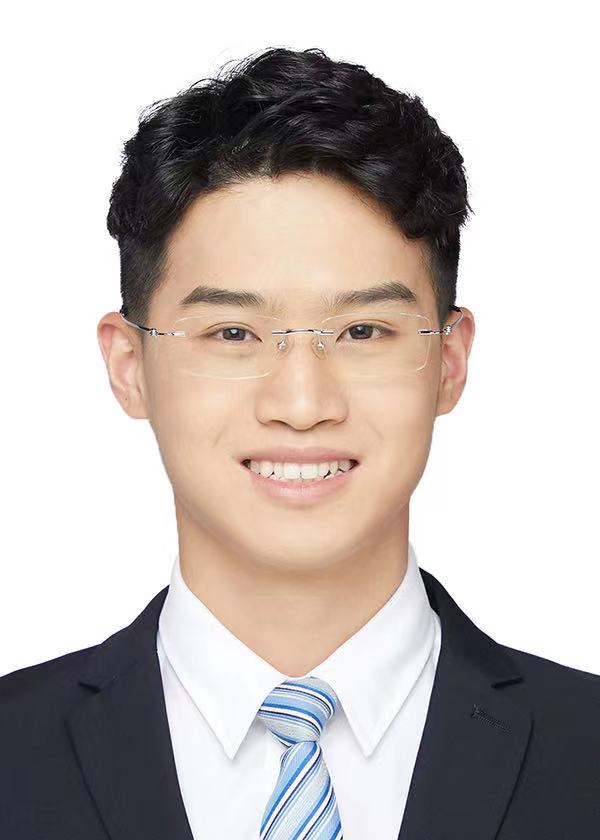}}]{Wenzhi Fang}  (Graduate Student Member, IEEE) 
 received his B.S. degree from Shanghai University in 2020 and completed his master’s degree at ShanghaiTech University in 2023. Currently, he is pursuing a PhD in electrical and computer engineering at Purdue University, West Lafayette, US. His research interests focus on optimization theory and its applications in machine learning, signal processing, and wireless networks.
\end{IEEEbiography}

\begin{IEEEbiography}
[{\includegraphics[width=1in,height=1.25in,clip,keepaspectratio]{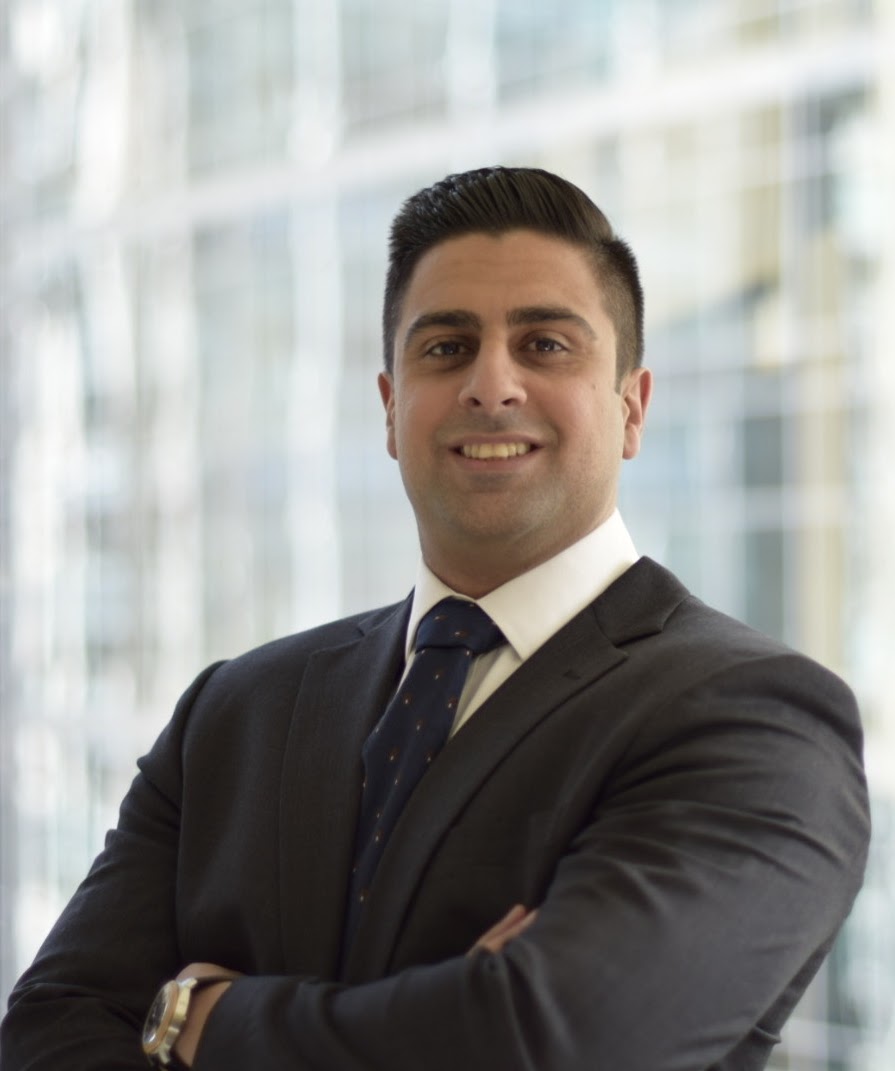}}]{Seyyedali Hosseinalipour} (Member, IEEE)  received the B.S. degree in electrical engineering
from Amirkabir University of Technology, Tehran, Iran, in 2015 with high honor
and top-rank recognition. He then received the M.S. and Ph.D. degrees in electrical engineering from North Carolina State University, NC, USA, in 2017 and
2020, respectively. He was the recipient of the ECE Doctoral Scholar of the Year
Award (2020) and ECE Distinguished Dissertation Award (2021) at North Carolina State University. He was a postdoctoral researcher at Purdue University,
IN, USA from 2020 to 2022. He is currently an assistant professor at the Department of Electrical Engineering at the University at Buffalo (SUNY). He has
served as the TPC Co-Chair of workshops and symposiums related to distributed machine learning and edge computing held in conjunction with IEEE INFOCOM, IEEE GLOBECOM, IEEE ICC, IEEE/CVF CVPR, IEEE MSN, and IEEE VTC. Also, he has served as the guest editor for IEEE Internet of Things
Magazine. His research interests include the analysis of modern wireless networks,
synergies between machine learning methods and fog computing systems, distributed/federated machine learning, and network optimization.
\end{IEEEbiography}

\begin{IEEEbiography}
[{\includegraphics[width=1.0in,height=1.28in,clip,keepaspectratio]{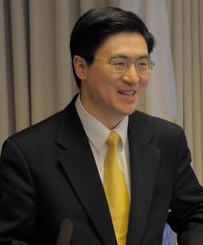}}]{Mung Chiang} (Fellow, IEEE) is the 13th President of Purdue University and the Roscoe H. George Distinguished Professor of Electrical and Computer Engineering. Previously he was the Arthur LeGrand Doty Professor of Electrical Engineering at Princeton University, where he founded the Princeton Edge Lab in 2009 and cofounded several startups spun out from there. The 2013 NSF Alan T. Waterman Awardee, he also received a Guggenheim Fellowship, the IEEE Kiyo Tomiyau Award, the IEEE INFOCOM Achievement Award, and is a member of the National Academy of Inventors and Royal Swedish Academy of Engineering Science. He served as the Science and Technology Adviser to the U.S. Secretary of State. 
\end{IEEEbiography}

\begin{IEEEbiography}
[{\includegraphics[width=1in,height=1.25in,clip,keepaspectratio]{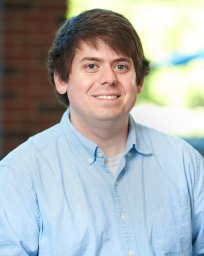}}]{Christopher G. Brinton} (Senior Member, IEEE) is the Elmore Rising Star Associate Professor of Electrical and Computer Engineering (ECE) at Purdue University. His research interest is at the intersection of networking, communications, and machine learning, specifically in fog/edge network intelligence, distributed machine learning, and AI/ML-inspired wireless network optimization. Dr. Brinton is a recipient of five of the US top early career awards, from the National Science Foundation (CAREER), Office of Naval Research (YIP), Defense Advanced Research Projects Agency (YFA and Director’s Fellowship), and Air Force Office of Scientific Research (YIP), the IEEE Communication Society William Bennett Prize Best Paper Award, the Intel Rising Star Faculty Award, the Qualcomm Faculty Award, and roughly \$17M in sponsored research projects as a PI or co-PI. He has also been awarded Purdue College of Engineering Faculty Excellence Awards in Early Career Research, Early Career Teaching, and Online Learning. He currently serves as an Associate Editor for IEEE/ACM Transactions on Networking, and previously was an Associate Editor for IEEE Transactions on Wireless Communications. Prior to joining Purdue, Dr. Brinton was the Associate Director of the EDGE Lab and a Lecturer of Electrical Engineering at Princeton University. He also co-founded Zoomi Inc., a big data startup company that holds US Patents in machine learning for education. His book The Power of Networks: 6 Principles That Connect our Lives and associated Massive Open Online Courses (MOOCs) reached over 400,000 students. Dr. Brinton received the PhD (with honors) and MS Degrees from Princeton in 2016 and 2013, respectively, both in Electrical Engineering. 
\end{IEEEbiography}
\end{document}